\providecommand{\U}[1]{\protect\rule{.1in}{.1in}}
\newtheorem{theorem}{Theorem}[section]
\newtheorem{corollary}[theorem]{Corollary}
\newtheorem{definition}[theorem]{Definition}
\newtheorem{example}[theorem]{Example}
\newtheorem{lemma}[theorem]{Lemma}
\newtheorem{proposition}[theorem]{Proposition}
\newtheorem{remark}[theorem]{Remark}
\numberwithin{equation}{section}
\newenvironment{proof}[1][Proof]{\noindent\textbf{#1.} }{\ \rule{0.5em}{0.5em}}
\begin{document}

\title{Essential selfadjointness of the graph-Laplacian}
\author{Palle E. T. Jorgensen\\Department of Mathematics\\University of Iowa\\Iowa City, IA 52242-1419 USA}
\date{}
\maketitle

\begin{abstract}
\footnote{{\small The author was partially supported by a grant from the US
National Science Foundation.}
\par
\quad{\small Math Subject Classification (2000): 47C10, 47L60, 47S50, 60J20,
81Q15, 81T75, 82B44, 90B15}
\par
\quad{\small Keywords: Graphs, conductance, network of resistors, resistance
metric, discrete dynamical systems, Hermitian operator, unbounded operators,
Hilbert space, deficiency indices, infinite Heisenberg matrices.}}We study the
operator theory associated with such infinite graphs $G$ as occur in
electrical networks, in fractals, in statistical mechanics, and even in
internet search engines. Our emphasis is on the determination of spectral data
for a natural Laplace operator associated with the graph in question. This
operator $\Delta$ will depend not only on $G$, but also on a prescribed
positive real valued function $c$ defined on the edges in $G$. In electrical
network models, this function $c$ will determine a conductance number for each
edge. We show that the corresponding Laplace operator $\Delta$ is
automatically essential selfadjoint. By this we mean that $\Delta$ is defined
on the dense subspace $\mathcal{D}$ (of all the real valued functions on the
set of vertices $G^{0}$ with finite support) in the Hilbert space $l^{2}%
(G^{0})$. The conclusion is that the closure of the operator $\Delta$ is
selfadjoint in $l^{2}(G^{0})$, and so in particular that it has a unique
spectral resolution, determined by a projection valued measure on the Borel
subsets of the infinite half-line. We prove that generically our graph Laplace
operator $\Delta=\Delta_{c}$ will have continuous spectrum. For a given
infinite graph $G$ with conductance function $c$, we set up a system of finite
graphs with periodic boundary conditions such the finite spectra, for an
ascending family of finite graphs, will have the Laplace operator for $G$ as
its limit. \newline\vspace{4in}

\end{abstract}
\tableofcontents

\newpage

\newpage

\section{Introduction\label{Intro}}

The infinite graphs we consider live on a fixed countable infinite set, say
$L$. Starting with such a set $L$ (subject to certain axioms, listed below),
we get a notion of edges as follows: Select distinguished pairs of points in
$L$, say $x$ and $y$, and connect them by a \textquotedblleft
line,\textquotedblright\ called edge. In physics, when a vertex $x$ is given,
the set of vertices connected to $x$ with one \textquotedblleft
edge\textquotedblright\ is called a set of neighbors, or nearest neighbors.
Initially we do not assign direction to the edges. So, as it stands, an edge
$e$ is defined as a special subset $\{x,y\}$ for selected points $x,y$ in $L$.
Think \textquotedblleft nearest\textquotedblright\ neighbors!

A direction is only assigned when we also introduce a function $I$ on edges
$e$, and then this function $I$ is assumed to satisfy $I(x,y)=-I(y,x)$. In
electrical networks, such a function $I$ may represent a current induced by a
potential which is introduced on a graph with fixed resistors. So only if a
current function $I$ is introduced can we define a direction to edges, as
follows: We specify source $s(e)=x,$ and terminal vertex $t(e)=y$ if
$I(x,y)>0$. meaning that the current flows from $x$ to $y$.

In this paper we study the operator theory of infinite graphs $G$, with
special emphasis on a natural Laplace operator associated with the graph in
question. This operator will depend not only on $G$, but also on a positive
real valued function $c$ defined on the edges in $G$. In electrical network
models, the function $c$ will determine a conductance number for each edge
$e$. If $e=(xy)$ connects vertices $x$ and $y$ in $G$, the number $c(e)$ is
the reciprocal of the resistance between $x$ and $y$. Hence prescribing a
conductance leads to classes of admissible flows in $G$. When they are
determined from Ohm's law, and the Kirchhoff laws, it leads to a measure of
energy, and to an associated graph Laplacian. We identify the Hilbert space
$\mathcal{H}(G)$ which offers a useful spectral theory, and our main result is
a theorem to the effect that the graph Laplacian is essentially selfadjoint,
i.e., that its operator closure is a selfadjoint operator in $\mathcal{H}(G)$.

Let $G=(G^{0},G^{1})$ be an infinite graph, $G^{0}$ for vertices, and $G^{1}$
for edges. Every $x$ in $G^{0}$ is connected to a set $\operatorname*{nbh}(x)$
of other vertices by a finite number of edges, but points in
$\operatorname*{nbh}(x)$ are different from $x$; i.e., we assume that $x$
itself is excluded from $\operatorname*{nbh}(x)$; i.e., no $x$ in $G^{0}$ can
be connected to itself with a single edge. Let $c$ be a conductance function
defined on $G^{1}$.

Initially, the graph $G$ will not be directed, but when a conductance is
fixed, and we study induced current flows, then these current flows will give
a direction to the edges in $G$. But the edges in $G$ itself do not come with
an intrinsic direction.

We show that the Laplace operator $\Delta=\Delta_{c}$ is automatically
essentially selfadjoint. By this we mean that $\Delta$ is defined on the dense
subspace $\mathcal{D}$ (of all the real valued functions on $G^{0}$ with
finite support) in the Hilbert space $\mathcal{H}=\mathcal{H}(G)$%
:$=l^{2}(G^{0})$. The explicit formula for the graph Laplacian $\Delta
=\Delta_{\left(  G,c\right)  }$ is given in (\ref{Eq3.6}) in section
\ref{MainTheo} below which also discusses the appropriate Hilbert spaces. The
conclusion is that the closure of the operator $\Delta$ is selfadjoint in
$\mathcal{H}$, and so in particular that it has a unique spectral resolution,
determined by a projection valued measure on the Borel subsets of the infinite
half-line $\mathbb{R}_{+}$; i.e., the spectral measure takes values in the
projections in the Hilbert space: $=l^{2}(G^{0})$. We work out the measure.

In contrast, we note that the corresponding Laplace operator in the continuous
case is not essentially selfadjoint. This can be illustrated for example with
$\Delta=-(d/dx)^{2}$ on the domain $\mathcal{D}$ of consisting of all $C^{2}%
$-functions on the infinite half-line $\mathbb{R}_{+}$ which vanish with their
derivatives at the end points. Then the Hilbert space is $L^{2}(\mathbb{R}%
_{+})$.

So our graph theorem is an instance where the analogy between the continuous
case and the discrete case breaks down.

A second intrinsic issue for the operator theory of infinite graphs $G$, is
that generically our graph Laplace operator $\Delta= \Delta_{c}$ will have
continuous spectrum. We prove this by identifying a covariance system which
implies that the spectrum of the corresponding Laplace operator will in fact
be absolute continuous with respect to Lebesgue measure on the half-line.

In a third theorem, for a given infinite graph $G$ with conductance function
$c$, we set up a system of finite graphs with periodic boundary conditions
such the finite spectra, for an ascending family of finite graphs, will have
the Laplace operator for $G$ as its limit.

\section{Assumptions\label{Assumptions}}

In order to do computations and potential theory on infinite graphs $G$, it
has been useful to generalize the continuous Laplacian $\Delta$ from
Riemannian geometry \cite{AnCo04} to a discrete setting \cite{BHS05},
\cite{CuSt07}, \cite{Kig03}, \cite{HKK02}. However the infinities for graphs
suggest an analogy to non-compact Riemannian manifolds, or manifolds with boundary.

Once the graph Laplacian is made precise as a selfadjoint operator it makes
sense to ask for exact formulas for the spectrum of $\Delta$. Our Laplace
operator $\Delta=\Delta_{c}$ is associated with a fixed system $(G,c)$ where
vertices and edges are specified as usual, $G=(G^{\left(  0\right)
},G^{\left(  1\right)  })$; and with a fixed conductance function
$c:G^{\left(  1\right)  }\rightarrow\mathbb{R}_{+}$. See (\ref{Eq3.6}) below
for a formula.

And as usual our Laplace operator, $\Delta=\Delta_{c}$ is densely defined in
the Hilbert space $\ell^{2}(G^{\left(  0\right)  })$ of all square-summable
sequences on the vertices of $G$; and if $G$ is infinite, $\Delta_{c}$ is not
defined everywhere in $\ell^{2}$, but rather it has a dense domain
$\mathcal{D}$ in $\ell^{2}$. We show in the next section that $\Delta_{c}$ is
essentially selfadjoint for all choices of conductance function $c$.

By a graph $G$ we mean a set $G^{\left(  0\right)  }$ of vertices, and a set
$G^{\left(  1\right)  }$ of edges. Edges $e$ consist of pairs $s,y\in
G^{\left(  0\right)  }$. We write $e=\left(  xy\right)  $; and if $\left(
xy\right)  \in G^{\left(  1\right)  }$ we say that $x\sim y$.

Assumptions

(i) $x\not \sim x$ (i.e.; $\left(  xx\right)  \not \in G^{\left(  1\right)  }$).

(ii) For every $x\in G^{\left(  0\right)  }$, $\left\{  y\in G^{\left(
0\right)  }|y\sim x\right\}  $ is finite.

(iii) Points $x,y\in G^{\left(  0\right)  }$ for which there is a finite path
$x_{0},x_{1},x_{2},\ldots,x_{n}$ with $x_{0}=x,~x_{n}=y$, and $\left(
x_{i}x_{i+1}\right)  \in G^{\left(  1\right)  },~i=0,\ldots n-1$, are said to
be \textit{connected.}

(iv) We will assume that all connected components in $G^{\left(  0\right)  }$
are \textit{infinite}; or else that $G^{\left(  0\right)  }$ is already connected.

\section{The Main Theorem\label{MainTheo}}

\subsection{\textbf{The Graph Laplacian\label{GraphLapl}}}

In this section we specify a fixed graph $G$ (infinite in the non-trivial
case) and an associated conductance function $c$. The associated graph
Laplacian $\Delta_{c}$ will typically be an unbounded Hermitian operator with
dense domain.

Our assumptions will be as above, and when the Hilbert spaces have been
selected, our main theorem states that the graph Laplacian $\Delta_{c}$ is
essentially selfadjoint; i.e., the operator closure, also denoted $\Delta_{c}%
$, is a selfadjoint operator. In sections 5--8 we obtain consequences and applications.

The interpretation of this results in terms of boundary conditions will be
given in section \ref{BoundaryCond} below. It means that $\Delta_{c}$ has a
well defined and unique (up to equivalence) spectral resolution. Then the next
objective is to find \textit{the spectrum} of the operator $\Delta_{c}$. And a
method for finding \textit{spectrum} is based on \textquotedblleft
covariance.\textquotedblright\ Covariance is used on other spectral problems
in mathematical physics, and it offers useful ways of getting global formulas
for spectrum. As we will see, infinite models typically have graph Laplacians
with continuous spectrum.

In the finite case, of course the spectrum is the set of roots in a
characteristic polynomial, but unless there is some group action, it is
difficult to solve for roots by \textquotedblleft bare
hands;\textquotedblright\ and even if we do, only the occurrence of groups
offers insight.

A second approach to the finding spectra of graph Laplacians is
\textquotedblleft renormalization:\textquotedblright\ Renormalization of
hierarchical systems of electrical networks comes into play each time one
passes to a new scale (upwards or downwards). This requires additional
structure, such as is found in iterated function systems (IFSs), (see
\cite{BHS05}, \cite{DuJo07a}, \cite{JoPe98}, \cite{Kig03}), i.e., specified
finite systems of affine transformations in Euclidean space that are then
iterated recursively.

When the mappings are so iterated on a given graph, the iterations may then be
interpreted as scales in an infinite graph: (post-)composition of similarity
mappings takes us further down the branches of a tree like structure in path
space. We get martingale constructions as instances of renormalization.

\begin{theorem}
\label{Theo3.1}The graph Laplacian $\Delta=\Delta_{\left(  G,c\right)  }$ is
essentially selfadjoint.
\end{theorem}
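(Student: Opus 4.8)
The plan is to prove essential self-adjointness through the basic deficiency criterion, reducing everything to a discrete maximum principle that exploits the counting measure on $G^{(0)}$. First I would record two structural facts about $\Delta$ on the core $\mathcal{D}$ of finitely supported functions. A finite summation by parts (legitimate because each vertex has only finitely many neighbors, Assumption (ii), and $f,g$ have finite support) gives the Green-type identity $\langle\Delta f,g\rangle=\tfrac12\sum_{x\in G^{(0)}}\sum_{y\sim x}c(xy)\,(f(x)-f(y))\,(\overline{g(x)}-\overline{g(y)})$, which is manifestly symmetric in $(f,g)$ and yields $\langle\Delta f,f\rangle\ge 0$. Thus $\Delta$ is symmetric and nonnegative, and its closure $\overline{\Delta}$ is a closed nonnegative operator. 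I would also note the locality fact $\Delta(\mathcal{D})\subseteq\mathcal{D}$.

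Next I would reduce the theorem to a single deficiency computation. For a symmetric operator bounded below by $0$, the basic criterion states that $\Delta$ is essentially self-adjoint if and only if $\ker(\Delta^{*}+1)=\{0\}$ (equivalently, $\operatorname{ran}(\overline{\Delta}+1)$ is dense); here one is free to test at the real point $-1$, strictly below the lower bound, precisely because $\Delta\ge 0$. To use this I must identify $\Delta^{*}$: pairing $\langle\Delta f,g\rangle=\langle f,h\rangle$ against $f=\delta_{x}$ shows that $\Delta^{*}$ is the maximal operator attached to the formal difference expression, namely $\operatorname{dom}(\Delta^{*})=\{f\in\ell^{2}(G^{(0)}):\tau f\in\ell^{2}\}$ and $\Delta^{*}f=\tau f$, where $(\tau f)(x)=\sum_{y\sim x}c(xy)(f(x)-f(y))$ is well defined pointwise by local finiteness.

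The heart of the argument is then the analysis of the deficiency equation. A putative $0\ne f\in\ker(\Delta^{*}+1)$ satisfies $\tau f=-f$, that is $(c(x)+1)f(x)=\sum_{y\sim x}c(xy)f(y)$ at every vertex, where $c(x)=\sum_{y\sim x}c(xy)$; taking real and imaginary parts (the coefficients are real) I may assume $f$ is real. Since $f\in\ell^{2}(G^{(0)})$ with the counting measure, $f(x)\to 0$ at infinity, so if $\sup_{x}f(x)=M>0$ it is attained at some $x_{0}$. There every $f(y)\le f(x_{0})$, whence $\sum_{y\sim x_{0}}c(x_{0}y)f(y)\le c(x_{0})f(x_{0})<(c(x_{0})+1)f(x_{0})$, contradicting the equation at $x_{0}$. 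Hence $f\le 0$ everywhere, and the symmetric argument at a negative minimum forces $f\ge 0$; therefore $f\equiv 0$, so $\ker(\Delta^{*}+1)=\{0\}$ and $\Delta$ is essentially self-adjoint.

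The step that deserves the most care — and where I expect the real obstacle to lie conceptually — is the attainment of the extremum, since it is the single place the discrete theory diverges from the continuous one mentioned in the Introduction. It depends entirely on the implication that an $\ell^{2}$ function for the counting measure must decay pointwise, forcing an interior maximum; for $-(d/dx)^{2}$ on $L^{2}(\mathbb{R}_{+})$ no such decay holds and the deficiency indices are nonzero. For this reason I would be careful to apply the criterion at a real point below the spectrum, so that the positivity of $\Delta$ can power a maximum principle rather than a delicate complex computation at $\pm i$; and I would note in passing that Nelson's analytic-vector theorem does not obviously apply, since unbounded conductances and degrees prevent the $\delta_{x}$ from being analytic vectors, which is exactly what makes the maximum-principle route the natural one.
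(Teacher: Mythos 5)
Your proof is correct, and its skeleton coincides with the paper's up to the last step: symmetry and nonnegativity by finite summation by parts is the paper's Lemma \ref{Lem3.2}, the identification of $\Delta^{*}$ with the formal difference expression $\tau$ acting on $\{f\in\ell^{2}:\tau f\in\ell^{2}\}$ is Lemma \ref{Lem3.4}, and the semibounded criterion (nonnegative symmetric operator is essentially selfadjoint iff $\ker(\Delta^{*}+1)=\{0\}$) is Lemma \ref{Lem3.6}. Where you genuinely diverge is the computation of $\ker(\Delta^{*}+1)$. The paper (Lemma \ref{Lem3.5}) argues through the energy form: a solution of $\Delta^{*}v=-v$ satisfies $\mathcal{E}_{c}(v)\leq 0$, hence $\mathcal{E}_{c}(v)=0$, hence $v$ is constant on each connected component, and then the standing assumption (iv) of section \ref{Assumptions} (all components infinite) forces $v=0$ in $\ell^{2}$. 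You instead run a discrete maximum principle: $\ell^{2}$ decay forces a positive supremum to be attained, and at an attained maximum $x_{0}$ the eigenvalue equation $(c(x_{0})+1)f(x_{0})=\sum_{y\sim x_{0}}c(x_{0}y)f(y)$ fails because the right-hand side is at most $c(x_{0})f(x_{0})$. Your route buys two things. First, it nowhere uses connectedness or assumption (iv), so it proves the theorem verbatim even when some components are finite. Second, it is purely pointwise, and therefore sidesteps a delicate point in Lemma \ref{Lem3.5}: there the identity $2\langle u,\Delta u\rangle=\mathcal{E}_{c}(u)$ of (\ref{Eq3.11}), established only for finitely supported $u$, is applied to a general element of $\operatorname*{dom}(\Delta^{*})$, a rearrangement of double sums that the paper does not justify and that is not automatic when $c$ is unbounded. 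What the paper's route buys in exchange is that the object it leans on, the energy form $\mathcal{E}_{c}$, is exactly the structure developed in the rest of the paper (the energy Hilbert space, voltage potentials, the resistance metric), so its argument doubles as setup for sections \ref{EnergyForm} and beyond. Two cosmetic points in your write-up: with the paper's convention (\ref{Eq3.3}) that $\langle\cdot,\cdot\rangle$ is conjugate-linear in the first entry, the bars in your Green identity should sit on the $f$-differences rather than the $g$-differences; and your pairing against $\delta_{x}$ establishes the inclusion of $\Delta^{*}$ in the maximal operator, which is the only direction your kernel argument needs, so no harm is done.
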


\begin{proof}
To get started we recall the setting. Given:

\begin{itemize}
\item[ ] $G$: a fixed infinite graph. (It may be finite, but in this case the
conclusion follows from finite-dimensional linear algebra.)

\item[ ] $G=\left(  G^{\left(  0\right)  },G^{\left(  1\right)  }\right)  $.

\item[ ] $G^{\left(  0\right)  }$: the set of vertices in $G$.

\item[ ] $G^{\left(  1\right)  }$: the set of edges in $G$.
\end{itemize}

If $x,y\in G^{\left(  0\right)  }$ is a given pair, we say that $x\sim y$ when
$e=\left(  xy\right)  \in G^{\left(  1\right)  }$.

For $x\in G^{\left(  0\right)  }$, set
\begin{equation}
\operatorname*{nbh}\left(  x\right)  =\left\{  y\in G^{\left(  0\right)
}|y\sim x\right\}  \text{.} \label{Eq3.1}%
\end{equation}

Our standing assumptions are as follows:\medskip

(a) $\operatorname*{nbh}\left(  x\right)  $ is finite.

(b) $x\not \in \operatorname*{nbh}\left(  x\right)  $.

\begin{itemize}
\item[ ] $\mathcal{H}=\ell^{2}\left(  G^{\left(  0\right)  }\right)  =$ all
functions $v:G^{\left(  0\right)  }\rightarrow\mathbb{C}$ such that%
\begin{equation}
\sum_{x\in G^{\left(  0\right)  }}\left\vert v\left(  x\right)  \right\vert
^{2}<\infty. \label{Eq3.2}%
\end{equation}

\end{itemize}

Set
\begin{equation}
\langle u,v\rangle\text{:}=\sum_{x\in G^{\left(  0\right)  }}\overline
{u\left(  x\right)  }v\left(  x\right)  ,~\forall u,v\in\ell^{2}\left(
G^{\left(  0\right)  }\right)  \text{.} \label{Eq3.3}%
\end{equation}
By $\mathcal{H}$ we refer to the completed Hilbert space $\ell^{2}\left(
G^{\left(  0\right)  }\right)  $.

\begin{itemize}
\item[ ] $\mathcal{D}$:$=$ the set of all finitely supported $v\in\mathcal{H}%
$; i.e., $v$ is in $\mathcal{D}$ iff $\exists F\subset G^{\left(  0\right)  }%
$, $F=F_{v}$ some \textit{finite} subset such that
\[
v\left(  x\right)  =0,~\forall x\in G^{\left(  0\right)  }\backslash F\text{.}%
\]

\item[ ] $e_{x}$:$=\delta_{x}=$ Dirac mass, defined by
\begin{equation}
e_{x}\left(  y\right)  =\left\{
\begin{array}
[c]{l}%
1\text{ if }y=x\\
0\text{ if }y\not =x\text{.}%
\end{array}
\right.  \label{Eq3.4}%
\end{equation}

\item[ ] $c:G^{\left(  1\right)  }\rightarrow\mathbb{R}_{+}$ is a fixed
function taking positive values. In network models, the function $c$ is
\textit{conductance}; i.e., the reciprocal of resistance.
\end{itemize}

Assumption (symmetry): $c\left(  xy\right)  =c\left(  yx\right)
,~\forall\left(  xy\right)  \in G^{\left(  1\right)  }$.
\begin{equation}
\Delta=\Delta_{\left(  G,c\right)  } \label{Eq3.5}%
\end{equation}
is the Laplacian, and is defined on $\mathcal{D}$ as follows:
\begin{equation}
\left(  \Delta v\right)  \left(  x\right)  \text{:}=\sum_{y\sim x}c\left(
xy\right)  \left(  v\left(  x\right)  -v\left(  y\right)  \right)  ,~\forall
v\in\mathcal{D},~\forall x\in G^{\left(  0\right)  }\text{.} \label{Eq3.6}%
\end{equation}
\hfill\ 
\end{proof}

\subsection{\textbf{Lemmas\label{Lemmas}}}

We will need some lemmas:

\begin{lemma}
\label{Lem3.2}The operator $\Delta$ is Hermitian symmetric on $\mathcal{D}$,
and it is positive semidefinite. Specifically, the following two properties
hold$:$%
\begin{equation}
\langle\Delta u,v\rangle_{\ell^{2}}=\langle u,\Delta v\rangle_{\ell^{2}%
},~\forall u,v\in\mathcal{D}\text{;} \label{Eq3.7}%
\end{equation}
and
\begin{equation}
\langle u,\Delta u\rangle_{\ell^{2}}\geq0,~\forall u\in\mathcal{D}\text{.}
\label{Eq3.8}%
\end{equation}

\end{lemma}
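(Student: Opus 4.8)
The plan is to reduce both claims to one closed-form expression for the bilinear pairing, obtained by a discrete summation by parts. First I would write the pairing directly from the definition (\ref{Eq3.6}),
\[
\langle u, \Delta v\rangle = \sum_{x \in G^{(0)}} \overline{u(x)} \sum_{y \sim x} c(xy)\bigl(v(x) - v(y)\bigr),
\]
and note that because $u,v \in \mathcal{D}$ have finite support and each vertex has only finitely many neighbors (assumption (a)), the double sum has only finitely many nonzero terms, so I may freely interchange the order of summation and relabel indices. This finiteness is the one technical point requiring verification, and it is exactly where the standing assumptions enter.

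The key step is to symmetrize the resulting sum over ordered pairs $(x,y)$ with $x \sim y$. Swapping the roles of $x$ and $y$, and invoking both the symmetry of the adjacency relation and the hypothesis $c(xy)=c(yx)$, I would average the two equal forms to obtain the Dirichlet-form identity
\[
\langle u, \Delta v\rangle = \frac{1}{2}\sum_{\substack{x,y\in G^{(0)}\\ x\sim y}} c(xy)\,\bigl(\overline{u(x)}-\overline{u(y)}\bigr)\bigl(v(x)-v(y)\bigr),
\]
the factor $\tfrac{1}{2}$ compensating for each edge being counted twice in the ordered sum. Keeping this bookkeeping straight is the main thing to watch.

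Both assertions of the lemma then fall out of this single identity. For (\ref{Eq3.7}), running the identical computation on $\langle \Delta u, v\rangle$—using that $c$ is real-valued, so conjugation passes through it—yields the same symmetric expression, whence $\langle \Delta u, v\rangle = \langle u, \Delta v\rangle$. For (\ref{Eq3.8}), setting $u=v$ collapses the identity to
\[
\langle u, \Delta u\rangle = \frac{1}{2}\sum_{\substack{x,y\in G^{(0)}\\ x\sim y}} c(xy)\,\bigl|u(x)-u(y)\bigr|^{2} \ge 0,
\]
which is nonnegative term by term since $c$ takes strictly positive values. I expect no serious obstacle here: the only care required is the justification of the rearrangement, supplied by finite support, and the correct normalization of the edge sum.
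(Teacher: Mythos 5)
Your proposal is correct and follows essentially the same route as the paper: a discrete summation by parts, justified by the finite support of $u,v$ and the finiteness of $\operatorname*{nbh}(x)$, followed by the index swap $x\leftrightarrow y$ using the symmetry $c(xy)=c(yx)$. The only difference is organizational: you derive both (\ref{Eq3.7}) and (\ref{Eq3.8}) from the single polarized identity $2\langle u,\Delta v\rangle=\mathcal{E}_{c}(u,v)$, whereas the paper proves (\ref{Eq3.7}) by direct relabeling and separately establishes the quadratic case $2\langle u,\Delta u\rangle=\mathcal{E}_{c}(u)$ in (\ref{Eq3.11}); your bilinear identity is precisely what the paper later records as (\ref{Eq5.13}).
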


\begin{proof}
Both assertions are computations:

In (\ref{Eq3.7}),%
\begin{align*}
\langle\Delta u,v\rangle_{\ell^{2}}  &  ={{\underset{%
\genfrac{}{}{0pt}{}{x,y\in G^{\left(  0\right)  }}{x\sim y}%
}{\sum\sum}c\left(  x,y\right)  \left(  \overline{u\left(  x\right)
}-\overline{u\left(  y\right)  }\right)  v\left(  x\right)  }}\\
&  =\underset{x\sim y}{\sum\sum}c\left(  xy\right)  \overline{u\left(
x\right)  }v\left(  x\right)  -\underset{x\sim y}{\sum\sum}\overline{u\left(
y\right)  }c\left(  xy\right)  v\left(  x\right) \\
&  =\underset{x\sim y}{\sum\sum}\overline{u\left(  x\right)  }c\left(
xy\right)  v\left(  x\right)  -\underset{x\sim y}{\sum\sum}\overline{u\left(
x\right)  }c\left(  xy\right)  v\left(  y\right) \\
&  =\sum_{x\in G^{\left(  0\right)  }}\overline{u\left(  x\right)  }%
\sum_{y\sim x}c\left(  xy\right)  \left(  v\left(  x\right)  -v\left(
y\right)  \right) \\
&  =\langle u,\Delta v\rangle_{\ell^{2}}\text{.}%
\end{align*}

Note that the summation may be exchanged since, for each $x\in G^{\left(
0\right)  }$, the set of neighbors $\operatorname*{nbh}\left(  x\right)  $ is finite.

In (\ref{Eq3.8}),%
\begin{align*}
&  \langle u,\Delta u\rangle_{\ell^{2}}\\
&  =\underset{x\sim y}{\sum\sum}\overline{u\left(  x\right)  }c\left(
xy\right)  \left(  u\left(  x\right)  -u\left(  y\right)  \right) \\
&  =\sum_{x\in G^{\left(  0\right)  }}B_{c}\left(  x\right)  \left\vert
u\left(  x\right)  \right\vert ^{2}-\underset{x\sim y}{\sum\sum}%
\overline{u\left(  x\right)  }c\left(  xy\right)  u\left(  y\right)  \text{,}%
\end{align*}
where
\begin{equation}
B_{c}\left(  x\right)  =\sum_{y\sim x}c\left(  xy\right)  ,~x\in G^{\left(
0\right)  }\text{.} \label{Eq3.9}%
\end{equation}

The second term in the computation may be estimated with the use of
Cauchy-Schwarz as follows: Setting
\begin{equation}
\mathcal{E}_{c}\left(  u\right)  \text{:}=\underset{%
\genfrac{}{}{0pt}{}{\text{all~}x,y}{\text{s.t. }x\sim y}%
}{\sum\sum}c\left(  xy\right)  \left\vert u\left(  x\right)  -u\left(
y\right)  \right\vert ^{2}\text{;} \label{Eq3.10}%
\end{equation}
we show that
\begin{equation}
2\langle u,\Delta u\rangle=\mathcal{E}_{c}\left(  u\right)  \geq0\text{.}
\label{Eq3.11}%
\end{equation}

Indeed using the conditions on $c$:$~G^{\left(  1\right)  }\rightarrow
\mathbb{R}+$

\begin{itemize}
\item $c\left(  xy\right)  =c\left(  yx\right)  ,$ $\forall\left(  xy\right)
\in G^{\left(  1\right)  }$;

\item $c\left(  xx\right)  =0,$ $\forall x\in G^{\left(  0\right)  }$;

\item $c\left(  xy\right)  >0,$ $\forall\left(  xy\right)  \in G^{\left(
1\right)  }$,
\end{itemize}

we get
\begin{align*}
2\langle u,\Delta u\rangle &  =2\sum_{x\in G^{\left(  0\right)  }}B_{c}\left(
x\right)  \left\vert u\left(  x\right)  \right\vert ^{2}-2\underset{x\sim
y}{\sum\sum}\overline{u\left(  x\right)  }c\left(  xy\right)  u\left(
y\right) \\
&  =2\sum_{x\in G^{\left(  0\right)  }}B_{c}\left(  x\right)  \left\vert
u\left(  x\right)  \right\vert ^{2}-2\operatorname{Re}\underset{x\sim y}%
{\sum\sum}\overline{u\left(  x\right)  }c\left(  xy\right)  u\left(  y\right)
\\
&  =\underset{x\sim y}{\sum\sum}c\left(  x,y\right)  \left(  \left\vert
u\left(  x\right)  \right\vert ^{2}-\overline{u\left(  x\right)  }u\left(
y\right)  -\overline{u\left(  y\right)  }u\left(  x\right)  +\left\vert
u\left(  y\right)  \right\vert ^{2}\right) \\
&  =\underset{xy}{\sum\sum}c\left(  xy\right)  \left\vert u\left(  x\right)
-u\left(  y\right)  \right\vert ^{2}=\mathcal{E}_{c}\left(  u\right)  \text{.}%
\end{align*}

\end{proof}

For the general theory of unbounded Hermitian operators and their extensions,
we refer the reader to \cite{Jor78}, \cite{Nel69}, \cite{Sto51}.

\begin{definition}
\label{Def3.3} If $\Delta$ is an operator with dense domain $\mathcal{D}$ in a
Hilbert space $\mathcal{H}$, we define its adjoint operator $\Delta^{\ast}$
by$:$

A vector $v$ is in the domain $\operatorname*{dom}\left(  \Delta^{\ast
}\right)  $ iff there is a constant $K$ such that
\begin{equation}
\left\vert \langle v,\Delta u\rangle\right\vert \leq K\left\Vert u\right\Vert
,~\forall u\in\mathcal{D}\text{.} \label{Eq3.12}%
\end{equation}
When \emph{(}\ref{Eq3.12}\emph{)}\thinspace holds, then by Riesz, there is a
unique $w:=\Delta^{\ast}v$ such that
\begin{equation}
\langle w,u\rangle=\langle v,\Delta u\rangle,~\forall u\in\mathcal{D}\text{.}
\label{Eq3.13}%
\end{equation}
Note that since $\mathcal{D}$ is dense in $\mathcal{H}$, $w\left(
=:\Delta^{\ast}v\right)  $ is uniquely determined by \emph{(}\ref{Eq3.12}%
\emph{)}.
\end{definition}

\begin{lemma}
\label{Lem3.4}In the case of $\Delta=\Delta_{\left(  G,c\right)  }$ and
$\mathcal{H}=\ell^{2}\left(  G^{\left(  0\right)  }\right)  $, the vector
$\Delta^{\ast}v$ for $v\in\operatorname*{dom}\left(  \Delta^{\ast}\right)  $
is given by the expression
\begin{equation}
\left(  \Delta^{\ast}v\right)  \left(  x\right)  =\sum_{y\sim x}c\left(
xy\right)  \left(  v\left(  x\right)  -v\left(  y\right)  \right)  \text{.}
\label{Eq3.14}%
\end{equation}

\end{lemma}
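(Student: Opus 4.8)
The plan is to exploit the fact that $\Delta^{\ast}v$ is, by Definition \ref{Def3.3}, the unique vector $w\in\mathcal{H}$ satisfying $\langle w,u\rangle=\langle v,\Delta u\rangle$ for all $u\in\mathcal{D}$, together with the observation that $\mathcal{D}$ is the linear span of the Dirac masses $\{e_{x}\}_{x\in G^{(0)}}$. Since the defining identity is linear in $u$, it suffices to evaluate it on $u=e_{x}$ for each fixed $x\in G^{(0)}$. With the convention of (\ref{Eq3.3}) we have $\langle w,e_{x}\rangle=\overline{w(x)}$, so determining $\langle v,\Delta e_{x}\rangle$ for every $x$ pins down the vector $w=\Delta^{\ast}v$ coordinatewise.

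First I would compute $\Delta e_{x}$ directly from (\ref{Eq3.6}). Using that $y\sim x$ forces $y\neq x$ (Assumption (b)) and that the relation $\sim$ is symmetric, one finds that $\Delta e_{x}$ is finitely supported: it takes the value $B_{c}(x)=\sum_{y\sim x}c(xy)$ at the vertex $x$, the value $-c(xy)$ at each neighbor $y\in\operatorname{nbh}(x)$, and $0$ at all remaining vertices. Because $\operatorname{nbh}(x)$ is finite (Assumption (a)), $\Delta e_{x}\in\mathcal{D}$ and the pairing $\langle v,\Delta e_{x}\rangle$ is a finite sum for every $v\in\mathcal{H}$. Expanding it and invoking $B_{c}(x)=\sum_{y\sim x}c(xy)$ gives $\langle v,\Delta e_{x}\rangle=\overline{v(x)}\,B_{c}(x)-\sum_{y\sim x}c(xy)\,\overline{v(y)}=\overline{\sum_{y\sim x}c(xy)\bigl(v(x)-v(y)\bigr)}$. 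Comparing with $\langle w,e_{x}\rangle=\overline{w(x)}$ and taking complex conjugates yields $(\Delta^{\ast}v)(x)=\sum_{y\sim x}c(xy)(v(x)-v(y))$, which is exactly (\ref{Eq3.14}).

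The computation itself is routine; the steps requiring care are bookkeeping rather than analysis. The main points to watch are the conjugate-linear placement in the inner product (\ref{Eq3.3}), which is why the scalar equality must be conjugated at the last step, and the symmetry of both $c$ and of the relation $\sim$, which is what guarantees that the off-diagonal contribution from a neighbor $y$ of $x$ appears with the weight $c(xy)$ in the correct position. It is worth emphasizing that no summability hypothesis on $v$ is needed to carry out this identification: the finiteness of $\operatorname{nbh}(x)$ makes each $\Delta e_{x}$ a finite-support function, so $\langle v,\Delta e_{x}\rangle$ is always a finite sum, and the only role of the assumption $v\in\operatorname{dom}(\Delta^{\ast})$ is to guarantee, via (\ref{Eq3.12}) and (\ref{Eq3.13}), that these coordinatewise values assemble into a genuine $\ell^{2}$ vector $w=\Delta^{\ast}v$.
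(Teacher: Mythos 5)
Your proof is correct, but it runs in the opposite direction from the paper's and uses different test vectors. The paper starts from the candidate $w(x):=\sum_{y\sim x}c(xy)\left(v(x)-v(y)\right)$ and verifies the adjoint identity $\langle w,u\rangle_{\ell^{2}}=\langle v,\Delta u\rangle_{\ell^{2}}$ for an \emph{arbitrary} finitely supported $u$, by expanding and exchanging the order of summation (invoking the symmetry computation of Lemma \ref{Lem3.2}); uniqueness of the adjoint then gives $w=\Delta^{\ast}v$. You instead take $w=\Delta^{\ast}v$, which exists by hypothesis, and recover its coordinates by testing the defining relation (\ref{Eq3.13}) only against the Dirac masses $e_{x}$: since $\Delta e_{x}$ is finitely supported (value $B_{c}(x)$ at $x$, value $-c(xy)$ at each neighbor $y$, zero elsewhere --- and you correctly flag that this uses the symmetry of $c$ and of $\sim$), every pairing $\langle v,\Delta e_{x}\rangle$ is a finite sum, and conjugating $\langle w,e_{x}\rangle=\overline{w(x)}$ yields (\ref{Eq3.14}) coordinatewise. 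What your route buys: no exchange of summation over infinite index sets, no appeal to Lemma \ref{Lem3.2}, and validity of the coordinate formula for any function $v$ whatsoever --- the hypothesis $v\in\operatorname*{dom}\left(\Delta^{\ast}\right)$ is needed only so that these coordinates assemble into an $\ell^{2}$ vector, exactly as your closing remark says; in effect you compute the matrix entries $(\Delta^{\ast}v)(x)=\overline{\langle v,\Delta e_{x}\rangle}$, the same computation the paper records only later, in Corollary \ref{CorA.5}. What the paper's route buys: the identity is exhibited against all of $\mathcal{D}$ at once, which is the form one would need if one also wanted a sufficient criterion for membership in $\operatorname*{dom}\left(\Delta^{\ast}\right)$ --- a converse that neither the lemma nor your proof claims, and which is not required here.
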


\begin{proof}
Since the sum in (\ref{Eq3.13}) is finite, the RHS is well defined if
$v\in\operatorname*{dom}\left(  \Delta^{\ast}\right)  $. Since $\Delta^{\ast
}v\in\mathcal{H}$,
\begin{equation}
\sum_{x\in G^{\left(  c\right)  }}\left\vert \left(  \Delta^{\ast}v\right)
\left(  x\right)  \right\vert ^{2}<\infty\text{.} \label{Eq3.15}%
\end{equation}

Set $w\left(  x\right)  $:$=\sum_{y\sim x}c\left(  xy\right)  \left(  v\left(
x\right)  -v\left(  y\right)  \right)  $.

We claim that (\ref{Eq3.12}) then holds. Indeed
\begin{align*}
\langle w,u\rangle_{\ell^{2}}  &  =\sum_{x\in G^{\left(  0\right)  }}\left(
\sum_{y\sim x}c\left(  xy\right)  \left(  \overline{v\left(  x\right)
}-\overline{v\left(  y\right)  }\right)  \right)  u\left(  x\right) \\
&  =\sum_{x}\overline{v\left(  x\right)  }\sum_{y\sim x}c\left(  xy\right)
\left(  u\left(  x\right)  -u\left(  y\right)  \right) \\
&  =\langle v,\Delta u\rangle_{\ell^{2}}%
\end{align*}
(by the exchange of summation and Lemma \ref{Lem3.2}).
\end{proof}

\begin{lemma}
\label{Lem3.5}Let $\left(  G,c\right)  $ and $\Delta=\Delta_{\left(
G,c\right)  }$ here as in the previous lemma. Then the equation
\begin{equation}
\Delta v=-v \label{Eq3.16}%
\end{equation}
does not have non-zero solutions $v\in\ell^{2}\left(  G^{\left(  0\right)
}\right)  $.
\end{lemma}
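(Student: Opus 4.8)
The plan is to read equation (\ref{Eq3.16}) through Lemma \ref{Lem3.4}: since $v\in\ell^{2}(G^{(0)})$ need not lie in $\mathcal{D}$, what (\ref{Eq3.16}) really asserts is that $v\in\operatorname*{dom}(\Delta^{\ast})$ with $\Delta^{\ast}v=-v$, i.e., using the pointwise formula (\ref{Eq3.14}),
\[
\sum_{y\sim x}c(xy)\,(v(x)-v(y))=-v(x),\qquad\forall x\in G^{(0)}.
\]
With $B_{c}(x)=\sum_{y\sim x}c(xy)$ as in (\ref{Eq3.9}), this rearranges to the single identity
\[
(1+B_{c}(x))\,v(x)=\sum_{y\sim x}c(xy)\,v(y),\qquad\forall x\in G^{(0)},
\]
on which the whole argument will run. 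The point of the $-1$ on the right of (\ref{Eq3.16}) is precisely to produce the strict factor $1+B_{c}(x)>B_{c}(x)$, and this gap is what will kill any nonzero solution.

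The route I would use is a maximum-modulus principle rather than a quadratic-form estimate. The tempting form approach is to combine $\langle v,\Delta^{\ast}v\rangle=\langle v,-v\rangle=-\lVert v\rVert^{2}$ with the nonnegativity $\langle v,\Delta v\rangle=\tfrac12\mathcal{E}_{c}(v)\ge0$ from Lemma \ref{Lem3.2} (cf.\ (\ref{Eq3.11})). The obstacle is that (\ref{Eq3.11}) was proved for $u\in\mathcal{D}$, and for a general $v\in\ell^{2}$ the symmetrization that turns $\langle v,\Delta^{\ast}v\rangle$ into $\tfrac12\mathcal{E}_{c}(v)$ requires rearranging a double sum whose absolute convergence amounts to $\sum_{x}B_{c}(x)\lvert v(x)\rvert^{2}<\infty$ — which mere $\ell^{2}$ membership does not give when $B_{c}$ is unbounded. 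This convergence issue is the main difficulty, and the maximum principle sidesteps it entirely by using only pointwise information.

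The key observation is that a function in $\ell^{2}(G^{(0)})$ on a countable vertex set vanishes at infinity: for each $\varepsilon>0$ the set $\{x:\lvert v(x)\rvert\ge\varepsilon\}$ is finite. Hence if $v\neq0$ then $M:=\sup_{x}\lvert v(x)\rvert>0$ and, being a supremum taken effectively over a finite set, it is \emph{attained}, say $\lvert v(x_{0})\rvert=M$. Evaluating the rearranged identity at $x_{0}$ and using the triangle inequality together with the finiteness of $\operatorname*{nbh}(x_{0})$ (assumption (a)),
\[
(1+B_{c}(x_{0}))\,M=\Big\lvert\sum_{y\sim x_{0}}c(x_{0}y)\,v(y)\Big\rvert\le\sum_{y\sim x_{0}}c(x_{0}y)\,\lvert v(y)\rvert\le M\,B_{c}(x_{0}),
\]
so that $(1+B_{c}(x_{0}))M\le B_{c}(x_{0})M$, forcing $M\le0$, a contradiction. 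Therefore $v=0$, which is the claim.

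Finally I would record what this buys: the lemma says exactly that the deficiency space $\ker(\Delta^{\ast}+I)$ is trivial. Since $\Delta$ is positive semidefinite by Lemma \ref{Lem3.2}, triviality of this one deficiency space at the point $-1$, which lies strictly below the lower bound $0$, is the hypothesis that will deliver essential selfadjointness of $\Delta_{(G,c)}$ in Theorem \ref{Theo3.1}.
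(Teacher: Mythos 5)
Your proof is correct, and it takes a genuinely different route from the paper's. The paper argues through the energy form: from (\ref{Eq3.16}) it gets $\overline{v(x)}\,(\Delta v)(x)=-\lvert v(x)\rvert^{2}\le 0$ at every vertex, sums over $x$, identifies the total with $\tfrac{1}{2}\mathcal{E}_{c}(v)$ via (\ref{Eq3.11}), concludes $\mathcal{E}_{c}(v)=0$, hence that $v$ is constant on each connected component, and finally uses the standing assumption that all components are infinite to force the constant to be $0$ in $\ell^{2}$. Your maximum-modulus argument replaces all of this with the rearranged pointwise identity $(1+B_{c}(x))\,v(x)=\sum_{y\sim x}c(xy)\,v(y)$ evaluated at a point where $\lvert v\rvert$ attains its supremum, which exists because an $\ell^{2}$ function on a countable vertex set vanishes at infinity. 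Two genuine differences are worth recording. First, your opening criticism of the form approach is accurate and in fact applies to the paper's own proof: identity (\ref{Eq3.11}) is established only for $u\in\mathcal{D}$, and extending it to a general $v\in\ell^{2}\cap\operatorname{dom}(\Delta^{\ast})$ requires a double-sum rearrangement whose absolute convergence (essentially $\sum_{x}B_{c}(x)\lvert v(x)\rvert^{2}<\infty$) is not supplied by $\ell^{2}$ membership when $B_{c}$ is unbounded; the paper passes over this point silently, so your argument closes a gap rather than merely restating the proof. Second, your proof nowhere uses the infiniteness of connected components: it rules out nonzero $\ell^{2}$ solutions on finite and infinite components alike (as it must, since $-1$ cannot be an eigenvalue of a positive semidefinite operator), whereas the paper's argument needs infiniteness to kill the constants. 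What the paper's route buys in exchange is that it keeps the energy form $\mathcal{E}_{c}$ --- the central object of Section \ref{EnergyForm} --- in the foreground; what yours buys is elementarity and robustness, and it is the standard device in the graph-Laplacian literature for showing that defect spaces are trivial.
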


\begin{proof}
It is immediate from (\ref{Eq3.7}) in Lemma \ref{Lem3.2} that eq.
(\ref{Eq3.16}) does not have non-zero solutions in $\mathcal{D}$, but the
assertion is that there are no non-zero solutions in any bigger subspace.

Also note that every solution in $\ell^{2}\left(  G^{\left(  0\right)
}\right)  $ to eq. (\ref{Eq3.16}) must be in $\operatorname*{dom}\left(
\Delta^{\ast}\right)  $, i.e., the domain of the adjoint of $\Delta$ with
$\mathcal{D}$ as domain.

If $v:G^{\left(  0\right)  }\rightarrow\mathbb{C}$ is a solution to
(\ref{Eq3.16}), then
\begin{equation}
\overline{v\left(  x\right)  }\Delta v\left(  x\right)  =-\left\vert v\left(
x\right)  \right\vert ^{2},~\forall x\in G^{\left(  0\right)  } \label{Eq3.17}%
\end{equation}
which yields $\overline{v\left(  x\right)  }\Delta v\left(  x\right)
\leq0,~\forall x\in G^{\left(  0\right)  }$. Hence $\mathcal{E}_{c}\left(
v\right)  \leq0$; see (\ref{Eq3.10})-(\ref{Eq3.11}). But by (\ref{Eq3.11}),
then $\mathcal{E}_{c}\left(  v\right)  =0$.

It follows from (\ref{Eq3.10}) that $v$ must be constant on every connected
component in $G^{\left(  0\right)  }$. Since all the connected components are
infinite, $v$ must be zero.
\end{proof}

\begin{remark}
\label{Rem3.5.1}We stress that \emph{(}\ref{Eq3.16}\emph{)} may have non-zero
solutions not in $\ell^{2}$. For these solutions $v$, the energy will be unbounded.
\end{remark}

\begin{example}
\label{Examp3.5.2}Let a graph system $\left(  G,c\right)  $ be determined as follows:

\begin{itemize}
\item[ ] $G^{\left(  0\right)  }=\mathbb{N}_{0}=\left\{  0,1,2,\ldots\right\}
$,

\item[ ] $G^{\left(  1\right)  }:\operatorname*{nbh}\left(  0\right)
=\left\{  1\right\}  ,$

\item[ ] \qquad$\;\operatorname*{nbh}\left(  n\right)  =\left\{
n-1,n+1\right\}  $ if $n>0$, and%
\[
c\left(  n,n+1\right)  =n+1\text{.}%
\]
Then the Laplace operator $\Delta_{c}$ will be unbounded in $\ell^{2}$ as
follows from
\[
\Delta_{c}=\left(
\begin{array}
[c]{rrrrrll}%
1 & -1 & 0 & 0 & 0 & \cdots & \\
-1 & 3 & -2 & 0 & 0 & \cdots & \\
0 & -2 & 5 & -3 & 0 & \cdots & \\
0 & 0 & -3 & 7 & -4 & \cdots & \\
\vdots & \vdots & \vdots & \ddots & \ddots & \ddots & \\
&  &  &  & -n & 2n+1 & -\left(  n+1\right)  \\
&  &  &  &  &  & \ddots
\end{array}
\right)
\]
Then
\begin{align*}
\left(  \Delta u\right)  _{0} &  =u_{0}-u_{1}\text{, and}\\
\left(  \Delta u\right)  _{n} &  =\left(  2n+1\right)  u_{n}-nu_{n-1}-\left(
n+1\right)  u_{n+1},~\forall n\geq1\text{.}%
\end{align*}

\end{itemize}
\end{example}

For solving (\ref{Eq3.16}), initialize $v_{0}=1$. Then
\begin{align*}
v_{1} &  =2v_{0}=2,\\
v_{2} &  =\frac{7}{2}\text{, and inductively}\\
v_{n+1} &  =2v_{n}-\left(  \frac{n}{n+1}\right)  v_{n-1}\text{.}%
\end{align*}
We get $v_{1}<v_{2}<\cdots<v_{n-1}<v_{n}<\cdots$ and
\[
v_{n+1}>\left(  2-\frac{n}{n+1}\right)  v_{n}\text{.}%
\]
Hence for the truncated summations for $\ell^{2}$ and $\mathcal{E}$ applied to
this solution $v$; we get
\[
\frac{1}{2}\mathcal{E}_{N}\left(  v\right)  =-\sum_{k=0}^{N}v_{k}^{2}<-N
\]
which tends to $-\infty$.

The following lemma is from the general theory of unbounded operators in
Hilbert space, \cite{Nel69}, \cite{Sto51}, \cite{vN31}.

\begin{lemma}
\label{Lem3.6}Let $\Delta$ be a linear operator in a Hilbert space
$\mathcal{H}$ and defined in a dense domain $\mathcal{D}$.

Then $\Delta$ is essentially selfadjoint \emph{(}i.e., has selfadjoint closure
$\bar{\Delta}$\emph{)} if the following conditions hold:

\emph{(}i\emph{)} $\langle u,\Delta u\rangle\geq0,\forall u\in\mathcal{D}$

\emph{(}ii\emph{)}\thinspace$\dim\left\{  v\in\operatorname*{dom}\left(
\Delta^{\ast}\right)  |\Delta^{\ast}v=-v\right\}  =0$.
\end{lemma}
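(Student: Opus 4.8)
The plan is to read Lemma \ref{Lem3.6} as the specialization of von Neumann's deficiency-index criterion to a \emph{positive} (semibounded) operator, where the usual test points $\pm i$ may be replaced by the single negative real point $-1$. First I would extract symmetry from hypothesis (i): since $\langle u,\Delta u\rangle\geq 0$ is real for every $u\in\mathcal{D}$, the polarization identity forces $\langle u,\Delta v\rangle=\langle\Delta u,v\rangle$ for all $u,v\in\mathcal{D}$, so $\Delta$ is Hermitian on its dense domain. Let $B:=\bar{\Delta}$ be its closure. Then $B$ is closed and symmetric, it still satisfies $\langle u,Bu\rangle\geq 0$ (pass to the limit along $u_{n}\to u$, $\Delta u_{n}\to Bu$), and $B^{\ast}=\Delta^{\ast}$ since the adjoint of an operator equals the adjoint of its closure.

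The strategy is then to prove directly that $B$ is selfadjoint by placing the real number $-1$ in its resolvent set; for a \emph{closed symmetric} operator, the presence of any real point in the resolvent set already forces selfadjointness, and selfadjointness of $B=\bar{\Delta}$ is precisely essential selfadjointness of $\Delta$. Concretely I would show that $B+I\colon\operatorname{dom}(B)\to\mathcal{H}$ is a bijection. For injectivity together with closedness of the range, I invoke positivity: for $u\in\operatorname{dom}(B)$,
\[
\|(B+I)u\|^{2}=\|Bu\|^{2}+2\langle u,Bu\rangle+\|u\|^{2}\geq\|u\|^{2},
\]
using that $\langle u,Bu\rangle\geq 0$ is real. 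Hence $B+I$ is bounded below, so it is injective, and since $B$ (and therefore $B+I$) is closed, its range is closed.

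For surjectivity it now suffices to show that the range is dense, i.e.\ that $\operatorname{Ran}(B+I)^{\perp}=\{0\}$. Here the orthogonal complement of the range is exactly a kernel of the adjoint:
\[
\operatorname{Ran}(B+I)^{\perp}=\ker\big((B+I)^{\ast}\big)=\ker(B^{\ast}+I)=\ker(\Delta^{\ast}+I),
\]
and hypothesis (ii) says precisely that $\ker(\Delta^{\ast}+I)=\{v\in\operatorname{dom}(\Delta^{\ast})\mid\Delta^{\ast}v=-v\}=\{0\}$. Thus $B+I$ is bijective with bounded inverse, $-1\in\rho(B)$, and $B$ is selfadjoint.

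I expect the step carrying the real weight to be the passage from ``$\Delta^{\ast}$ has no eigenvector at $-1$'' to surjectivity of $B+I$: it rests on the closed-range property, and that property is exactly where positivity (i) is indispensable. Without semiboundedness the estimate above would only be available at the non-real points $\pm i$, forcing one back to checking both deficiency subspaces $\ker(\Delta^{\ast}\mp i)$; semiboundedness is what collapses the test to the single real eigenvalue equation $\Delta^{\ast}v=-v$ appearing in (ii).
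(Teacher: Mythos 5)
Your proof is correct and follows essentially the same route as the paper: the paper's proof is only a citation to the literature plus the remark that, given (i), the bounded operator $T=\left(I+\bar{\Delta}\right)^{-1}$ is well defined precisely when (ii) holds, and your argument is exactly a complete worked-out version of that remark --- positivity yields $\left\Vert (B+I)u\right\Vert \geq \left\Vert u\right\Vert$ and hence injectivity with closed range, (ii) yields density of the range via $\operatorname{Ran}(B+I)^{\perp}=\ker(\Delta^{\ast}+I)$, and a closed symmetric operator with a real point in its resolvent set is selfadjoint. You have simply supplied the details the paper delegates to its references.
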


\begin{proof}
This is in the literature, e.g. \cite{vN31}. The idea is the following, if (i)
is assumed, then there is a well defined bounded operator
\[
T=\left(  I+\bar{\Delta}\right)  ^{-1}%
\]
precisely when (ii) is satisfied.
\end{proof}

In our analysis of the graph Laplacian $\Delta_{c}$ in (\ref{Eq3.6}) we shall
need one more:

\begin{lemma}
\label{Lem3.7}Let $\Delta_{c}$ be as in \emph{(}\ref{Eq3.6}\emph{)}. Then for
all $v\in\mathcal{D}$,
\begin{equation}
\sum_{x\in G^{\left(  0\right)  }}\left(  \Delta_{c}v\right)  \left(
x\right)  =0\text{.} \label{Eq3.18}%
\end{equation}
In fact, when $v$ is fixed, the number of non-zero terms in \emph{(}%
\ref{Eq3.18}\emph{)} is finite.
\end{lemma}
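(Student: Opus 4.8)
The plan is to prove the finiteness assertion first, since once we know that only finitely many terms in (\ref{Eq3.18}) are non-zero, the identity $\sum_{x}(\Delta_{c}v)(x)=0$ becomes a statement about a \emph{finite} sum, which I can then reorganize freely and collapse by an antisymmetry (cancellation) argument. Nothing about $\ell^{2}$ is needed; only the symmetry of $c$ and local finiteness of the graph enter.

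First I would fix $v\in\mathcal{D}$ and let $F=F_{v}$ be its finite support. Examining the defining formula (\ref{Eq3.6}), the value $(\Delta_{c}v)(x)=\sum_{y\sim x}c(xy)(v(x)-v(y))$ can be non-zero only if at least one summand is non-zero, i.e.\ only if there exists $y\sim x$ with $v(x)\neq v(y)$; in particular at least one of $v(x),v(y)$ must be non-zero. Hence $(\Delta_{c}v)(x)\neq0$ forces either $x\in F$, or $x\in\operatorname*{nbh}(y)$ for some $y\in F$. Therefore the support of $\Delta_{c}v$ is contained in
\[
F\cup\bigcup_{y\in F}\operatorname*{nbh}(y),
\]
which is finite, since each $\operatorname*{nbh}(y)$ is finite by assumption (a) and the union runs over the finite set $F$. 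This already proves the second assertion of the lemma.

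Next, with finiteness in hand, I would write the left-hand side of (\ref{Eq3.18}) as a single finite double sum over ordered pairs,
\[
\sum_{x\in G^{\left(  0\right)  }}(\Delta_{c}v)(x)=\underset{x\sim y}{\sum\sum}\,c(xy)\,(v(x)-v(y)),
\]
and set $f(x,y):=c(xy)(v(x)-v(y))$. Using the symmetry assumption $c(xy)=c(yx)$ I observe that $f(y,x)=c(yx)(v(y)-v(x))=-f(x,y)$, so $f$ is antisymmetric under interchange of its two arguments. The index set $\{(x,y)\mid x\sim y\}$ is itself invariant under this interchange, because $x\sim y\iff y\sim x$; moreover, since $x\not\sim x$ by assumption (b), there are no diagonal terms with $x=y$, so the map $(x,y)\mapsto(y,x)$ is a fixed-point-free involution of a finite set. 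Summing the odd function $f$ over such a set gives $0$, each ordered pair $(x,y)$ cancelling exactly against its partner $(y,x)$; hence the total vanishes, which is (\ref{Eq3.18}).

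I do not expect a genuine obstacle here: the only point requiring care is the a priori legitimacy of rearranging the sum, which is precisely why the finiteness step is carried out first. As an alternative to the antisymmetry argument one could split the double sum as $\sum_{x}B_{c}(x)v(x)-\underset{x\sim y}{\sum\sum}c(xy)v(y)$ and exchange the order of summation in the second term (again permissible for a finite sum) to rewrite it as $\sum_{y}B_{c}(y)v(y)$, with $B_{c}$ as in (\ref{Eq3.9}); the two expressions then coincide and cancel. Either route yields the claim.
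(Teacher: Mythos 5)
Your proof is correct, and its primary argument is organized differently from the paper's. The paper proves the identity by splitting the double sum and re-indexing: $\sum_{x}\sum_{y\sim x}c(xy)(v(x)-v(y))=\sum_{x}v(x)\mathcal{B}_{c}(x)-\sum_{y}v(y)\sum_{x\sim y}c(yx)=\sum_{x}v(x)\mathcal{B}_{c}(x)-\sum_{y}v(y)\mathcal{B}_{c}(y)=0$, with $\mathcal{B}_{c}$ as in (\ref{Eq3.9}) --- i.e., exactly the ``alternative route'' you sketch in your closing paragraph. Your main route instead sums the antisymmetric function $f(x,y)=c(xy)(v(x)-v(y))$ over the swap-invariant, diagonal-free index set $\left\{(x,y)\mid x\sim y\right\}$ and cancels each ordered pair against its reverse via the fixed-point-free involution $(x,y)\mapsto(y,x)$; this makes the mechanism transparent (each edge contributes twice with opposite signs) and is arguably more elementary than routing through $\mathcal{B}_{c}$. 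One small point of bookkeeping: the finiteness you establish is that of the support of $\Delta_{c}v$, whereas the pairwise-cancellation step needs finiteness (and swap-invariance) of the set of non-zero \emph{terms} of the double sum --- a priori a different statement, since for $x$ outside the support the inner sum vanishes while individual terms need not. This is repaired instantly by the very observation you already make: a non-zero term forces $v(x)\neq v(y)$, hence $x\in F$ or $y\in F$, and local finiteness then bounds the set of such ordered pairs. You are also more explicit than the paper about the finiteness assertion itself: the paper dismisses it with ``follows from the assumptions,'' while you exhibit the concrete finite set $F\cup\bigcup_{y\in F}\operatorname{nbh}(y)$ containing the support. Both arguments ultimately rest on the same three ingredients --- symmetry of $c$, symmetry of $\sim$, and finiteness to license rearrangement --- so the difference is one of organization rather than substance.
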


\begin{proof}
The finiteness claim follows from the assumptions on $\left(  G,c\right)  $ we
listed in section \ref{Assumptions}.

A direct computation yields the result:
\begin{align*}
\sum_{x\in G^{\left(  0\right)  }}\left(  \Delta_{c}v\right)  \left(
x\right)   &  =\sum_{x}\sum_{y\sim x}c\left(  xy\right)  \left(  v\left(
x\right)  -v\left(  y\right)  \right) \\
&  =\sum_{x}v\left(  x\right)  \sum_{y\sim x}c\left(  xy\right)  -\sum
_{y}v\left(  y\right)  \sum_{x\sim y}c\left(  xy\right) \\
&  =\sum_{x}v\left(  x\right)  \mathcal{B}_{c}\left(  x\right)  -\sum
_{y}v\left(  y\right)  \sum_{x\sim y}c\left(  yx\right) \\
&  =\sum_{x}v\left(  x\right)  \mathcal{B}_{c}\left(  x\right)  -\sum
_{y}v\left(  y\right)  \mathcal{B}_{c}\left(  y\right) \\
&  =0\text{.}%
\end{align*}

\end{proof}

\section{Operator Theory\label{OpTheory}}

Once the operator theoretic tools are introduced, we show in section
\ref{EnergyForm} below that class of infinite graph systems $(G,c)$ where $G$
is a graph and $c$ is a conductance function (the pair $\left(  G,c\right)  $
satisfying the usual axioms as before), have the spectrum of the associated
Laplace operator $\Delta_{c}$ continuous. This refers to the $\ell^{2}$ space
of $G^{\left(  0\right)  }$, i.e., the Hilbert space is $\ell^{2}(G^{(0)})$
where as usual $G^{\left(  0\right)  }$ denotes the set of vertices.

It is important that $G^{\left(  0\right)  }$ is infinite. Otherwise of course
the spectrum is just the finite set of zeros of the characteristic polynomial.
See Example \ref{Examp6.6} below.

We give an operator theory/spectral theory analysis, with applications, of a
class of graph Laplacians; and we have been motivated by a pioneering paper
\cite{Pow76} which in an exciting way applies graphs and resistor networks to
a problem in quantum statistical mechanics. In one of our results we establish
the essential selfadjointness of a large class of graph Laplacians on graphs
of infinite networks. (A Hermitian symmetric operator with dense domain in
Hilbert space is said to be \textit{essentially selfadjoint} if its closure is
selfadjoint, i.e., if the deficiency indices are $(0,0)$. See Definition
\ref{Def4.1} below! There are many benefits from having the graph Laplacian
$\Delta$ essentially selfadjoint.

Here is a partial list:

(a) We get the benefit of having the spectral resolution for the selfadjoint
closure, also denoted $\Delta$ for notational simplicity.

(b) We get a spectral representation realization of the operator $\Delta$,
i.e., a unitarily equivalent form of $\Delta$ in which an equivalent operator
$\Delta^{\sim}$ may occur in applications. See e.g., \cite{Arv02},
{\cite{PaSc72}}.

(c) We get a scale of Hilbert spaces, $\mathcal{H}_{s}$ for $s$ in
$\mathbb{R}$, defined from the graph of the operator $\Delta^{s}$ where the
fractional power $\Delta^{s}$ is defined by functional calculus applied to the
selfadjoint realization of $\Delta$. See {\cite{Jor04}}.

(d) Gives us a way of computing scales of resistance metrics on electrical
networks realized on infinite graphs, extending tools available previously
only for finite graphs; see {\cite{BoDu49}}.

(e) The case $s=1/2$ yields an exact representation of the energy Hilbert
space associated with a particular system $(G,c)$ and the corresponding graph
Laplacian $\Delta=\Delta_{(G,c)}$.

(f) Gives us a way of computing fractional Brownian motion on graphs, allowing
an analytic continuation in the parameter $s$, and with $s=1/2$ corresponding
to the standard Brownian motion; see e.g., {\cite{DuJo07a}}, {\cite{Jor06b}}.

In the course of the proofs of our main results, we are making use of tools
from the theory of unbounded operators in Hilbert space: von Neumann's
deficiency indices, operator closure, graphs of operators, operator domains,
operator adjoints; and extensions of Hermitian operators with a dense domain
in a fixed complex Hilbert space. Our favorite references for this material
include: {\cite{AnCo04}}, {\cite{Jor77}}, {\cite{Jor78}}, {\cite{JoPe00}},
{\cite{Nel69}}, {\cite{vN31}}, {\cite{Sto51}}. For analysis on infinite graphs
and on fractals, see e.g., {\cite{BHS05}}, {\cite{CuSt07}}, {\cite{DuJo06a}},
{\cite{HKK02}}, {\cite{Hut81}}, {\cite{JoPe98}}, {\cite{JKS07}},
{\cite{Kig03}}, {\cite{BoDu49}}.

\begin{definition}
\label{Def4.1}Let $\Delta$ be a Hermitian linear operator with dense domain
$\mathcal{D}$ in a complex Hilbert space $\mathcal{H}$. Set
\[
\mathcal{D}_{\pm}:\,=\left\{  v_{\pm}\in\operatorname*{dom}\left(
\Delta^{\ast}\right)  |\Delta^{\ast}v_{\pm}=\pm iv_{\pm}\right\}  \text{,}%
\]
where $i=\sqrt{-1}$. Then the two numbers $n_{\pm}:\,=\dim\mathcal{D}_{\pm}$
are called the \emph{deficiency indices}.
\end{definition}

Von-Neumann's theorem states that the initial operator $\Delta$ is essentially
selfadjoint on $\mathcal{D}$ if and only if $n_{+}=n_{-}=0$. It has
selfadjoint extensions defined on a larger domain in $\mathcal{H}$ if and only
if $n_{+}=n_{-}$.

The following two conditions on a Hermitian operator, (A) and (B),
individually imply equal deficiency indices, i.e., $n_{+}=n_{-}$:

(A) For all $v\in\mathcal{D}$, we have the estimate
\[
\langle v,\Delta v\rangle\geq0\text{,}%
\]
i.e., $\Delta$ is semibounded.

(B) There is an operator $J:\mathcal{H}\rightarrow\mathcal{H}$ satisfying the
following four conditions:

\begin{itemize}
\item[ ] (i) $J\left(  u+\alpha v\right)  =Ju+\bar{\alpha}Jv$, for $\forall
u,v\in\mathcal{H}$, $\alpha\in\mathbb{C}$

\item[ ] (ii) $\langle Ju,Jv\rangle=\langle v,u\rangle~\forall u,v\in
\mathcal{H}$; ($J$ is called a conjugation!)

\item[ ] (iii) $J$ maps the subspace $\mathcal{D}$ into itself, and
\[
J\Delta v=\Delta Jv,~\forall v\in\mathcal{D}\text{.}%
\]

\item[ ] (iv) $J^{2}=id$; $J$ is of period $2$.
\end{itemize}

\begin{remark}
\label{Rem4.2}There are many examples \emph{(}see the Appendix\emph{)} where
either \emph{(}A\emph{)}\thinspace or \emph{(}B\emph{)}\thinspace is satisfied
but where the operator $\Delta$ is not essentially selfadjoint.
\end{remark}

Both conditions (A) and (B) hold for a graph Laplacians $\Delta_{c}$, and
Theorem \ref{Theo3.1} states that $\Delta_{c}$ is essentially selfadjoint.

For Riemannian manifolds with boundary, there is a close analogue of the graph
Laplacian $\Delta_{c}$ above; but it is known (see section \ref{BoundaryCond})
that these continuous variants are typically \textit{not} essentially selfadjoint.

Indeed the obstruction to essential selfadjointness in these cases captures a
physical essence of the metric geometry behind the use of Laplace operators in
Riemannian geometry.

\section{The Energy Form\label{EnergyForm}}

\subsection{\textbf{Operators}\label{Operators}}

In section \ref{MainTheo} we proved essential selfadjointness of the graph
Laplacians $\Delta_{c}$. This refers to sequence space $\ell^{2}$, the Hilbert
space of all square-summable sequences indexed by the points in $G^{\left(
0\right)  }$, and equipped with the usual $\ell^{2}$-inner product.

This means that the axioms for $\Delta_{c}$ are such that boundary conditions
at infinity in $G$ are determined by computations on finite subsets of the
vertices in $G$. (In the Appendix, we will contrast this state of affairs with
related but different boundary conditions from quantum mechanics.) Recall that
$\Delta_{c}$ is generally a densely defined Hermitian and unbounded operator.
So in principle there might be non-trivial obstructions to selfadjointness
(other than simply taking operator closure.) Recall (Definition \ref{Def4.1})
that a given Hermitian operator with dense domain is essentially selfadjoint
if and only if the dimension of the each of the two \textquotedblleft defect
eigenspaces\textquotedblright\ is zero.

So that is why we look at the \textquotedblleft minus 1
eigenspaces\textquotedblright\ for the adjoint operator, $\Delta^{\ast}u=-u.$

For potential theoretic computations we need an additional Hilbert space, the
Energy Hilbert space $\mathcal{H}_{\mathcal{E}}$ (details below.) For example
the voltage potentials associated with a fixed graph Laplacian are typically
not in $\ell^{2}(G^{\left(  0\right)  })$ but rather in an associated Energy
Hilbert space. Our Laplace operator $\Delta$ is formally Hermitian in both the
Hilbert spaces $\ell^{2}$ and $\mathcal{H}_{\mathcal{E}}$(the energy Hilbert
space). We show that the Laplace operator $\Delta$ is essentially selfadjoint
both in $\ell^{2}$ and in $\mathcal{H}_{\mathcal{E}}$. In both cases, we take
for dense domain $\mathcal{D}$ the linear subspace of all finitely supported
functions $G^{\left(  0\right)  }\rightarrow\mathbb{R}$.

Our setting and results in this section are motivated by \cite{Pow76} and
\cite{BoDu49}.

There are several distinctions between the two Hilbert spaces: For example,
the Dirac functions $\{\delta_{x}|x\in G^{\left(  0\right)  }\}$ form an
orthonormal basis (ONB) in $\ell^{2}$, but not in $\mathcal{H}_{\mathcal{E}}$.
The implication of this is that our graph Laplacians have different matrix
representations in the two Hilbert spaces. In speaking of \textquotedblleft
matrix representation\textquotedblright\ for an operator in a Hilbert space,
we will always be referring to a chosen ONB.

We shall need the operator $\Delta$ in both guises. One reason for this is
that for infinite graphs, typically the potential function $v$ solving $\Delta
v=\delta_{x}-\delta_{y}$, for pairs of vertices will not be in $\ell^{2}$, but
nonetheless $v$ will have finite energy, i.e., $\mathcal{E}(v)<\infty$,
meaning that the energy form applied to $v$ is finite. Caution: The sequence
$v$ might not be in the $\ell^{2}$-space. Specifics in Example \ref{Examp5.2} below!

When we study the Laplace operator $\Delta$, our questions concern its
spectrum, and its spectral resolution. The spectrum will be contained in the
half-line $[0,\infty)$, but (as we show in examples) it can be unbounded, and
it can have continuous parts mixed in with discrete parts. In case the
conductance function is \textquotedblleft very unbounded,\textquotedblright%
\ as an operator in $\ell^{2}$, it may be necessary to pass to a proper
operator extension in an enlarged Hilbert space to get a different selfadjoint
realization of $\Delta$.

These operator issues only enter in case $\Delta$ is unbounded. The
unboundedness of the operator $\Delta$ is tied in closely with unboundedness
of the conductance function $c$ which is used. Recall $\Delta=\Delta_{c}$
depends on the choice of $c$. If $c$ is unbounded \textquotedblleft at
infinity\textquotedblright\ (on the set $G^{\left(  1\right)  }$ of edges),
then the resistors tend to zero at distant edges. Intuitively, this means that
\textquotedblleft the current escapes to infinity,\textquotedblright\ and we
make this precise in the language of operator theory. Our general setup here
will be as in \cite{Pow76}.

We will need a second Hilbert space, the energy Hilbert space $\mathcal{H}%
_{\mathcal{E}}$. Here the inner product is the energy quadratic form. Since
the energy form evaluated on a function $v$ is defined in terms of the square
of differences $v(x)-v(y)$, it follows that the elements in $\mathcal{H}%
_{\mathcal{E}}$ are really sequences modulo constants.

Let $G=\left(  G^{\left(  0\right)  },G^{\left(  1\right)  }\right)  $ be a
graph satisfying the axioms in section \ref{MainTheo}, with vertices
$G^{\left(  0\right)  }$ and edges $G^{\left(  1\right)  }.$ Let
\begin{equation}
c:G^{\left(  1\right)  }\rightarrow\mathbb{R}_{+} \label{Eq5.1}%
\end{equation}
be a fixed function (called \textit{conductance}.) If $e=\left(  xy\right)
\in G^{\left(  1\right)  }$, we say that $x\sim y$, and the function $c$ must
satisfy $c\left(  xy\right)  =c\left(  yx\right)  $, symmetry. In particular,
for a pair of vertices $x,y$, $c\left(  xy\right)  $ is only defined if $x\sim
y$, i.e., if $\left(  xy\right)  \in G^{\left(  1\right)  }$. For every $x\in
G^{\left(  0\right)  }$, we assume that
\begin{equation}
\operatorname*{nbh}\left(  x\right)  =\left\{  y\in G^{\left(  0\right)
}|y\sim x\right\}  \label{Eq5.2}%
\end{equation}
is \textit{finite}, and that $x\not \in \operatorname*{nbh}\left(  x\right)  $.

Following eq. (\ref{Eq3.9}), we study functions $v:G^{\left(  0\right)
}\rightarrow\mathbb{C}$ for which
\begin{equation}
\mathcal{E}_{c}\left(  v\right)  =\sum_{%
\genfrac{}{}{0pt}{}{\text{all }x,y}{\text{s.t. }x\sim y}%
}c\left(  xy\right)  \left\vert v\left(  x\right)  -v\left(  y\right)
\right\vert ^{2}<\infty\text{.} \label{Eq5.3}%
\end{equation}

Clearly we must work with functions on $G^{\left(  0\right)  }$ modulo
constants. Setting
\begin{equation}
\mathcal{E}_{c}\left(  u,v\right)  :\,=\sum_{%
\genfrac{}{}{0pt}{}{\text{all }x,y}{x\sim y}%
}c\left(  xy\right)  \left(  \overline{u\left(  x\right)  }-\overline{u\left(
y\right)  }\right)  \left(  v\left(  x\right)  -v\left(  y\right)  \right)
\text{,} \label{Eq5.4}%
\end{equation}
we get
\begin{equation}
\left\vert \mathcal{E}_{c}\left(  u,v\right)  \right\vert ^{2}\leq
\mathcal{E}_{c}\left(  u\right)  \mathcal{E}_{c}\left(  v\right)  ,~\forall
u,v\in\mathcal{D}\text{,} \label{Eq5.5}%
\end{equation}
by Schwarz' inequality.

Setting
\begin{equation}
\langle u,v\rangle_{\mathcal{E}}:~=\mathcal{E}_{c}\left(  u,v\right)
\label{Eq5.6}%
\end{equation}
we get an inner-product, and an associated Hilbert space $\mathcal{H}%
_{\mathcal{E}}$ of all functions $v$ for which (\ref{Eq5.3}) holds.

The triangle inequality
\begin{equation}
\mathcal{E}_{c}\left(  u+v\right)  ^{\frac{1}{2}}\leq\mathcal{E}_{c}\left(
u\right)  ^{\frac{1}{2}}+\mathcal{E}_{c}\left(  v\right)  ^{\frac{1}{2}}
\label{Eq5.7}%
\end{equation}
holds; or equivalently
\[
\left\Vert u+v\right\Vert _{\mathcal{E}}\leq\left\Vert u\right\Vert
_{\mathcal{E}}+\left\Vert v\right\Vert _{\mathcal{E}},~\forall u,v\in
\mathcal{H}_{\mathcal{E}}\text{.}%
\]

In the next result we give a characterization of the Hilbert space
$\mathcal{H}_{\mathcal{E}\text{ }}$directly in terms of the selfadjoint
operator $\overline{\Delta_{c}}$ from section \ref{MainTheo}.

Recall $\overline{\Delta_{c}}$ is the closure of the operator $\Delta_{c}$
with dense domain $\mathcal{D}$ in $\ell^{2}\left(  G^{\left(  0\right)
}\right)  $. It will be convenient to write simply $\Delta_{c}$ for the
closure. Since it is selfadjoint, we have a Borel functional calculus; i.e.,
if $f$ is a Borel function on $\mathbb{R}$, and if $P\left(  \cdot\right)  $
is a projection valued measure for $\Delta_{c}$, then
\begin{equation}
\Delta_{c}=\int_{0}^{\infty}\lambda P\left(  d\lambda\right)  \text{,}
\label{Eq5.8}%
\end{equation}
and%
\begin{equation}
f\left(  \Delta_{c}\right)  :\,=\int f\left(  \lambda\right)  P\left(
d\lambda\right)  \text{.} \label{Eq5.9}%
\end{equation}

For the corresponding (dense) domains, we have
\begin{equation}
\operatorname*{dom}\left(  \Delta_{c}\right)  =\left\{  v\in\ell^{2}\left(
G^{\left(  0\right)  }\right)  |\int_{0}^{\infty}\left\vert \lambda\right\vert
^{2}\left\Vert P\left(  d\lambda\right)  v\right\Vert ^{2}<\infty\right\}
\text{,} \label{Eq5.10}%
\end{equation}
and
\begin{equation}
\operatorname*{dom}\left(  f\left(  \Delta_{c}\right)  \right)  =\left\{
v\in\ell^{2}\left(  G^{\left(  0\right)  }\right)  |\int_{0}^{\infty
}\left\vert f\left(  \lambda\right)  \right\vert ^{2}\left\Vert P\left(
d\lambda\right)  v\right\Vert ^{2}<\infty\right\}  \text{.} \label{Eq5.11}%
\end{equation}

\begin{theorem}
\label{Theo5.1}Let $G=\left(  G^{\left(  0\right)  },G^{\left(  1\right)
}\right)  $ and $c:G^{\left(  1\right)  }\rightarrow\mathbb{R}_{+}$ be as
described, and let $\mathcal{H}_{\mathcal{E}}$ be the energy Hilbert space.
Let $\Delta_{c}$ be the selfadjoint graph Laplacian in $\ell^{2}\left(
G^{\left(  0\right)  }\right)  $ from Section \ref{MainTheo}.\smallskip

\emph{(}a\emph{)} Then
\begin{equation}
\operatorname*{dom}\left(  \Delta_{c}^{1/2}\right)  =\mathcal{H}_{\mathcal{E}%
}\cap\ell^{2}\left(  G^{\left(  0\right)  }\right)  \text{.} \label{Eq5.12}%
\end{equation}

\emph{(}b\emph{)} In general the right hand side is a proper subspace of
$\ell^{2}\left(  G^{\left(  0\right)  }\right)  $.
\end{theorem}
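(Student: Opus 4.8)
The plan is to reduce the whole statement to a single quadratic-form identity linking the two Hilbert spaces, and then to let the essential selfadjointness of Theorem \ref{Theo3.1} force the form domains to agree. First I would polarize the computation in Lemma \ref{Lem3.2}: since $2\langle u,\Delta u\rangle_{\ell^{2}}=\mathcal{E}_{c}(u)$ for $u\in\mathcal{D}$, polarization yields
\[
\langle u,\Delta_{c}v\rangle_{\ell^{2}}=\tfrac{1}{2}\,\mathcal{E}_{c}(u,v)=\tfrac{1}{2}\langle u,v\rangle_{\mathcal{E}},\qquad\forall\,u,v\in\mathcal{D}.
\]
Writing $\Delta_{c}$ for its selfadjoint closure and using the spectral resolution (\ref{Eq5.8}), this gives $\|\Delta_{c}^{1/2}v\|_{\ell^{2}}^{2}=\langle v,\Delta_{c}v\rangle_{\ell^{2}}=\tfrac{1}{2}\mathcal{E}_{c}(v)$ for every $v\in\mathcal{D}$. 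Hence on $\mathcal{D}$ the graph norm $\|v\|_{\ast}^{2}:=\|v\|_{\ell^{2}}^{2}+\|\Delta_{c}^{1/2}v\|_{\ell^{2}}^{2}$ coincides with $\|v\|_{\ell^{2}}^{2}+\tfrac{1}{2}\mathcal{E}_{c}(v)$; this is the bridge between $\operatorname*{dom}(\Delta_{c}^{1/2})$ and $\mathcal{H}_{\mathcal{E}}\cap\ell^{2}$.

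Next I would recall the standard form-theoretic description of a nonnegative selfadjoint operator: $\operatorname*{dom}(\Delta_{c}^{1/2})=\{v\in\ell^{2}\mid\int_{0}^{\infty}\lambda\,\|P(d\lambda)v\|^{2}<\infty\}$, a Hilbert space under $\|\cdot\|_{\ast}$, with $\|\Delta_{c}^{1/2}v\|^{2}=\int_{0}^{\infty}\lambda\,\|P(d\lambda)v\|^{2}$. The inclusion $\subseteq$ in (a) is then the easy half. Given $v\in\operatorname*{dom}(\Delta_{c}^{1/2})$, I approximate it in $\|\cdot\|_{\ast}$ by $v_{n}\in\mathcal{D}$ (possible because $\mathcal{D}$ is a core for $\Delta_{c}$ by Theorem \ref{Theo3.1}, hence a form core for the nonnegative $\Delta_{c}$). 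Then $v_{n}\to v$ in $\ell^{2}$, the sequence $(v_{n})$ is $\mathcal{E}_{c}$-Cauchy, and Fatou's lemma gives $\mathcal{E}_{c}(v)\le\liminf_{n}\mathcal{E}_{c}(v_{n})=2\|\Delta_{c}^{1/2}v\|^{2}<\infty$, so $v\in\mathcal{H}_{\mathcal{E}}\cap\ell^{2}$.

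The reverse inclusion $\supseteq$ is where essential selfadjointness is indispensable, and it is the main obstacle. I would argue through the \emph{maximal} energy form $Q_{\max}(v)=\tfrac{1}{2}\mathcal{E}_{c}(v)$ with domain all of $\mathcal{H}_{\mathcal{E}}\cap\ell^{2}$. The point is that this form is closed: its domain is complete in $\|\cdot\|_{\ast}$ because $\ell^{2}$ is complete and $\mathcal{H}_{\mathcal{E}}$ is a Hilbert space, the ambiguity by constants dropping out since no nonzero constant lies in $\ell^{2}$ (Assumption (iv)). A closed nonnegative form determines a selfadjoint operator $\Delta_{N}\ge0$. For finitely supported $u\in\mathcal{D}$ one may move the difference operator onto $u$ to get $Q_{\max}(u,v)=\langle\Delta_{c}u,v\rangle_{\ell^{2}}$ for every $v\in\operatorname*{dom}(Q_{\max})$, which exhibits $\Delta_{N}$ as a selfadjoint extension of $\Delta_{c}|_{\mathcal{D}}$. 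By Theorem \ref{Theo3.1} this symmetric operator has a \emph{unique} selfadjoint extension, so $\Delta_{N}=\overline{\Delta_{c}}$, whence $\operatorname*{dom}(Q_{\max})=\operatorname*{dom}(\Delta_{N}^{1/2})=\operatorname*{dom}(\Delta_{c}^{1/2})$, giving $\supseteq$ and completing (a). The delicate technical point to nail down is precisely the closedness of $Q_{\max}$ together with the identity $Q_{\max}(u,v)=\langle\Delta_{c}u,v\rangle$, since the rearrangement of the double sum defining $\mathcal{E}_{c}(u,v)$ must be justified; the finite support of $u$ makes that rearrangement absolutely convergent, which is exactly what rescues the argument.

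Finally, for (b) I would simply invoke (a): if the right-hand side were all of $\ell^{2}$, then $\Delta_{c}^{1/2}$ would be a closed, everywhere-defined operator, hence bounded by the closed graph theorem, forcing $\Delta_{c}=(\Delta_{c}^{1/2})^{2}$ to be bounded. But Example \ref{Examp3.5.2} exhibits a system $(G,c)$ whose Laplacian $\Delta_{c}$ is unbounded, so for such $(G,c)$ the inclusion $\operatorname*{dom}(\Delta_{c}^{1/2})\subsetneq\ell^{2}(G^{(0)})$ is strict.
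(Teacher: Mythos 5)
Your proof is correct, but it takes a genuinely different---and more rigorous---route than the paper's. For part (a) the paper argues directly from the identity (\ref{Eq5.13}): it asserts that $\mathcal{E}_{c}(u,v)=2\langle u,\Delta_{c}v\rangle$ holds ``for all functions for which the two sides are finite,'' and then reads off (\ref{Eq5.12}) from the spectral calculus with $f(\lambda)=\sqrt{\lambda}$; the hard inclusion $\mathcal{H}_{\mathcal{E}}\cap\ell^{2}\left(G^{\left(0\right)}\right)\subseteq\operatorname*{dom}\left(\Delta_{c}^{1/2}\right)$ is never isolated as a separate step. You supply exactly the missing mechanism: the maximal energy form $Q_{\max}$ on $\mathcal{H}_{\mathcal{E}}\cap\ell^{2}$ is closed (via pointwise convergence and Fatou, with the constants harmless since no nonzero constant is in $\ell^{2}$), its associated selfadjoint operator extends $\Delta_{c}|_{\mathcal{D}}$ (the rearrangement of the double sum being legitimate because $u$ has finite support and every vertex has finitely many neighbors), and essential selfadjointness (Theorem \ref{Theo3.1}) forces that operator to equal the closure, so the two form domains coincide. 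This makes explicit where essential selfadjointness really enters---without it, the closure of $\mathcal{D}$ in the form norm (the Friedrichs form domain) could a priori be strictly smaller than $\mathcal{H}_{\mathcal{E}}\cap\ell^{2}$---a point the paper's compressed argument leaves implicit. For part (b) the two proofs in fact address slightly different assertions: the paper's Example \ref{Examp5.2} produces a finite-energy potential not in $\ell^{2}$, which shows the right-hand side of (\ref{Eq5.12}) is a proper subspace of $\mathcal{H}_{\mathcal{E}}$ (and this is what the wording inside the paper's proof actually claims), whereas you prove the statement as literally printed---properness inside $\ell^{2}$---by combining (a) with the closed graph theorem and the unboundedness of the Laplacian in Example \ref{Examp3.5.2}. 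Your reading matches the theorem's wording, the closed-graph argument is the right tool for it, and both examples are valid for their respective conclusions.
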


\begin{proof}
We proved in (\ref{Eq3.11}), Lemma \ref{Lem3.2}, that
\begin{equation}
\mathcal{E}_{c}\left(  u,v\right)  =2\langle u,\Delta_{c}v\rangle
\label{Eq5.13}%
\end{equation}
for all functions $u,v$ on $G^{\left(  0\right)  }$ for which the two sides in
(\ref{Eq5.13}) are finite. If $u=v$, then the expression on the RHS in
(\ref{Eq5.13}) is $2\left\Vert \Delta_{c}^{1/2}v\right\Vert ^{2}$ iff
$v\in\operatorname*{dom}\left(  \Delta_{c}^{1/2}\right)  $. This follows from
(\ref{Eq5.11}) applied to
\[
f\left(  \lambda\right)  :\,=\sqrt{\lambda},~\lambda\in\lbrack0,\infty
)\text{.}%
\]
Since $\operatorname*{dom}\left(  \Delta_{c}\right)  \subset
\operatorname*{dom}\left(  \Delta_{c}^{1/2}\right)  \subset\ell^{2}\left(
G^{\left(  0\right)  }\right)  $ the desired conclusion (\ref{Eq5.12}) holds.

To see that $\operatorname*{dom}\left(  \Delta_{c}^{1/2}\right)  $ may be a
proper subspace of
\begin{equation}
\mathcal{H}_{\mathcal{E}}=\left\{  v|\mathcal{E}_{c}\left(  v\right)
<\infty\right\}  \text{,} \label{Eq5.14}%
\end{equation}
consider the following example $\left(  G,c\right)  $ built on the simplest
infinite graph \linebreak$G^{\left(  0\right)  }:\,=\mathbb{Z}$.
\end{proof}

\begin{example}
\label{Examp5.2}%
\[%
\begin{array}
[c]{l}%
G^{\left(  0\right)  }=\mathbb{Z}\text{,}\\
G^{\left(  1\right)  }=\left\{  \left(  n,n\pm1\right)  |n\in\mathbb{Z}%
\right\}  \text{, and}\\
c:G^{\left(  1\right)  }\rightarrow\mathbb{R}_{+},~c\equiv1\text{ on
}G^{\left(  1\right)  }\text{.}%
\end{array}
\]

The corresponding graph Laplacian is
\begin{equation}
\left(  \Delta v\right)  \left(  n\right)  =2v\left(  n\right)  -v\left(
n-1\right)  -v\left(  n+1\right)  \text{, }\forall n\in\mathbb{Z}\text{.}
\label{Eq5.15}%
\end{equation}
If $k\in\mathbb{Z}_{+}$ is given, we claim that there is a unique function
$v\in\mathcal{H}_{\mathcal{E}}$ solving
\begin{equation}
\Delta v=\delta_{0}-\delta_{k}\text{.} \label{Eq5.16}%
\end{equation}

\end{example}

\textit{Existence}: Set
\begin{equation}
v\left(  n\right)  :\,=\left\{
\begin{array}
[c]{ll}%
0 & \text{if }n\leq0\\
-n & \text{if }0<n\leq k\\
-k & \text{if }k\leq n\text{.}%
\end{array}
\right.  \label{Eq5.17}%
\end{equation}
A substitution shows that the function $v$ in (\ref{Eq5.17}) satisfies
(\ref{Eq5.16}).

\textit{Uniqueness}: Let $w\in\mathcal{H}_{\mathcal{E}}$ be a solution to
(\ref{Eq5.16}). Then
\[
\mathcal{E}_{c}\left(  u,v-w\right)  =0,~\forall u\in\mathcal{D}\text{.}%
\]
Since $G=\left(  G^{\left(  0\right)  },G^{\left(  1\right)  }\right)  $ is
connected, $\mathcal{D}$ is dense in $\mathcal{H}_{\mathcal{E}}$; and so the
difference $v-w$ must be a constant function. But the Hilbert space
$\mathcal{H}_{\mathcal{E}}$ is defined by moding out with the constants.
Hence, $v=w$ in $\mathcal{H}_{\mathcal{E}}$.

The following three facts follow directly from (\ref{Eq5.17}):\smallskip

(i) $v$ is non-constant;

(ii) $v\not \in \ell^{2}\left(  \mathbb{Z}\right)  $; and

(iii) $\mathcal{E}_{c}\left(  v\right)  <\infty$.\smallskip

\noindent In fact, an application of (\ref{Eq5.13}) yields
\begin{equation}
\mathcal{E}_{c}\left(  v\right)  =2k\text{.} \label{Eq5.18}%
\end{equation}

\noindent Proof of (\ref{Eq5.18}):
\begin{align*}
\mathcal{E}_{c}\left(  v\right)   &  =2\langle v,\Delta v\rangle\text{ by
(\ref{Eq5.13})}\\
&  =2\langle v,\delta_{0}-\delta_{k}\rangle\\
&  =2\left(  v\left(  0\right)  -v\left(  k\right)  \right) \\
&  =2k\text{ by (\ref{Eq5.17}).}%
\end{align*}

\begin{theorem}
\label{Theo5.3}Let $G=\left(  G^{\left(  0\right)  },G^{\left(  1\right)
}\right)  $ and $c:G^{\left(  1\right)  }\rightarrow\mathbb{R}_{+}$ be a graph
system as in the previous theorem, and in section \ref{MainTheo}; i.e., we
assume that the pair $\left(  G,c\right)  $ satisfies the axioms listed there.
Let $\Delta_{c}$ be the corresponding graph Laplacian with a choice $c$ for conductance.

Let $\alpha,\beta\in G^{\left(  0\right)  }$ be a fixed pair of vertices. Then
there is a unique function $v\in\mathcal{H}_{\mathcal{E}}$, i.e.,
$\mathcal{E}_{c}\left(  v\right)  <\infty$ satisfying
\begin{equation}
\Delta_{c}v=\delta_{\alpha}-\delta_{\beta}\text{.} \label{Eq5.19}%
\end{equation}
The solution to \emph{(}\ref{Eq5.19}\emph{)} is called a voltage potential.
Moreover,
\begin{equation}
\mathcal{E}_{c}\left(  v\right)  =2\left(  v\left(  \alpha\right)  -v\left(
\beta\right)  \right)  \text{.} \label{Eq5.20}%
\end{equation}

\end{theorem}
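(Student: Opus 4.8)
The plan is to split the statement into two parts of unequal difficulty: the energy identity \eqref{Eq5.20} will fall out for free once existence is in hand, so the real content is the existence and uniqueness of the finite-energy solution $v$. I would first recast equation \eqref{Eq5.19} in weak form using the identity \eqref{Eq5.13}, $\mathcal{E}_{c}(u,v)=2\langle u,\Delta_{c}v\rangle_{\ell^{2}}$. Since $\langle u,\delta_{\alpha}-\delta_{\beta}\rangle_{\ell^{2}}=\overline{u(\alpha)}-\overline{u(\beta)}$, a finite-energy $v$ should be called a solution exactly when
\[
\mathcal{E}_{c}(u,v)=2\bigl(\overline{u(\alpha)}-\overline{u(\beta)}\bigr),\qquad\forall u\in\mathcal{H}_{\mathcal{E}}.
\]
The right-hand side is a conjugate-linear functional of $u$, so producing $v$ becomes a Riesz-representation problem in the Hilbert space $\mathcal{H}_{\mathcal{E}}$, provided the functional $u\mapsto u(\alpha)-u(\beta)$ is $\mathcal{E}_{c}$-bounded.

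To establish that boundedness I would fix a finite path $\alpha=x_{0},x_{1},\ldots,x_{n}=\beta$ (available since the component containing $\alpha,\beta$ is connected), write $u(\alpha)-u(\beta)=\sum_{i=0}^{n-1}(u(x_{i})-u(x_{i+1}))$, and apply Cauchy--Schwarz with the weights $c(x_{i}x_{i+1})$:
\[
|u(\alpha)-u(\beta)|^{2}\leq\Bigl(\sum_{i=0}^{n-1}\tfrac{1}{c(x_{i}x_{i+1})}\Bigr)\sum_{i=0}^{n-1}c(x_{i}x_{i+1})\,|u(x_{i})-u(x_{i+1})|^{2}\leq R_{\alpha\beta}\,\mathcal{E}_{c}(u),
\]
where $R_{\alpha\beta}=\sum_{i}1/c(x_{i}x_{i+1})$ is the resistance along the chosen path. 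Note this also shows the point-difference is a genuine well-defined quantity on elements of $\mathcal{H}_{\mathcal{E}}$, which are defined only modulo constants. With boundedness secured, Riesz yields a unique $v\in\mathcal{H}_{\mathcal{E}}$ with $\mathcal{E}_{c}(u,v)=2(\overline{u(\alpha)}-\overline{u(\beta)})$ for all $u\in\mathcal{H}_{\mathcal{E}}$.

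Next I would confirm that this $v$ solves \eqref{Eq5.19} \emph{pointwise}, as $\Delta_{c}v$ is defined pointwise by the finite sums $\sum_{y\sim x}c(xy)(v(x)-v(y))$. Testing the Riesz identity against $u=\delta_{x}$, the left side expands and collapses, using $c(xy)=c(yx)$ and finiteness of $\operatorname*{nbh}(x)$, to $\mathcal{E}_{c}(\delta_{x},v)=2\sum_{y\sim x}c(xy)(v(x)-v(y))=2(\Delta_{c}v)(x)$, while the right side equals $2(\delta_{\alpha}-\delta_{\beta})(x)$; comparing gives $\Delta_{c}v=\delta_{\alpha}-\delta_{\beta}$. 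The energy formula \eqref{Eq5.20} then drops out by putting $u=v$ in the Riesz identity: $\mathcal{E}_{c}(v)=\mathcal{E}_{c}(v,v)=2(v(\alpha)-v(\beta))$ for real-valued $v$, which recovers \eqref{Eq5.18} in the special case of Example \ref{Examp5.2}.

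For uniqueness I would mirror the argument used for Example \ref{Examp5.2}: if $v,w\in\mathcal{H}_{\mathcal{E}}$ both solve \eqref{Eq5.19} and $h=v-w$, then $\Delta_{c}h=0$, so \eqref{Eq5.13} gives $\mathcal{E}_{c}(u,h)=2\langle u,\Delta_{c}h\rangle_{\ell^{2}}=0$ for every $u\in\mathcal{D}$; since $(G,c)$ is connected, $\mathcal{D}$ is dense in $\mathcal{H}_{\mathcal{E}}$, whence $\mathcal{E}_{c}(h)=0$, $h$ is constant, and $h=0$ in $\mathcal{H}_{\mathcal{E}}$. The steps I expect to be the main obstacle are precisely the two analytic inputs on which existence-via-Riesz and uniqueness rest: the boundedness estimate, which requires a connecting path of finite path-resistance, and the density of the finitely supported functions $\mathcal{D}$ in the energy space $\mathcal{H}_{\mathcal{E}}$. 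The latter is the genuinely delicate point, since finite-energy functions need not be $\ell^{2}$ and the constants must be modded out carefully.
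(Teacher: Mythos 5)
Your proof follows essentially the same route as the paper's: existence via the Riesz representation theorem in the energy Hilbert space $\mathcal{H}_{\mathcal{E}}$, using the identical path Cauchy--Schwarz estimate (your $R_{\alpha\beta}$ is exactly the paper's $\sum_{i}\Omega\left(  e_{i}\right)  $ built from Ohm's law), together with uniqueness via connectedness and density of $\mathcal{D}$ in $\mathcal{H}_{\mathcal{E}}$. If anything, your write-up is slightly more careful at two points the paper compresses: you recover the pointwise equation (\ref{Eq5.19}) by testing the Riesz identity against $\delta_{x}$ rather than appealing to density, and you derive (\ref{Eq5.20}) explicitly by setting $u=v$.
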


\begin{proof}
The argument for \textit{uniqueness} is the same as in the previous proof.

To prove \textit{existence}, we will appeal to Riesz' theorem for the energy
Hilbert space $\mathcal{H}_{\mathcal{E}}$.

Hence, we must show that there is a finite constant $K$ such that
\begin{equation}
\left\vert u\left(  \alpha\right)  -u\left(  \beta\right)  \right\vert \leq
K\mathcal{E}_{c}\left(  u\right)  ^{\frac{1}{2}}\text{ for all }%
u\in\mathcal{D}\text{.} \label{Eq5.21}%
\end{equation}

Motivated by Ohm's law, we set $\Omega\left(  e\right)  :\,=c\left(  e\right)
^{-1},~\forall e\in G^{\left(  1\right)  }$. By the assumptions in section
\ref{MainTheo}, we may pick a finite subset $x_{0},x_{1},x_{2},\ldots,x_{n}$
in $G^{\left(  0\right)  }$ such
\begin{equation}
\left\{
\begin{array}
[c]{l}%
x_{0}=\alpha,~x_{n}=\beta\text{, and }\\
e_{i}=\left(  x_{i}\,x_{i+1}\right)  \in G^{\left(  1\right)  },~i=0,1,\ldots
,n-1\text{.}%
\end{array}
\right.  \label{Eq5.22}%
\end{equation}
Then%
\begin{align*}
\left\vert u\left(  \alpha\right)  -u\left(  \beta\right)  \right\vert  &
\leq\sum_{i=0}^{n-1}\left\vert u\left(  x_{i}\right)  -u\left(  x_{i+1}%
\right)  \right\vert \smallskip\\
&  \leq\left(  \sum_{i=0}^{n-1}\Omega\left(  e_{i}\right)  \right)  ^{\frac
{1}{2}}\left(  \sum_{i=0}^{n-1}c\left(  e_{i}\right)  \left\vert u\left(
x_{i}\right)  -u\left(  x_{i+1}\right)  \right\vert ^{2}\right)  ^{\frac{1}%
{2}}~\text{{\small (by Schwarz)}}\\
&  \leq\left(  \sum_{i=1}^{n-1}\Omega\left(  e_{i}\right)  \right)  ^{\frac
{1}{2}}\mathcal{E}_{c}\left(  u\right)  ^{\frac{1}{2}}\text{.}%
\end{align*}

To get a finite constant $K$ in (\ref{Eq5.21}), we may take the infimum over
all paths subject to conditions (\ref{Eq5.22}), connecting $\alpha$ to $\beta$.

An application of Riesz' lemma to $\mathcal{H}_{\mathcal{E}}$ yields a unique
$v\in\mathcal{H}_{\mathcal{E}}$ such that for all $u\in\mathcal{D}$, we have
the following identity:%
\begin{align*}
u\left(  \alpha\right)  -u\left(  \beta\right)   &  =\frac{1}{2}%
\mathcal{E}_{c}\left(  v,u\right) \\
&  =\langle\Delta_{c}v,u\rangle~\text{{\small (by (\ref{Eq5.13})).}}%
\end{align*}
Using again density of $\mathcal{D}$ in $\mathcal{H}_{\mathcal{E}}$, we get
the desired conclusion
\begin{equation}
\Delta_{c}v=\delta_{\alpha}-\delta_{\beta}\text{.} \label{Eq5.23}%
\end{equation}

\end{proof}

\begin{corollary}
\label{Cor5.4}Let $\left(  G,c\right)  $ satisfy the conditions in the
theorem. Let $\alpha,\beta\in G^{\left(  0\right)  }$, and let $v\in
\mathcal{H}_{\mathcal{E}}$ be the solution \emph{(}potential\emph{)}\thinspace
to
\[
\Delta_{c}v=\delta_{\alpha}-\delta_{\beta}\text{.}%
\]
Then
\begin{equation}
\mathcal{E}_{c}\left(  v\right)  \leq2\inf\limits_{\left(  e_{i}\right)  }%
\sum_{i=0}^{n-1}\Omega\left(  e_{i}\right)  \label{Eq5.24}%
\end{equation}
where $e_{0},e_{1},\ldots,e_{n-1}\in G^{\left(  1\right)  }$ is a system of
edges connecting $\alpha$ to $\beta$, i.e., satisfying the conditions listed
in \emph{(}\ref{Eq5.22}\emph{)}.
\end{corollary}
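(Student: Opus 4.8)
The plan is to apply the path/Cauchy--Schwarz estimate underlying the proof of Theorem~\ref{Theo5.3}, but now to the voltage potential $v$ itself rather than to an arbitrary test function $u\in\mathcal{D}$, and then to close the loop using the energy identity~(\ref{Eq5.20}). Fix the pair $\alpha,\beta$ and let $v\in\mathcal{H}_{\mathcal{E}}$ be the potential supplied by Theorem~\ref{Theo5.3}. I would first fix an admissible edge path $x_0=\alpha,x_1,\ldots,x_n=\beta$ with $e_i=(x_i\,x_{i+1})$ as in~(\ref{Eq5.22}), taking it to be \emph{simple} so that the edges $e_0,\ldots,e_{n-1}$ are pairwise distinct. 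The key preliminary observation is that the telescoping and Schwarz steps below involve only the finitely many vertices lying on this path together with the finite number $\mathcal{E}_c(v)$, and only differences $v(x_i)-v(x_{i+1})$ and $v(\alpha)-v(\beta)$ enter; hence they apply verbatim to $v$ even though $v\notin\mathcal{D}$, and they are well defined for $v$ as an element of $\mathcal{H}_{\mathcal{E}}$ (i.e. modulo constants), so no separate density argument is needed.

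I would begin from the identity $v(\alpha)-v(\beta)=\tfrac12\mathcal{E}_c(v)$ furnished by~(\ref{Eq5.20}), which in particular shows $v(\alpha)-v(\beta)\ge 0$. Telescoping $v(\alpha)-v(\beta)=\sum_{i=0}^{n-1}\bigl(v(x_i)-v(x_{i+1})\bigr)$ along the path, applying the triangle inequality, inserting the factor $\Omega(e_i)^{1/2}c(e_i)^{1/2}=1$, and applying Schwarz exactly as in the proof of Theorem~\ref{Theo5.3}, I obtain
\[
\tfrac12\mathcal{E}_c(v)=\bigl|v(\alpha)-v(\beta)\bigr|\le\Big(\sum_{i=0}^{n-1}\Omega(e_i)\Big)^{1/2}\Big(\sum_{i=0}^{n-1}c(e_i)\,\bigl|v(x_i)-v(x_{i+1})\bigr|^2\Big)^{1/2}.
\]
The crux is then to bound the second factor: since the $e_i$ are distinct and the sum defining $\mathcal{E}_c$ in~(\ref{Eq5.3}) runs over \emph{both} orientations of every edge, the path sum $\sum_i c(e_i)\,|v(x_i)-v(x_{i+1})|^2$ is at most $\tfrac12\mathcal{E}_c(v)$. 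Writing $R:=\sum_{i=0}^{n-1}\Omega(e_i)$ and cancelling a common factor of $\bigl(\tfrac12\mathcal{E}_c(v)\bigr)^{1/2}$ (the degenerate case $\alpha=\beta$, where $\mathcal{E}_c(v)=0$, being trivial) leaves $\tfrac12\mathcal{E}_c(v)\le R$, i.e. $\mathcal{E}_c(v)\le 2\sum_{i=0}^{n-1}\Omega(e_i)$. Taking the infimum over all admissible paths then yields~(\ref{Eq5.24}).

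The one point demanding care --- and the main obstacle to obtaining the asserted constant rather than a weaker one --- is the bookkeeping of the factor $\tfrac12$. If one bounds the path sum by the full $\mathcal{E}_c(v)$, as the chain of inequalities in the proof of Theorem~\ref{Theo5.3} is literally written, one only gets $\mathcal{E}_c(v)\le 4R$; the sharp constant $2$ comes precisely from noting that a simple path traverses each edge once whereas $\mathcal{E}_c$ counts each edge twice. That this constant is optimal is confirmed by Example~\ref{Examp5.2}, where the unique path from $0$ to $k$ has resistance $\sum\Omega(e_i)=k$ while~(\ref{Eq5.18}) gives $\mathcal{E}_c(v)=2k$, so~(\ref{Eq5.24}) holds with equality. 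Restricting to simple paths is harmless, since deleting a cycle from any path only decreases $\sum\Omega(e_i)$, so the infimum over all admissible paths coincides with the infimum over simple ones.
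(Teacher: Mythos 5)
Your proof is correct, but it takes a genuinely different route from the paper's. The paper's proof is by duality: it reuses the test-function estimate (\ref{Eq5.21}), $\left\vert u\left( \alpha\right) -u\left( \beta\right) \right\vert \le \left( \sum_{i}\Omega\left( e_{i}\right) \right) ^{1/2}\mathcal{E}_{c}\left( u\right) ^{1/2}$ for $u\in\mathcal{D}$, from the proof of Theorem \ref{Theo5.3}, together with the Riesz identity $u\left( \alpha\right) -u\left( \beta\right) =\frac{1}{2}\mathcal{E}_{c}\left( v,u\right) $ and the variational formula (\ref{Eq5.25}), $\sup_{\mathcal{E}_{c}\left( u\right) =1}\left\vert \mathcal{E}_{c}\left( u,v\right) \right\vert ^{2}=\mathcal{E}_{c}\left( v\right) $. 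You instead apply the telescoping/Schwarz step directly to the potential $v$ itself, invoke the energy identity (\ref{Eq5.20}), and cancel a common factor of $\left( \frac{1}{2}\mathcal{E}_{c}\left( v\right) \right) ^{1/2}$; this bypasses the density-plus-duality step entirely. More importantly, your bookkeeping of the factor $\frac{1}{2}$ is not pedantry: if one combines (\ref{Eq5.21}) and (\ref{Eq5.25}) exactly as the paper's two-line proof indicates, one obtains $\mathcal{E}_{c}\left( v\right) ^{1/2}=\sup_{\mathcal{E}_{c}\left( u\right) =1}\left\vert \mathcal{E}_{c}\left( u,v\right) \right\vert \le 2\left( \sum_{i}\Omega\left( e_{i}\right) \right) ^{1/2}$, i.e.\ the constant $4$ in (\ref{Eq5.24}) rather than $2$. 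The stated constant $2$ requires precisely your observation that a simple path traverses each edge once while the double sum (\ref{Eq5.3}) counts every edge twice, so the middle Schwarz factor is bounded by $\frac{1}{2}\mathcal{E}_{c}$ rather than by $\mathcal{E}_{c}$. The same refinement can be inserted into the paper's duality argument (sharpen (\ref{Eq5.21}) to $\left\vert u\left( \alpha\right) -u\left( \beta\right) \right\vert \le \left( \sum_{i}\Omega\left( e_{i}\right) \right) ^{1/2}\left( \frac{1}{2}\mathcal{E}_{c}\left( u\right) \right) ^{1/2}$ for simple paths), so both routes, properly carried out, yield the sharp constant --- sharpness being confirmed, as you note, by Example \ref{Examp5.2}, where (\ref{Eq5.18}) gives equality. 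Your reduction to simple paths via cycle deletion is also sound, since the infimum is unchanged.
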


\begin{proof}
This follows from the previous proof combined with the fact that
\begin{equation}
\sup\limits_{\mathcal{E}_{c}\left(  u\right)  =1}\left\vert \mathcal{E}%
_{c}\left(  u,v\right)  \right\vert ^{2}=\mathcal{E}_{c}\left(  v\right)
\text{.} \label{Eq5.25}%
\end{equation}
\medskip
\end{proof}

\subsection{\textbf{A matrix representation\label{MatrixRep}}}

While $\Delta_{c}$ may be understood as an operator, it is also an
$\infty\times\infty$ matrix. Since the set $\operatorname*{nbh}\left(
x\right)  \subset G^{\left(  0\right)  }$ is finite for all $x\in G^{\left(
0\right)  }$, $\Delta_{c}$ is a \textit{banded matrix}. To see this, note that
when $x\in G^{\left(  0\right)  }$ is fixed, the summation
\begin{equation}
\left(  \Delta_{c}v\right)  \left(  x\right)  =\sum_{y\sim x}c\left(
xy\right)  \left(  v\left(  x\right)  -v\left(  y\right)  \right)
\label{Eq5.26}%
\end{equation}
is finite for all functions $v:G^{\left(  0\right)  }\rightarrow\mathbb{C}$.

Since $G$ is assumed connected, the only bounded solutions $v$ to the
equation
\begin{equation}
\Delta_{c}v=0 \label{Eq5.27}%
\end{equation}
are the constants.

Solutions $v$ to (\ref{Eq5.27}) are called \textit{harmonic}, or $c$-harmonic.

There are examples of systems $\left(  G,c\right)  $ which are connected and
have unbounded non-constant harmonic functions, e.g., models with $G^{\left(
0\right)  }=\mathbb{Z}^{3}$, or tree-models.

In the general case, introducing
\begin{equation}
\mathcal{B}_{c}\left(  x\right)  :\,=\sum_{y\sim x}c\left(  xy\right)  ,~x\in
G^{\left(  0\right)  }\text{;} \label{Eq5.28}%
\end{equation}
we see that (\ref{Eq5.26}) takes the following form
\begin{equation}
\left(  \Delta_{c}v\right)  \left(  x\right)  =\mathcal{B}_{c}\left(
x\right)  v\left(  x\right)  -\sum_{y\sim x}c\left(  xy\right)  v\left(
y\right)  \text{.} \label{Eq5.29}%
\end{equation}

Hence eq. (\ref{Eq5.27}) may be rewritten as
\begin{equation}
v\left(  x\right)  =\frac{1}{\mathcal{B}_{c}\left(  x\right)  }\sum_{y\sim
x}c\left(  xy\right)  v\left(  y\right)  \text{.} \label{Eq5.29.5}%
\end{equation}
It follows that harmonic functions on $G^{\left(  0\right)  }$ satisfy a mean
value property. At every $x\in G^{\left(  0\right)  }$ formula (\ref{Eq5.29.5}%
) expresses $v\left(  x\right)  $ as a convex combination of its values on the
set $\operatorname*{nbh}\left(  x\right)  $.

In matrix language, $x\rightarrow\mathcal{B}_{c}\left(  x\right)  $ represents
the diagonal matrix-entries; and $c\left(  xy\right)  $ the off-diagonal
entries. Since $\{y\in G^{\left(  0\right)  }|c\left(  xy\right)  \not =0\}$
is finite, we say that the matrix for $\Delta_{c}$ is \textit{banded}. It is
clear that products of banded matrices are again banded; and in particular
that the summations involved in matrix-products of banded matrices are all
finite. Hence, each of the operators $\Delta_{c},\Delta_{c}^{2},\Delta_{c}%
^{3},\ldots$, is banded. Since by Theorem \ref{Theo3.1}, $\Delta_{c}$ is
selfadjoint as an operator in $\ell^{2}\left(  G^{\left(  0\right)  }\right)
$, the fractional power $\Delta_{c}^{1/2}$ is well defined by the Spectral
Theorem. The matrix-entries of $\Delta_{c}^{1/2}$ are the numbers
\begin{equation}
\langle\delta_{x},\Delta_{c}^{1/2}\delta_{y}\rangle_{\ell^{2}}=\left(
\Delta_{c}^{1/2}\delta_{y}\right)  \left(  x\right)  ,~x,y\in G^{\left(
0\right)  }\text{.} \label{Eq5.30}%
\end{equation}
It can be checked that if $G$ is infinite, the matrix for $\Delta_{c}^{1/2}$
is typically \textit{not} banded. The same conclusion applies to $\Delta
_{c}^{s}$ when $s\in\mathbb{R}\diagdown\mathbb{N}$.

\subsection{\textbf{Example 5.2 revisited\label{Ex5.2Revisit}}}

The system $\left(  G,c\right)  $ in Example \ref{Examp5.2} does not have
non-constant harmonic functions. This can be seen from the representation of
$\Delta$ (in Ex. \ref{Examp5.2}) as a $\mathbb{Z}\times\mathbb{Z}$ double
infinite matrix, i.e.,
\begin{align*}
\left(  \Delta v\right)  \left(  n\right)   &  =2v\left(  n\right)  -v\left(
n-1\right)  -v\left(  n+1\right) \\
&  =v\left(  n\right)  -v\left(  n-1\right)  +v\left(  n\right)  -v\left(
n+1\right) \\
&  =\sum_{m\sim n}v\left(  n\right)  -v\left(  m\right)  ,~n\in\mathbb{Z}.
\end{align*}
In matrix form, $\Delta$ from Example \ref{Examp5.2} is as follows:%

\[
\left(
\begin{array}
[c]{rrrrrrrrrrr}%
\ddots & \ddots & \ddots &  &  & \cdots &  &  &  & \vdots & \\
& -1 & 2 & -1 & 0 & 0 & 0 & 0 & 0 & 0 & \\
& 0 & -1 & 2 & -1 & 0 & 0 & 0 & 0 & 0 & \\
& 0 & 0 & -1 & 2 & -1 & 0 & 0 & 0 & 0 & \\
& 0 & 0 & 0 & -1 & 2 & -1 & 0 & 0 & 0 & \\
& 0 & 0 & 0 & 0 & -1 & 2 & -1 & 0 & 0 & \\
& 0 & 0 & 0 & 0 & 0 & -1 & 2 & -1 & 0 & \\
& 0 & 0 & 0 & 0 & 0 & 0 & -1 & 2 & -1 & \\
& \vdots &  &  &  & \cdots &  &  & \ddots & \ddots & \ddots
\end{array}
\right)
\]

Using Fourier series
\begin{equation}
f\left(  x\right)  =\sum_{n\in\mathbb{Z}\,}v\left(  n\right)  e^{inx}\in
L^{2}\left(  -\pi,\pi\right)  \text{;} \label{Eq5.31}%
\end{equation}%
\begin{equation}
\sum_{n\in\mathbb{Z}\,}\left\vert v\left(  n\right)  \right\vert ^{2}=\frac
{1}{2\pi}\int_{-\pi}^{\pi}\left\vert f\left(  x\right)  \right\vert
^{2}~dx\text{;} \label{Eq5.32}%
\end{equation}
we arrive at the representation
\begin{align}
\left(  \tilde{\Delta}f\right)  \left(  x\right)   &  =2\left(  1-\cos
x\right)  f\left(  x\right) \label{Eq5.33}\\
&  =4\sin^{2}\left(  \frac{x}{2}\right)  f\left(  x\right)  \text{, }\nonumber
\end{align}
proving that $\Delta$ has Lebesgue spectrum, and
\begin{equation}
\operatorname*{spec}\limits_{\ell^{2}}\left(  \Delta\right)
=\operatorname*{spec}\limits_{L^{2}}\left(  \tilde{\Delta}\right)  =\left[
0,4\right]  \text{.} \label{Eq5.34}%
\end{equation}

\subsection{\textbf{Banded Matrices (A Preview)}\label{BandedMat}}

It is immediate from the matrix representation for $\Delta_{c}$ in Example
\ref{Examp5.2} that it has a \textit{banded} form. We will take up banded
infinite matrices in detail in section \ref{Append} below.

Since $\Delta_{c}$ is selfadjoint, its square-root $\Delta_{c}^{1/2}$ is a
well defined operator. However its matrix representation is typically
\textit{not} banded; see (\ref{Eq5.30}). For $\Delta_{c}^{1/2}$ in Ex.
\ref{Examp5.2}, one can check that the $\left(  m,n\right)  $-matrix entries
are
\[
\left(  \Delta_{c}^{\frac{1}{2}}\right)  _{m,n}\simeq\frac{1}{4\cdot\left(
n-m\right)  ^{2}+1}\text{.}%
\]

\subsection{\textbf{Extended Hilbert Spaces\label{ExtHilbertSpaces}}}

To understand solutions $v$ to operator equations like
\[
\Delta_{c}v=\delta_{\alpha}-\delta_{\beta}%
\]
as in (\ref{Eq5.23}), potential functions it is convenient to extend the
Hilbert space $\ell^{2}\left(  G^{\left(  0\right)  }\right)  $. Indeed we saw
in Example \ref{Examp5.2} that the solutions $v$ to equations like
(\ref{Eq5.23}) are typically not in $\ell^{2}\left(  G^{\left(  0\right)
}\right)  $.

\begin{definition}
\label{Def5.5}The space $\mathcal{H}_{c}\left(  s\right)  $.
\end{definition}

A function $v:G^{\left(  0\right)  }\rightarrow\mathbb{C}$ is said to belong
to the space $\mathcal{H}_{c}\left(  s\right)  $ if there is a finite constant
$K=K\left(  s\right)  $ such that the following estimate holds:
\begin{equation}
\left\vert \sum_{x\in G^{\left(  0\right)  }}\overline{v\left(  x\right)
}\left(  \Delta_{c}^{s}u\right)  \left(  x\right)  \right\vert ^{2}\leq
K\left(  s\right)  \sum_{x\in G^{\left(  0\right)  }}\left\vert u\left(
x\right)  \right\vert ^{2}~\text{for all }u\in\mathcal{D}\text{.}
\label{Eq5.35}%
\end{equation}

If (\ref{Eq5.35}) holds, then by Riesz, there is a unique $w\in\ell^{2}\left(
G^{\left(  0\right)  }\right)  $ such that
\begin{equation}
\sum_{x\in G^{\left(  0\right)  }}\overline{v\left(  x\right)  }\left(
\Delta_{c}^{s}u\right)  \left(  x\right)  =\langle w,u\rangle_{\ell^{2}%
}~\text{for all }u\in\mathcal{D~}\text{(}\subset\ell^{2}\text{);}
\label{Eq5.36}%
\end{equation}
and we set
\begin{equation}
\left\Vert v\right\Vert _{\mathcal{H}_{c}^{\left(  s\right)  }}:\,=\left\Vert
w\right\Vert _{\ell^{2}\left(  G^{0}\right)  }\text{.} \label{Eq5.37}%
\end{equation}
By abuse of notation, we will write $\Delta_{c}^{s}v=w$ when $v\in
\mathcal{H}_{c}\left(  s\right)  $.

If two functions $v_{i}$ for $i=1,2$ are in $\mathcal{H}_{c}\left(  s\right)
$, and if $\Delta_{c}^{s}v_{i}=w_{i}\in\ell^{2}\left(  G^{\left(  0\right)
}\right)  $, we set
\begin{align}
\langle v_{1},v_{2}\rangle_{\mathcal{H}_{c}\left(  s\right)  }  &  :\,=\langle
w_{1},w_{2}\rangle_{\ell^{2}}\label{Eq5.38}\\
&  =\sum_{x\in G^{\left(  0\right)  }}\overline{w_{1}\left(  x\right)  }%
w_{2}\left(  x\right)  \text{.}\nonumber
\end{align}

\begin{remark}
\label{Rem5.5.5}We proved in section \ref{MainTheo} that
\[
\sum_{x\in G^{\left(  0\right)  }}\left(  \Delta_{c}u\right)  \left(
x\right)  =0~\text{for }\forall u\in\mathcal{D}\text{.}%
\]
Hence the constant function $v_{1}\left(  x\right)  \equiv1$ on $G^{\left(
0\right)  }$ is in $\mathcal{H}_{c}\left(  1\right)  $, and $\left\Vert
v_{1}\right\Vert _{\mathcal{H}_{c}\left(  1\right)  }=0$. Hence in considering
the extension spaces, we shall work modulo the constant functions on
$G^{\left(  0\right)  }$.
\end{remark}

\begin{theorem}
\label{Theo5.6}For every $s\in\mathbb{R}$, the space $\mathcal{H}_{c}\left(
s\right)  $ is a Hilbert space when equipped with the inner product
\emph{(}\ref{Eq5.38}\emph{)}, and the norm \emph{(}\ref{Eq5.37}\emph{)}.
\end{theorem}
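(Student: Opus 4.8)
The plan is to realize $\mathcal{H}_c(s)$ as (a quotient of) a subspace of $\ell^2(G^{(0)})$ through the map that is already implicit in Definition \ref{Def5.5}, and then to reduce the Hilbert-space claim to the closedness of its image. Concretely, I would first record that the assignment $T\colon v\mapsto w=\Delta_c^s v$ furnished by (\ref{Eq5.36}) is well defined: for a given $v\in\mathcal{H}_c(s)$ the estimate (\ref{Eq5.35}) together with Riesz' theorem produces $w\in\ell^2(G^{(0)})$, and since $\mathcal{D}$ is dense in $\ell^2(G^{(0)})$ this $w$ is unique. Linearity of $T$ is immediate from linearity of $u\mapsto\Delta_c^s u$ and of the pairing, which also shows that $\mathcal{H}_c(s)$ is a complex vector space (if $v_1,v_2$ obey (\ref{Eq5.35}) with constants $K_1,K_2$, then $v_1+\lambda v_2$ obeys it and $T(v_1+\lambda v_2)=w_1+\lambda w_2$). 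By construction the form (\ref{Eq5.38}) and the norm (\ref{Eq5.37}) are exactly the pullbacks under $T$ of the inner product and norm of $\ell^2(G^{(0)})$, so sesquilinearity, the Cauchy--Schwarz inequality and the triangle inequality are inherited at once.

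Next I would deal with positivity. Since $\langle v,v\rangle_{\mathcal{H}_c\left(s\right)}=\|w\|_{\ell^2}^2\ge 0$, the form (\ref{Eq5.38}) is a priori only a semi-inner product, with null space $N:=\ker T=\{v:\Delta_c^s v=0\}$. By Remark \ref{Rem5.5.5}, $N$ contains the constant functions (for $s=1$, and similarly the relevant null functions for other $s$), so, exactly as anticipated there, the construction is understood on the quotient $\mathcal{H}_c(s)/N$; on this quotient $T$ descends to an injective linear isometry into $\ell^2(G^{(0)})$, and (\ref{Eq5.38}) becomes a genuine inner product. It therefore remains only to prove completeness, and here is the key reduction: an isometric image of a normed space is complete if and only if it is closed, so $\mathcal{H}_c(s)/N$ is a Hilbert space if and only if the range $T(\mathcal{H}_c(s))$ is a closed subspace of $\ell^2(G^{(0)})$.

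The main work is thus to show that $T$ has closed range, and this is the step I expect to be the real obstacle. Given a Cauchy sequence $(v_n)$ in $\mathcal{H}_c(s)$, the images $w_n=\Delta_c^s v_n$ form a Cauchy sequence in $\ell^2(G^{(0)})$ and hence converge to some $w\in\ell^2(G^{(0)})$; I must produce a function $v$ on $G^{(0)}$ with $v\in\mathcal{H}_c(s)$ and $\Delta_c^s v=w$, i.e. $\sum_x\overline{v(x)}(\Delta_c^s u)(x)=\langle w,u\rangle_{\ell^2}$ for every $u\in\mathcal{D}$. Passing to the limit in (\ref{Eq5.36}) shows that the functional $u\mapsto\langle w,u\rangle_{\ell^2}$ is $\ell^2$-bounded, but the content is to represent it by an \emph{honest} function $v$ through the weighted pairing against $\Delta_c^s u$. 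I would normalize the representatives modulo $N$ by fixing the value at a base vertex $o\in G^{(0)}$ (legitimate because $N$ absorbs constants), then establish pointwise convergence $v_n(x)\to v(x)$ for each $x$ and define $v$ as this pointwise limit; the delicate point is that recovering $v$ from the $\ell^2$-limit $w$ amounts to inverting $\Delta_c^s$ in the extended sense, where the solution is permitted to leave $\ell^2(G^{(0)})$ while remaining in $\mathcal{H}_c(s)$ (as already illustrated by the voltage potentials of Example \ref{Examp5.2} and Theorem \ref{Theo5.3}).

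Finally, one verifies that the limit $v$ satisfies (\ref{Eq5.35}) with $w$ as its representative, giving $Tv=w$ and $v_n\to v$ in $\mathcal{H}_c(s)$; this closes the range and completes the proof. A minor bookkeeping issue to carry along is that for non-integer $s$ the banded structure is lost, so $\Delta_c^s u$ need not be finitely supported and the absolute convergence of the pairing in (\ref{Eq5.35}) must be treated as part of membership in $\mathcal{H}_c(s)$; the structural argument above, however, is uniform in $s$, and the only genuinely substantive ingredient is the closed-range/limit-recovery step just described.
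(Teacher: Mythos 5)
Your structural reduction is correct and in fact parallels the paper's own (very terse) argument: pull the $\ell^2$ inner product back along the Riesz map $T\colon v\mapsto w$ of (\ref{Eq5.36}), pass to the quotient by $N=\ker T$, and identify completeness of $\mathcal{H}_c(s)/N$ with closedness of $\operatorname{ran}(T)$ in $\ell^2(G^{(0)})$. The genuine gap sits exactly at the step you flag as the obstacle, and the mechanism you propose for it would fail. The kernel $N$ is far larger than the constants of Remark \ref{Rem5.5.5}: testing (\ref{Eq5.36}) against $u=\delta_y$ shows, for $s=1$, that $N$ consists of \emph{all} pointwise solutions of $\Delta_c v=0$; already in Example \ref{Examp5.2} ($G=\mathbb{Z}$, $c\equiv 1$) this contains $v(n)=n$, and on $\mathbb{Z}^3$ or on trees it is infinite dimensional. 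Consequently, fixing the value of representatives at one base vertex $o$ does not normalize them modulo $N$, and pointwise convergence of representatives is not even a well-defined property of a Cauchy class: on $\mathbb{Z}$ the functions $v_k(n)=(-1)^k n$ satisfy $v_k(0)=0$ and $Tv_k=0$, so $(v_k)$ is Cauchy (indeed null) for the seminorm (\ref{Eq5.37}), yet $v_k(1)=(-1)^k$ converges nowhere. There is no pointwise-limit route to the limit object; the actual content of the step is to \emph{solve} $\Delta_c^s v=w$ weakly for an arbitrary $\ell^2$-limit $w$, with $v$ permitted to leave $\ell^2$.

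Here is what closes the gap when $s$ is a positive integer (the case where bandedness makes the sum in (\ref{Eq5.35}) finite): $\operatorname{ran}(T)$ is not merely closed, it is all of $\ell^2(G^{(0)})$. The operator $\Delta_c^s$ maps $\mathcal{D}$ into $\mathcal{D}$ and is injective there, since $\Delta_c u=0$ for finitely supported $u$ forces $\mathcal{E}_c(u)=0$ by (\ref{Eq3.11}), hence $u$ is constant on the (infinite) connected components, hence $u\equiv 0$; injectivity of the $s$-fold power follows by iteration. The pointwise (formal) operator $v\mapsto\Delta_c^s v$ on the space of \emph{all} functions on $G^{(0)}$ is exactly the algebraic transpose of $\Delta_c^s|_{\mathcal{D}}$ under the finite pairing $\sum_x v(x)u(x)$ (summation by parts as in Lemma \ref{Lem3.2}), and the transpose of an injective linear map of a vector space is surjective onto the full algebraic dual: any functional defined on $\operatorname{ran}(\Delta_c^s|_{\mathcal{D}})$ extends linearly to $\mathcal{D}$. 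Hence every $w\in\ell^2$ equals $\Delta_c^s v$ pointwise for some function $v$, and this $v$ satisfies (\ref{Eq5.35}) with $K=\|w\|_{\ell^2}^2$ and $Tv=w$; so $\mathcal{H}_c(s)/N$ is isometrically isomorphic to $\ell^2(G^{(0)})$, hence complete. Note this genuinely uses $v\notin\ell^2$: in Example \ref{Examp5.2} the $\ell^2$-operator range of $\Delta_c^s$ is dense but not closed, so no argument confined to $\ell^2$ --- in particular no direct appeal to closedness of the graph of $\Delta_c^s$ plus Riesz, which is all the paper's own proof offers --- can produce the weak solution. For non-integer $s$ bandedness is lost and convergence of the pairing becomes part of the definition, as you note; neither your outline nor the paper's sketch settles that case, and it would require a genuinely different argument.
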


\begin{proof}
The idea in the proof follows closely the construction of Sobolev spaces, by
analogy to the continuous case. The key step in the verification of
completeness of $\mathcal{H}_{c}\left(  s\right)  $ is the essential
selfadjointness of $\Delta_{c}$ as an operator in $\ell^{2}\left(  G^{\left(
0\right)  }\right)  $. As before, we use the same notation $\Delta_{c}$ for
the closure of $\Delta_{c}$, defined initially only on the subspace
$\mathcal{D}$ in $\ell^{2}\left(  G^{\left(  0\right)  }\right)  $. Formulas
(\ref{Eq5.9})--(\ref{Eq5.11}) above now allow us to define the selfadjoint
operator $\Delta_{c}^{s}$; and this operator is closed in the sense that its
graph is closed in $\ell^{2}\left(  G^{\left(  0\right)  }\right)  \times
\ell^{2}\left(  G^{\left(  0\right)  }\right)  $. The completeness of
$\mathcal{H}_{c}\left(  s\right)  $ now follows from this, and an application
of Riesz; see the estimate (\ref{Eq5.36}).
\end{proof}

\begin{corollary}
\label{Cor5.7}Let $\left(  G,c\right)  $ be an infinite graph, and let
$c:G^{\left(  1\right)  }\rightarrow\mathbb{R}_{+}$ be a conductance function
satisfying the axioms above. Let $\alpha,\beta\in G^{\left(  0\right)  }$, and
let $v:G^{\left(  0\right)  }\rightarrow\mathbb{C}$ be a solution to
$\Delta_{c}v=\delta_{\alpha}-\delta_{\beta}$; i.e., to \emph{(}\ref{Eq5.23}%
\emph{)}.

Assume $v\in\mathcal{H}_{\mathcal{E}}$. Then
\begin{equation}
v\in\mathcal{H}_{c}\left(  \frac{1}{2}\right)  \cap\mathcal{H}_{c}\left(
1\right)  \text{\emph{;}} \label{Eq5.39}%
\end{equation}
and we have
\begin{equation}
\left\Vert v\right\Vert _{1/2}^{2}=\frac{1}{2}\mathcal{E}_{c}\left(  v\right)
, \label{Eq5.40}%
\end{equation}
and
\begin{equation}
\left\Vert v\right\Vert _{1}^{2}=2\text{.} \label{Eq5.41}%
\end{equation}

\end{corollary}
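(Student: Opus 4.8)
The plan is to identify, for each $s\in\{\tfrac12,1\}$, the Riesz representative $w_s$ of the functional $u\mapsto\sum_{x}\overline{v(x)}\,(\Delta_c^{s}u)(x)$ on $\mathcal D$ prescribed in Definition \ref{Def5.5}; that is, to exhibit $w_s\in\ell^2(G^{(0)})$ with $\Delta_c^s v=w_s$ in the sense of \eqref{Eq5.36}, and then to read off $\|v\|_s^2=\|w_s\|_{\ell^2}^2$. The two inputs that drive everything are the hypothesis $\Delta_c v=\delta_\alpha-\delta_\beta$, together with the symmetry computation of Lemma \ref{Lem3.2} (which stays valid when one slot is the finitely supported $u$ and the other is the general function $v$, since for fixed $u\in\mathcal D$ only finitely many edges contribute), and the energy identity $\mathcal E_c(v)=2(v(\alpha)-v(\beta))$ from \eqref{Eq5.20}.

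I would dispose of $s=1$ first, where everything is literal. For $u\in\mathcal D$ the function $\Delta_c u$ has finite support, so the defining sum is finite, and the symmetry computation gives $\sum_x\overline{v(x)}(\Delta_c u)(x)=\sum_x\overline{(\Delta_c v)(x)}\,u(x)=\langle\delta_\alpha-\delta_\beta,u\rangle_{\ell^2}=u(\alpha)-u(\beta)$. Since $|u(\alpha)-u(\beta)|\le\sqrt2\,\|u\|_{\ell^2}$, the estimate \eqref{Eq5.35} holds with $K(1)=2$, the Riesz representative is $w_1=\delta_\alpha-\delta_\beta$, and hence $v\in\mathcal H_c(1)$ with $\|v\|_1^2=\|\delta_\alpha-\delta_\beta\|_{\ell^2}^2=2$ (for $\alpha\neq\beta$; the case $\alpha=\beta$ is trivial), which is \eqref{Eq5.41}.

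The case $s=\tfrac12$ is the crux, and the clean heuristic is the formal chain $\|v\|_{1/2}^2=\|\Delta_c^{1/2}v\|_{\ell^2}^2=\langle\Delta_c v,v\rangle=\langle\delta_\alpha-\delta_\beta,v\rangle=v(\alpha)-v(\beta)=\tfrac12\mathcal E_c(v)$, the last step by \eqref{Eq5.20}. The genuine obstacle is that $v$ need not lie in $\ell^2$ (Example \ref{Examp5.2}), so $\Delta_c^{1/2}v$ is meaningful only through the extension space $\mathcal H_c(\tfrac12)$, and the real content is that $\phi:=\delta_\alpha-\delta_\beta$ lies in the range of $\Delta_c^{1/2}$, equivalently in $\operatorname*{dom}(\Delta_c^{-1/2})$ for the selfadjoint closure furnished by Theorem \ref{Theo3.1}. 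I would establish this by regularization: set $v_\epsilon:=(\Delta_c+\epsilon)^{-1}\phi\in\ell^2$ for $\epsilon>0$, so that $\langle\phi,v_\epsilon\rangle=\int_0^\infty(\lambda+\epsilon)^{-1}\,d\|P(\lambda)\phi\|^2$ increases, as $\epsilon\downarrow0$, to $\|\Delta_c^{-1/2}\phi\|_{\ell^2}^2$ by monotone convergence. The key uniform bound comes from testing the energy equation $\tfrac12\mathcal E_c(v,\cdot)=\langle\phi,\cdot\rangle$ (valid on $\mathcal H_{\mathcal E}$ by density of $\mathcal D$ and \eqref{Eq5.21}) against $v_\epsilon\in\operatorname*{dom}(\Delta_c^{1/2})\subset\mathcal H_{\mathcal E}$: combining $\langle\phi,v_\epsilon\rangle=\tfrac12\mathcal E_c(v_\epsilon)+\epsilon\|v_\epsilon\|_{\ell^2}^2$ (from \eqref{Eq5.13}) with the Schwarz inequality \eqref{Eq5.5} yields first $\mathcal E_c(v_\epsilon)\le\mathcal E_c(v)$ and then $\langle\phi,v_\epsilon\rangle=\tfrac12\mathcal E_c(v,v_\epsilon)\le\tfrac12\mathcal E_c(v)$ for every $\epsilon$. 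Hence $\phi\in\operatorname*{dom}(\Delta_c^{-1/2})$ with $\|\Delta_c^{-1/2}\phi\|^2\le\tfrac12\mathcal E_c(v)$.

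Finally I would upgrade this to equality and to the representation \eqref{Eq5.36}. Equality follows because the bound $\mathcal E_c(v_\epsilon)\le\mathcal E_c(v)$ makes $\{v_\epsilon\}$ bounded in $\mathcal H_{\mathcal E}$; extracting a weak limit and identifying it with $v$ (by testing $\tfrac12\mathcal E_c(v_\epsilon,g)=\langle\phi,g\rangle-\epsilon\langle v_\epsilon,g\rangle\to\langle\phi,g\rangle$ for $g\in\mathcal D$, where $\epsilon\langle v_\epsilon,g\rangle\to0$ since $\epsilon\|v_\epsilon\|^2\le\tfrac12\mathcal E_c(v)$) gives $\langle\phi,v_\epsilon\rangle=\tfrac12\mathcal E_c(v,v_\epsilon)\to\tfrac12\mathcal E_c(v)$, so $\|\Delta_c^{-1/2}\phi\|^2=\tfrac12\mathcal E_c(v)$. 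Setting $w_{1/2}:=\Delta_c^{-1/2}\phi$, the same regularization identifies the functional: the spectral multipliers $\sqrt\lambda/(\lambda+\epsilon)\to\lambda^{-1/2}$ converge dominatedly once $\phi\in\operatorname*{dom}(\Delta_c^{-1/2})$, so $\Delta_c^{1/2}v_\epsilon\to w_{1/2}$ in $\ell^2$, and by selfadjointness $\langle v_\epsilon,\Delta_c^{1/2}u\rangle=\langle\Delta_c^{1/2}v_\epsilon,u\rangle\to\langle w_{1/2},u\rangle$ for $u\in\mathcal D$, which is exactly \eqref{Eq5.36} and pins down the (a priori only conditionally defined) pairing $\sum_x\overline{v(x)}(\Delta_c^{1/2}u)(x)$. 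This gives $v\in\mathcal H_c(\tfrac12)$ with $\|v\|_{1/2}^2=\|w_{1/2}\|_{\ell^2}^2=\tfrac12\mathcal E_c(v)$, i.e.\ \eqref{Eq5.40}, and together with the $s=1$ step completes \eqref{Eq5.39}. I expect the delicate point throughout to be precisely the legitimacy of interchanging the limit with the pairing when $v\notin\ell^2$, which is what forces the regularization argument rather than a one-line spectral calculation.
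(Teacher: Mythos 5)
Your proposal is correct, and while your treatment of $s=1$ coincides with the paper's (summation by parts against finitely supported $u$, Riesz representative $\delta_\alpha-\delta_\beta$, hence $\|v\|_1^2=2$), your treatment of $s=\tfrac12$ is a genuinely different — and in fact more careful — argument. The paper proceeds by the direct factorization $\Delta_c^{1/2}u=\Delta_c\bigl(\Delta_c^{-1/2}u\bigr)$, the identity $\mathcal{E}_c\bigl(\Delta_c^{-1/2}u\bigr)=2\|u\|_{\ell^2}^2$ asserted for all $u\in\mathcal{D}$, and Schwarz for $\mathcal{E}_c$, then reads off $\|v\|_{1/2}^2=\|\Delta_c^{1/2}v\|^2=\tfrac12\mathcal{E}_c(v)$. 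That is shorter, but it silently presupposes $\mathcal{D}\subset\operatorname{dom}\bigl(\Delta_c^{-1/2}\bigr)$, which is false in general: in Example \ref{Examp5.2} the vector $\delta_0$ corresponds under Fourier series to the constant function $1$, and $\int_{-\pi}^{\pi}\bigl(4\sin^2(x/2)\bigr)^{-1}dx=\infty$, so $\delta_0\notin\operatorname{dom}\bigl(\Delta^{-1/2}\bigr)$; only differences $\delta_\alpha-\delta_\beta$ lie in that domain. Your resolvent regularization $v_\epsilon=(\Delta_c+\epsilon)^{-1}(\delta_\alpha-\delta_\beta)$ proves exactly the fact the paper needs but does not justify — that $\delta_\alpha-\delta_\beta\in\operatorname{dom}\bigl(\Delta_c^{-1/2}\bigr)$ with $\bigl\|\Delta_c^{-1/2}(\delta_\alpha-\delta_\beta)\bigr\|^2=\tfrac12\mathcal{E}_c(v)$ — using only bounded operators, monotone convergence in the spectral integral, the uniform bound $\mathcal{E}_c(v_\epsilon)\le\mathcal{E}_c(v)$, and weak compactness in $\mathcal{H}_{\mathcal{E}}$; your chain of estimates (the Schwarz bootstrap giving $\mathcal{E}_c(v_\epsilon)\le\mathcal{E}_c(v)$, the vanishing of $\epsilon\langle v_\epsilon,g\rangle$, the identification of the weak limit with $v$ by density of $\mathcal{D}$) is sound. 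So what the paper's route buys is brevity and a formula that makes (\ref{Eq5.40}) look tautological; what yours buys is an actual proof that the relevant vector is in the range of $\Delta_c^{1/2}$, i.e.\ it repairs a real gap. The one caveat both treatments share — and which you correctly flag rather than hide — is that the pairing $\sum_x\overline{v(x)}\bigl(\Delta_c^{s}u\bigr)(x)$ in Definition \ref{Def5.5} is only conditionally defined when $v\notin\ell^2$, so membership in $\mathcal{H}_c(\tfrac12)$ must ultimately be read through the limit identification in (\ref{Eq5.36}); the paper's own manipulations carry the identical ambiguity, so this does not count against your argument relative to the original.
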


\begin{proof}
To prove (\ref{Eq5.39}), we must check the \textit{a priori} estimate
(\ref{Eq5.35}) for $s=1/2$, and $s=1$:%

\[
\text{\textbf{Verification of (\ref{Eq5.35}) for }}\mathbf{s=1/2}%
\]

Let $v$ satisfy the stated conditions, and let $u\in\mathcal{D}$. Then
\begin{align*}
\left\vert \sum_{x\in G^{\left(  0\right)  }}v\left(  x\right)  \left(
\Delta_{c}^{\frac{1}{2}}u\right)  \left(  x\right)  \right\vert  &
=\left\vert \sum_{x\in G^{\left(  0\right)  }}v\left(  x\right)  \Delta
_{c}\Delta_{c}^{-\frac{1}{2}}u\left(  x\right)  \right\vert \\
&  =\frac{1}{2}\left\vert \mathcal{E}_{c}\left(  v,\Delta_{c}^{-\frac{1}{2}%
}u\right)  \right\vert \\
&  \leq\frac{1}{2}\mathcal{E}_{c}\left(  v\right)  ^{\frac{1}{2}}%
\mathcal{E}_{c}\left(  \Delta_{c}^{-\frac{1}{2}}u\right)  ^{\frac{1}{2}}\text{
{\small (Schwarz)}}\\
&  =\frac{1}{\sqrt{2}}\mathcal{E}_{c}\left(  v\right)  ^{\frac{1}{2}%
}\left\Vert u\right\Vert _{\ell^{2}\left(  G^{\left(  0\right)  }\right)
}\text{, }%
\end{align*}
where we used the identity
\begin{align*}
\mathcal{E}_{c}\left(  \Delta_{c}^{-\frac{1}{2}}u\right)   &  =2\left\Vert
u\right\Vert _{\ell^{2}\left(  G^{\left(  0\right)  }\right)  }^{2}\\
&  =2\sum_{x\in G^{\left(  0\right)  }}\left\vert u\left(  x\right)
\right\vert ^{2}%
\end{align*}
valid for $\forall u\in\mathcal{D}$.%

\[
\text{\textbf{Verification of (\ref{Eq5.35}) for }}\mathbf{s=1}%
\]
With $v$ and $u$ as before, we must estimate the summation:%

\begin{align*}
\left\vert \sum_{x\in G^{\left(  0\right)  }}v\left(  x\right)  \left(
\Delta_{c}u\right)  \left(  x\right)  \right\vert  &  =\left\vert \sum_{x\in
G^{\left(  0\right)  }}\left(  \Delta_{c}v\right)  \left(  x\right)  u\left(
x\right)  \right\vert \\
&  =\left\vert \sum_{x\in G^{\left(  0\right)  }}\left(  \delta_{\alpha
}\left(  x\right)  -\delta_{\beta}\left(  x\right)  \right)  u\left(
x\right)  \right\vert \\
&  =\left\vert u\left(  \alpha\right)  -u\left(  \beta\right)  \right\vert \\
&  \leq2\left\Vert u\right\Vert _{\ell^{2}\left(  G^{\left(  0\right)
}\right)  },~\forall u\in\mathcal{D}\text{.}%
\end{align*}

Once (\ref{Eq5.39}) has been checked, the exact formulas (\ref{Eq5.40}) and
(\ref{Eq5.41}) follow:

Firstly,
\begin{align*}
\left\Vert v\right\Vert _{\frac{1}{2}}^{2}  &  =\left\Vert \Delta_{c}%
^{\frac{1}{2}}v\right\Vert _{\ell^{2}}^{2}\\
&  =\langle\Delta_{c}^{\frac{1}{2}}v,\Delta_{c}^{\frac{1}{2}}v\rangle\\
&  =\frac{1}{2}\mathcal{E}_{c}\left(  v\right)  \text{;}%
\end{align*}
and secondly
\begin{align*}
\left\Vert v\right\Vert _{1}^{2}  &  =\left\Vert \Delta_{c}v\right\Vert
_{\ell_{2}}^{2}\\
&  =\left\Vert \delta_{\alpha}-\delta_{\beta}\right\Vert _{\ell^{2}}^{2}\\
&  =2\text{.}%
\end{align*}

\end{proof}

\begin{remark}
\label{Rem5.7}In conclusion \emph{(}\ref{Eq5.39}\emph{)}\thinspace in
Corollary \ref{Cor5.7} is not best possible. In fact, the optimal range of the
fraction $s$ for which the potentials $v$ are in $\mathcal{H}_{c}\left(
s\right)  $ may be computed explicitly in Example \ref{Examp5.2} and related
examples. Details in the next subsection.

In Example \ref{Examp5.2}, $G^{\left(  0\right)  }=\mathbb{Z},$ $G^{\left(
1\right)  }=\left\{  \left(  n,n\pm1\right)  |n\in\mathbb{Z}\right\}  $, and
$c\equiv1$. Let $k\in\mathbb{N}$. The graph Laplacian $\Delta$ is given in
formula \emph{(}\ref{Eq5.15}\emph{)}.

Let $v$ be the unique solution to the potential equation
\begin{equation}
\Delta v=\delta_{0}-\delta_{k}\text{.} \label{Eq5.42}%
\end{equation}

Then $v\in\mathcal{H}\left(  s\right)  $ if and only if $s>1/4$.
\end{remark}

\begin{proof}
Setting
\begin{equation}
v\left(  z\right)  =\sum_{n\in\mathbb{Z}}v_{n}z^{n}\text{, and }%
z=e^{ix}\text{,} \label{Eq5.43}%
\end{equation}
we get
\begin{equation}
v\left(  z\right)  =\frac{z\left(  z^{k}-1\right)  }{\left(  z-1\right)  ^{2}%
}\text{;} \label{Eq5.44}%
\end{equation}
and therefore
\begin{equation}
\left\vert v\left(  x\right)  \right\vert =\frac{\left\vert \sin\left(
\frac{kx}{2}\right)  \right\vert }{\sin^{2}\left(  x/2\right)  }\text{.}
\label{Eq5.45}%
\end{equation}
Since, in the spectral representation, the graph Laplacian $\Delta$ is
multiplication by $4\sin^{2}\left(  x/2\right)  $, the question:
\textquotedblleft For what exponents $s$ is
\begin{equation}
v\in\mathcal{H}\left(  s\right)  \text{?\textquotedblright} \label{Eq5.46}%
\end{equation}
is decided by the asymptotics near $x=0$ of the function $\left(  \Delta
^{s}v\right)  \left(  x\right)  $. Using (\ref{Eq5.45}), we see that
$\Delta^{s}v$ is in $L^{2}\left(  -\pi,\pi\right)  $ if and only if
$x^{2s-1}\in L^{2}$ near $x=0$; and this hold if and only if
\begin{equation}
s>\frac{1}{4} \label{Eq5.47}%
\end{equation}
as claimed.
\end{proof}

\subsection{\textbf{Lattice Models\label{LatticeMods}}}

\begin{example}
\label{Examp5.9}We proved that potential functions are often not in $\ell
^{2}\left(  G^{\left(  0\right)  }\right)  $, but in general the problem is
more subtle.

The setting is a follows:

\begin{itemize}
\item[ ] $G=\left(  G^{\left(  0\right)  },G^{\left(  1\right)  }\right)  ~$a
given graph;

\item[ ] $c:G^{\left(  1\right)  }\rightarrow\mathbb{R}^{+}$ a given
conductance function;

\item[ ] $\Delta_{c}=$ the corresponding graph Laplacian;

\item[ ] $\alpha,\beta\in G^{\left(  0\right)  }$ a fixed pair of vertices,
$\alpha\not =\beta$.
\end{itemize}

With this, we say that a function $v:G^{\left(  0\right)  }\rightarrow
\mathbb{R}$ is a \textit{potential} if
\begin{equation}
\Delta_{c}v=\delta_{\alpha}-\delta_{\beta}\text{.} \label{Eq5.48}%
\end{equation}

\end{example}

In the next result we show that lattice models $\mathbb{Z}^{D}$ with $D>2$
have $\ell^{2}$ potentials.

\subsection{\textbf{Preliminaries\label{Prelim}}}

By $\mathbb{Z}^{D}$ we mean the rank $D$-lattice of vertex points $n=\left(
n_{1},n_{2},\ldots,n_{D}\right)  $, $n_{i}\in\mathbb{Z}$, $i=1,2,\ldots,D$.
Every point $n\in\mathbb{Z}^{D}$ has $2D$ distinct nearest neighbors
\begin{equation}
\left(  n_{1},\ldots,n_{i}\pm1,~n_{i+1},\ldots,n_{D}\right)  \text{,}
\label{Eq5.49}%
\end{equation}
so $\operatorname*{nbh}\left(  n\right)  $ consists of these $2D$ points; and
$G^{\left(  1\right)  }$ is the corresponding set of edges. In the discussion
below, we pick the constant conductance $c\equiv1,$ i.e., a system of
unit-resistors arranged in nearest-neighbor configurations. See Fig. 1 for an
illustration of the simplest lattice configuration, $D=1,2,$ and $3$.
\begin{center}
\includegraphics[
natheight=0.959700in,
natwidth=4.800300in,
height=0.7306in,
width=3.5724in
]%
{EssentialSelfAdjoint/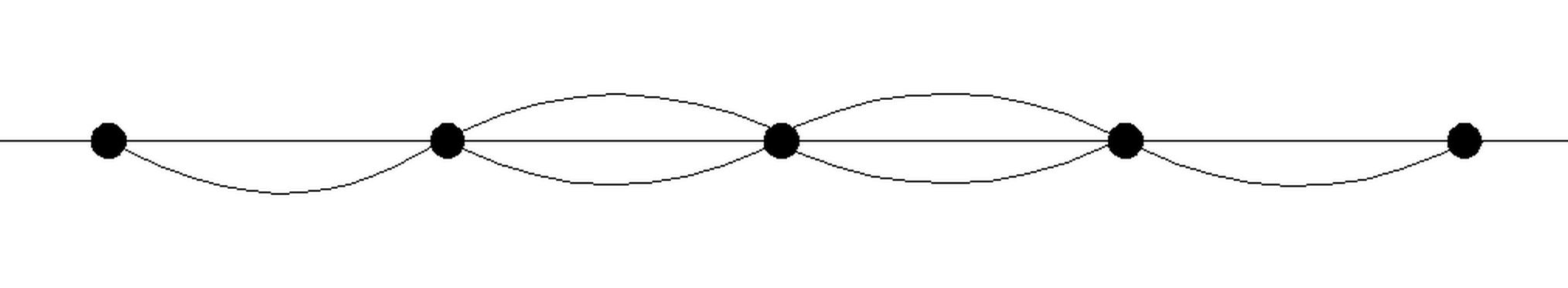}%
\\
Fig. 1a: $D=1$%
\label{Fig1a}%
\end{center}
%

\begin{center}
\includegraphics[
natheight=5.359900in,
natwidth=4.800300in,
height=2.7306in,
width=2.4475in
]%
{EssentialSelfAdjoint/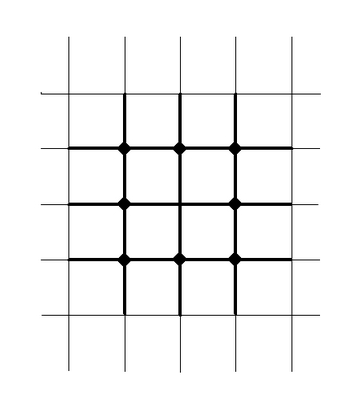}%
\\
Fig. 1b: $D=2$%
\label{Fig1b}%
\end{center}
%

\begin{center}
\includegraphics[
natheight=3.733500in,
natwidth=4.800300in,
height=2.166in,
width=2.7812in
]%
{EssentialSelfAdjoint/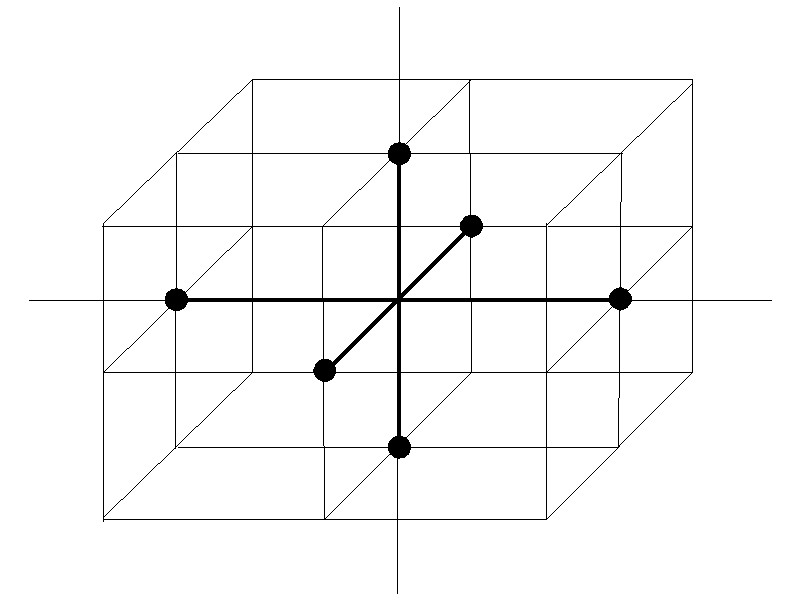}%
\\
Fig. 1c: $D=3$%
\label{Fig1c}%
\end{center}
%

\begin{align*}
&  \text{Fig. 1. Lattice configurations in the rank-}D\text{ lattices
}\mathbb{Z}^{D}\\
&  \text{with nearest-neighbor resistors.}%
\end{align*}

\begin{proposition}
\label{Prop5.10}The potential functions $v$, i.e., solutions to \emph{(}%
\ref{Eq5.48}\emph{)} with $c\equiv1$ are in $\ell^{2}(\mathbb{Z}^{D})$ if
$D>2$.
\end{proposition}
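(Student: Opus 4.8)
The plan is to diagonalize $\Delta_c$ by Fourier series on the lattice, exactly as was done for $D=1$ in (\ref{Eq5.33})--(\ref{Eq5.34}), and then to read off $\ell^{2}$-membership of the potential from the order of vanishing of the symbol at the origin. First I would identify $\ell^{2}(\mathbb{Z}^{D})$ with $L^{2}(\mathbb{T}^{D})$, $\mathbb{T}^{D}=(-\pi,\pi]^{D}$, via $v\mapsto\hat{v}(x)=\sum_{n\in\mathbb{Z}^{D}}v(n)e^{in\cdot x}$, under which $\Delta_c$ (with $c\equiv1$) becomes multiplication by the symbol
\[
\widehat{\Delta}(x)=\sum_{j=1}^{D}\left(2-2\cos x_{j}\right)=\sum_{j=1}^{D}4\sin^{2}(x_{j}/2),
\]
the $D$-fold analogue of (\ref{Eq5.33}), which vanishes on $\mathbb{T}^{D}$ only at $x=0$. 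Transforming (\ref{Eq5.48}) and using $\widehat{\delta_{\alpha}-\delta_{\beta}}(x)=e^{i\alpha\cdot x}-e^{i\beta\cdot x}$, the potential equation becomes $\widehat{\Delta}(x)\,\hat{v}(x)=e^{i\alpha\cdot x}-e^{i\beta\cdot x}$, so that
\[
\hat{v}(x)=\frac{e^{i\alpha\cdot x}-e^{i\beta\cdot x}}{\sum_{j=1}^{D}4\sin^{2}(x_{j}/2)}.
\]
By Parseval, $v\in\ell^{2}(\mathbb{Z}^{D})$ iff $\hat{v}\in L^{2}(\mathbb{T}^{D})$, and the whole question reduces to the integrability of $|\hat{v}|^{2}$ near the single zero $x=0$ of the denominator.

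Next I would insert the two elementary estimates that control numerator and denominator near $0$. For the numerator, $|e^{i\alpha\cdot x}-e^{i\beta\cdot x}|=2|\sin(\tfrac{1}{2}(\alpha-\beta)\cdot x)|\leq|(\alpha-\beta)\cdot x|\leq|\alpha-\beta|\,|x|$ holds on all of $\mathbb{T}^{D}$. For the denominator, $\sin(t/2)\geq t/\pi$ on $[0,\pi]$ gives $4\sin^{2}(x_{j}/2)\geq(4/\pi^{2})x_{j}^{2}$, hence $\widehat{\Delta}(x)\geq(4/\pi^{2})|x|^{2}$. Combining, $|\hat{v}(x)|^{2}\leq C|x|^{-2}$ uniformly on $\mathbb{T}^{D}$, while away from $0$ the function $\hat{v}$ is bounded since $\widehat{\Delta}$ is bounded below there. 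Passing to polar coordinates, $\int_{|x|<\varepsilon}|x|^{-2}\,d^{D}x\asymp\int_{0}^{\varepsilon}r^{D-3}\,dr$, which is finite precisely when $D>2$. Thus for $D>2$ we obtain $\hat{v}\in L^{2}(\mathbb{T}^{D})$, i.e.\ $v\in\ell^{2}(\mathbb{Z}^{D})$.

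Finally I would identify this Fourier-constructed $v$ with the voltage potential of Theorem \ref{Theo5.3}: it has finite energy in every dimension, since by Parseval $\mathcal{E}_{c}(v)=\int_{\mathbb{T}^{D}}\widehat{\Delta}(x)|\hat{v}(x)|^{2}\,dx=\int_{\mathbb{T}^{D}}|e^{i\alpha\cdot x}-e^{i\beta\cdot x}|^{2}/\widehat{\Delta}(x)\,dx$, whose integrand is bounded near $0$ (numerator and denominator both vanish to second order). Hence $v\in\mathcal{H}_{\mathcal{E}}$, and by the uniqueness established in Theorem \ref{Theo5.3} it is the potential solving (\ref{Eq5.48}); for $D>2$ it lies in $\ell^{2}(\mathbb{Z}^{D})$, as claimed.

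The computation is routine; the only genuine point requiring care — and the step I expect to be the crux — is the local analysis at the origin that fixes the exponents. One must verify that the numerator vanishes to (at least) first order, which uses $\alpha\neq\beta$ and is exactly why the upper bound above suffices for the stated ``if'' direction, while the denominator vanishes to second order, so that $|\hat{v}|^{2}\lesssim|x|^{-2}$ and the radial integrand is $r^{D-3}$. Everything else — the Fourier diagonalization, Parseval's identity, and the reduction to a single integrable singularity — is bookkeeping.
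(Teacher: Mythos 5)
Your proposal is correct and takes essentially the same route as the paper: Fourier identification $\ell^{2}(\mathbb{Z}^{D})\simeq L^{2}(\mathbb{T}^{D})$, the multiplication symbol $4\sum_{k}\sin^{2}(x_{k}/2)$ as in (\ref{Eq5.50}), and spherical coordinates (\ref{Eq5.51}) to reduce everything to integrability of the singularity at $x=0$. The only difference is one of completeness, in your favor: the paper leaves the order-of-vanishing analysis (numerator to first order since $\alpha\neq\beta$, denominator to second order, hence radial integrand $r^{D-3}$) and the identification of the Fourier-constructed solution with the finite-energy potential of Theorem \ref{Theo5.3} implicit, whereas you spell both out.
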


\begin{proof}
Recall that the $D$-torus $\mathbb{T}^{D}$ is the compact dual of the rank-$D$
lattice. Pick coordinates in $\mathbb{T}^{D}$ s.t. $x=\left(  x_{1}%
,\ldots,x_{D}\right)  $, $-\pi<x_{i}\leq\pi$, $i=1,2,\ldots,D$. Then, by
Parseval,
\[
\ell^{2}\left(  \mathbb{Z}^{D}\right)  \simeq L^{2}\left(  \mathbb{T}%
^{D}\right)  \text{.}%
\]
By the argument from Example \ref{Examp5.2}, we see that $\Delta$ has the
following spectral representation in $L^{2}(\underset{D\text{ times}%
}{\underbrace{(-\pi,\pi]\times\cdots\times(-\pi,\pi]}})$%
\begin{equation}
\left(  \Delta v\right)  \left(  x\right)  =4\sum_{k=1}^{D}\sin^{2}\left(
\frac{x_{k}}{2}\right)  v\left(  x\right)  \text{.} \label{Eq5.50}%
\end{equation}
Introducing spherical coordinates in $\mathbb{R}^{D}$, we get the
representation
\begin{equation}
dx=\rho^{D-1}dS_{1} \label{Eq5.51}%
\end{equation}
where%
\[
\rho:\,=\left(  \sum_{k=1}^{D}x_{k}^{2}\right)  ^{\frac{1}{2}}\text{,}%
\]
and whence $dS_{1}$ denotes the rotationally invariant measure on the sphere
in $\mathbb{R}^{D}$.

The question of deciding when the solution $v$ to (\ref{Eq5.48}) is in
$\ell^{2}\left(  \mathbb{Z}^{D}\right)  $ can be better understood in the
spectral representation $v\left(  x\right)  $ for $x=\left(  x_{1}%
,\ldots,x_{D}\right)  $ close to $0$, i.e., $\rho\left(  x\right)  \sim0$.

Using (\ref{Eq5.50})-(\ref{Eq5.51}), we see that the potential function $v$ is
in $\ell^{2}$ if $D>2$.

More generally, the argument from Example \ref{Examp5.2} proves that in
$\mathbb{Z}^{D}$, the potential function $v$ is in $\mathcal{H}\left(
s\right)  $ if $s>\frac{2-D}{4}$.\smallskip
\end{proof}

The next results illustrate new issues entering the analysis of $\mathbb{Z}%
^{D}$-graphs when $D>1$, compared to the $D=1$ case.

\begin{corollary}
\label{Cor5.11}For the case $D=3$ in the lattice model in Example
\ref{Examp5.9}, consider $k=\left(  k_{1},k_{2},k_{3}\right)  \in
\mathbb{Z}^{3}\diagdown\left(  0\right)  $ fixed, and let
\[
v_{k}:\mathbb{Z}^{3}\rightarrow\mathbb{R}\,
\]
the solution to the potential equation
\begin{equation}
\Delta v_{k}=\delta_{0}-\delta_{k}\text{.} \label{Eq5.52}%
\end{equation}
Then
\begin{equation}
\lim\limits_{n\rightarrow\infty}v_{k}\left(  n\right)  =0\text{.}
\label{Eq5.53}%
\end{equation}

\end{corollary}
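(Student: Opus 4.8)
The plan is to read off the decay directly from the square-summability already established in Proposition \ref{Prop5.10}. Since we are in the case $D=3>2$, that proposition guarantees $v_k\in\ell^2(\mathbb{Z}^3)$, i.e.\ $\sum_{n\in\mathbb{Z}^3}|v_k(n)|^2<\infty$. First I would record the elementary fact that any square-summable function on a countable set must tend to zero as the index leaves every finite set: given $\varepsilon>0$, the set $\{n\in\mathbb{Z}^3\,:\,|v_k(n)|\ge\varepsilon\}$ is necessarily finite, since otherwise the sum $\sum_n|v_k(n)|^2$ would already diverge on that set alone. In $\mathbb{Z}^3$ the statement $\lim_{n\to\infty}v_k(n)=0$ means precisely that $|v_k(n)|<\varepsilon$ for all $n$ outside some large ball, so this observation yields (\ref{Eq5.53}) at once.

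Before invoking Proposition \ref{Prop5.10} I would make sure the potential $v_k$ to which we apply it is the genuine $\ell^2$ solution, and that it is uniquely determined. Because $\mathbb{Z}^3$ is infinite, the constant functions are \emph{not} in $\ell^2(\mathbb{Z}^3)$, so the kernel of $\Delta_c$ inside $\ell^2$ is trivial; hence the $\ell^2$ solution of $\Delta_c v_k=\delta_0-\delta_k$ is unique, and it is exactly this solution covered by Proposition \ref{Prop5.10}. Pinning down which $v_k$ is meant is the only genuine bookkeeping step.

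As an alternative, more quantitative route (not needed for the bare limit, but worth recording), I would pass to the spectral/Fourier picture of Example \ref{Examp5.2} on $L^2(\mathbb{T}^3)$, where $\Delta$ is multiplication by $4\sum_{j=1}^3\sin^2(x_j/2)$. There the Fourier transform of $v_k$ is
\[
\hat v_k(x)=\frac{1-e^{-ik\cdot x}}{4\sum_{j=1}^3\sin^2(x_j/2)},
\]
and $v_k(n)$ is its $n$-th Fourier coefficient. Near the only singularity $x=0$ the numerator is $O(|x|)$ while the denominator is $\sim|x|^2$, so the integrand is $O(|x|^{-1})$; since $|x|^{-1}$ is integrable in three dimensions — the spherical Jacobian $\rho^{D-1}=\rho^2$ from (\ref{Eq5.51}) kills the singularity — we get $\hat v_k\in L^1(\mathbb{T}^3)$, and the Riemann--Lebesgue lemma again gives $v_k(n)\to0$.

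I do not expect a real obstacle here: the substance is already contained in Proposition \ref{Prop5.10}, and the only points requiring care are the passage from square-summability to pointwise decay and the uniqueness of the $\ell^2$ potential. The genuinely harder analytic problem — extracting the exact decay rate $v_k(n)\sim C/|n|$ as $|n|\to\infty$ — would require a stationary-phase or lattice-Green's-function asymptotic analysis of the integral above, but that is considerably more than the statement asks for.
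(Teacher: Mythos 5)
Your proposal is correct and follows essentially the same route as the paper: the paper's proof likewise just invokes Proposition \ref{Prop5.10} to get $v_{k}\in\ell^{2}(\mathbb{Z}^{3})$ and then observes that square-summability forces the values to vanish at infinity (a step you spell out more carefully than the paper does). The uniqueness remark and the Riemann--Lebesgue alternative are sound additions but not part of the paper's argument.
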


\begin{proof}
Our notation is as follows: $n=(n_{1},n_{2},n_{3})\in\mathbb{Z}^{3}$ and by
\textquotedblleft$n\rightarrow\infty$\textquotedblright\ we mean:
\[
\left\vert n\right\vert =\sqrt{n_{1}^{2}+n_{2}^{2}+n_{3}^{2}}\rightarrow
\infty\text{.}%
\]
Moreover,
\[
\delta_{k}:\mathbb{Z}^{3}\rightarrow\mathbb{R}%
\]
is the usual Dirac mass
\begin{equation}
\delta_{k}\left(  n\right)  =\delta_{k_{1},n_{1}}\delta_{k_{i},n_{i}}%
\delta_{k_{3},n_{3}}\text{.} \label{Eq5.54}%
\end{equation}

We proved in Proposition \ref{Prop5.10} $\left(  D=3\right)  $ that
\begin{equation}
\sum_{n\in\mathbb{Z}^{3}}\left\vert v_{k}\left(  n\right)  \right\vert
^{2}\left(  =\left\Vert v_{k}\right\Vert ^{2}\right)  <\infty\text{;}
\label{Eq5.55}%
\end{equation}
and so in particular, the conclusion (\ref{Eq5.53}) must hold.
\end{proof}

Our next example illustrates that the potential equation (\ref{Eq5.52}) has
unbounded solutions in case $D\geq3$. This will also provide concrete cases of
unbounded harmonic functions, i.e., functions $w:\mathbb{Z}^{D}\rightarrow
\mathbb{R}$ for which $\Delta w=0$.

To aid the construction, we include the following lemma which is about the
general case of systems $\left(  G,c\right)  $ as analyzed in sections
\ref{MainTheo}-\ref{OpTheory} above.

\begin{lemma}
\label{Lem5.12}Let $G=(G^{\left(  0\right)  },G^{\left(  1\right)  })$, and
$c:G^{\left(  1\right)  }\rightarrow\mathbb{R}_{+}$, be a graph system as
described in Theorem \ref{Theo5.1}, and let $\Delta_{c}$ be the graph Laplacian.

Let $\alpha,\beta\in G^{\left(  0\right)  }$ be given, $\alpha\not =\beta$.
Then there is a 1-1 correspondence between two classes of functions
$v:G^{\left(  0\right)  }\rightarrow\mathbb{R}$, and functions $I:G^{\left(
1\right)  }\rightarrow\mathbb{R}$, where the two classes are given as
follows:\smallskip

\noindent\textit{Class 1.}%
\begin{equation}
\Delta_{c}v=\delta_{\alpha}-\delta_{\beta} \label{Eq5.56}%
\end{equation}

\noindent\textit{Class 2. }$I:G^{\left(  1\right)  }\rightarrow\mathbb{R}$
satisfying \emph{(}Kirchoff's Laws\emph{)}:

\begin{itemize}
\item[ ] \emph{(}a\emph{)} $\sum_{y\sim x}I\left(  x,y\right)  =\left(
\delta_{\alpha}-\delta_{\beta}\right)  \left(  x\right)  ,~\forall x\in
G^{\left(  0\right)  }$; and

\item[ ] \emph{(}b\emph{)} $\sum_{i}c\left(  x_{i}x_{i+1}\right)
^{-1}I\left(  x_{i}x_{i+1}\right)  =0$ for all $x_{0},x_{1},x_{2},\ldots
,x_{n}\in G^{\left(  0\right)  }$ subject to $x_{0}=x_{n}$, and $x_{i}\sim
x_{i+1}$, i.e., all closed loops in $G^{\left(  0\right)  }$.
\end{itemize}

The connection between the two classes is given by the following formula:
\begin{equation}
c\left(  xy\right)  ^{-1}I\left(  xy\right)  =v\left(  x\right)  -v\left(
y\right)  \text{,~}\forall\left(  xy\right)  \in G^{\left(  1\right)
}\text{.} \label{Eq5.57}%
\end{equation}

The function $v$ is determined from $I$ uniquely, up to a constant, when $I$
is known to satisfy \emph{(}a\emph{)}-\emph{(}b\emph{)}. Moreover,
\begin{equation}
\sum_{e\in G^{\left(  1\right)  }}\frac{\left(  I\left(  e\right)  \right)
^{2}}{c\left(  e\right)  }=\mathcal{E}_{c}\left(  v\right)  \text{.}
\label{Eq5.58}%
\end{equation}

\end{lemma}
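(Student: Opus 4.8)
The plan is to realize the asserted bijection through Ohm's law, formula (\ref{Eq5.57}), which reads $I(xy)=c(xy)\left(v(x)-v(y)\right)$, equivalently $c(xy)^{-1}I(xy)=v(x)-v(y)$. Throughout I treat $I$ as a \emph{current}, i.e. an antisymmetric function $I(xy)=-I(yx)$ on directed edges, in keeping with the convention $I(x,y)=-I(y,x)$ from the Introduction; note that $c(xy)^{-1}I(xy)^2$ is orientation-insensitive, so the left side of (\ref{Eq5.58}) is well defined on unoriented edges. I will construct two maps, $v\mapsto I$ and $I\mapsto v$, verify that each lands in the advertised class, and check that they are mutually inverse once one accounts for the additive constant (both the equation $\Delta_c v=\delta_\alpha-\delta_\beta$ and the recipe $v\mapsto I$ are invariant under $v\mapsto v+\text{const}$).

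First I would treat the easy direction $v\mapsto I$. Given $v$ in Class 1, define $I$ by (\ref{Eq5.57}); antisymmetry of $I$ is immediate from the symmetry $c(xy)=c(yx)$. Kirchhoff's node law (a) is then nothing but the definition (\ref{Eq3.6}) of $\Delta_c$, since $\sum_{y\sim x}I(xy)=\sum_{y\sim x}c(xy)(v(x)-v(y))=(\Delta_c v)(x)=(\delta_\alpha-\delta_\beta)(x)$. For the loop law (b), along any closed path $x_0,x_1,\ldots,x_n=x_0$ the weighted sum telescopes: $\sum_i c(x_ix_{i+1})^{-1}I(x_ix_{i+1})=\sum_i\left(v(x_i)-v(x_{i+1})\right)=v(x_0)-v(x_n)=0$. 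Hence $I$ lies in Class 2.

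The substance is the reverse map $I\mapsto v$, and this is where I expect the main obstacle. Fix a base vertex $o\in G^{(0)}$, set $v(o):=0$, and for any $y$ invoke connectedness (assumption (iv)) to choose a path $o=z_0,z_1,\ldots,z_m=y$; then set $v(y):=-\sum_{j=0}^{m-1}c(z_jz_{j+1})^{-1}I(z_jz_{j+1})$, the value forced by (\ref{Eq5.57}). The crux is that this is independent of the chosen path. Given a second path from $o$ to $y$, concatenating the first with the reversal of the second produces a closed loop; antisymmetry of $I$ turns the reversed leg into the negative of its forward sum, so path-independence is exactly the vanishing guaranteed by the loop law (b). Thus $v$ is well defined up to the choice of $v(o)$, i.e. up to an additive constant, and connectedness shows this is the only freedom. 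By construction $c(xy)(v(x)-v(y))=I(xy)$ on every edge, whence $(\Delta_c v)(x)=\sum_{y\sim x}I(xy)=(\delta_\alpha-\delta_\beta)(x)$ by the node law (a), so $v$ lies in Class 1. Since $v\mapsto I$ depends only on the differences $v(x)-v(y)$, the two constructions invert one another modulo constants, giving the bijection.

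Finally, the energy identity (\ref{Eq5.58}) follows by substituting (\ref{Eq5.57}): on each edge $c(xy)^{-1}I(xy)^2=c(xy)(v(x)-v(y))^2$, so summing over $G^{(1)}$ reproduces $\mathcal{E}_c(v)$ term by term; equivalently one may combine the node law with the identity $2\langle v,\Delta_c v\rangle=\mathcal{E}_c(v)$ from Lemma \ref{Lem3.2}. The only point demanding care is matching the edge-counting convention between the sum defining the left side of (\ref{Eq5.58}) and the double sum defining $\mathcal{E}_c$ in (\ref{Eq3.10}); with that convention fixed, the identity is immediate.
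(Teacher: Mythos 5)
Your proof is correct, and it is worth noting that the paper itself gives no argument here: its ``proof'' of Lemma \ref{Lem5.12} reads ``Left to the reader. The arguments are included in the proof of Theorem \ref{Theo5.1},'' where the relevant ingredients (the path from $\alpha$ to $\beta$, the Schwarz estimate, and the Riesz representation in $\mathcal{H}_{\mathcal{E}}$) actually appear in the proof of Theorem \ref{Theo5.3} and serve to produce the potential $v$, not to exhibit the correspondence $v\leftrightarrow I$. What you supply is precisely the missing combinatorial content: the forward map via Ohm's law with the node law read off from (\ref{Eq3.6}) and the loop law by telescoping, and --- the real substance --- the inverse map by path-summation from a base vertex, with path-independence reduced to the loop law via antisymmetry of $I$. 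This is more elementary and more complete than what the paper gestures at, since the energy-space/Riesz route only yields existence and uniqueness (mod constants) of $v$ in Class 1, whereas the bijection with Class 2 requires exactly your integration argument. Two points of care in your write-up deserve emphasis because the paper glosses over both: (i) Class 2 only makes sense for antisymmetric $I$ (indeed (\ref{Eq5.57}) forces $I(xy)=-I(yx)$), and you correctly import that convention from the Introduction; (ii) the identity (\ref{Eq5.58}) holds on the nose only if the sum over $e\in G^{\left(1\right)}$ counts ordered edges (equivalently, each unordered edge twice), since $\mathcal{E}_{c}$ in (\ref{Eq3.10}) is a double sum over ordered pairs --- one can check against Example \ref{Examp5.2}, where $\mathcal{E}_{c}(v)=2k$ while the sum over unordered edges is $k$. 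Flagging that convention is not pedantry; without it (\ref{Eq5.58}) is off by a factor of $2$.
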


\begin{proof}
Left to the reader. The arguments are included in the proof of Theorem
\ref{Theo5.1}.
\end{proof}

\begin{example}
\label{Examp5.13}The $\mathcal{D}=2$ lattice model; i.e., $G^{\left(
0\right)  }=\mathbb{Z}^{2}$, edges given by nearest neighbors as in Fig. 1b;
and $c\equiv1$.
\end{example}

We consider the equation (\ref{Eq5.56}) for $\alpha=\left(  0,0\right)  $ and
$\beta=\left(  1,1\right)  $. The two different solutions $v$ to
(\ref{Eq5.56}) will be presented in the form of Class 2 in Lemma
\ref{Lem5.12}, i.e., in terms of current functions defined on the edges in
$G$. First recall that the Laplace operator $\Delta$ in the $\mathbb{Z}^{2}%
$-model is
\begin{gather*}
\left(  \Delta v\right)  \left(  m,n\right)  =4v\left(  m,n\right)  -v\left(
m-1,n\right)  -v\left(  m+1,n\right) \\
-v\left(  m,n-1\right)  -v\left(  m,n+1\right)  ,~\forall\left(  m,n\right)
\in\mathbb{Z}^{2}\text{.}%
\end{gather*}
Eq. (\ref{Eq5.55}) then takes the form
\begin{equation}
\Delta v=\delta_{\left(  0,0\right)  }-\delta_{\left(  1,1\right)  }\text{.}
\label{Eq5.59}%
\end{equation}

We now describe the two current functions $I$ which correspond to the two
solutions to (\ref{Eq5.59}).%

\begin{gather*}%
\begin{array}
[c]{ccccccccccccccc}
&  &  &  &  & \bullet & \bullet & \bullet &  &  &  &  &  &  & \\
&  &  &  &  & \bullet & \bullet & \bullet &  &  &  &  &  &  & \\
&  &  &  &  &  &  &  &  &  &  &  &  &  & \\
& > &  & > &  & > &  & > &  & < &  & < &  & < & \\
\wedge &  & \wedge &  & \wedge &  & \wedge &  & \vee &  & \vee &  & \vee &  &
\\
& > &  & > &  & > &  & > &  & < &  & < &  & < & \\
\wedge &  & \wedge &  & \wedge &  & \wedge &  & \vee &  & \vee &  & \vee &  &
\vee\\
& > &  & > &  & > &  & > &  & < &  & < &  & < & \\
\wedge &  & \wedge &  & \wedge &  & \wedge &  & \vee &  & \vee &  & \vee &  &
\vee\\
& < &  & < &  & < &  & > &  & < &  & < &  & < & \\
\wedge &  & \wedge &  & \wedge &  & \wedge &  & \vee &  & \vee &  & \vee &  &
\vee\\
& < &  & < &  & < &  & > & \text{{\tiny (1,1)}} & < &  & < &  & < & \\
\wedge &  & \wedge &  & \wedge &  & \wedge & \therefore\because & \wedge &  &
\wedge &  & \wedge &  & \wedge\\
& < &  & < &  & < & \text{{\tiny (0,0)}} & > &  & > &  & > &  & > & \\
\vee &  & \vee &  & \vee &  & \vee &  & \vee &  & \wedge &  & \wedge &  &
\wedge\\
& < &  & < &  & < &  & > &  & > &  & > &  & > & \\
\vee &  & \vee &  & \vee &  & \vee &  & \vee &  & \wedge &  & \wedge &  &
\wedge\\
& < &  & < &  & < &  & > &  & > &  & > &  & > & \\
\vee &  & \vee &  & \vee &  & \vee &  & \vee &  & \wedge &  & \wedge &  &
\wedge\\
& < &  & < &  & < &  & > &  & > &  & > &  & > & \\
&  &  &  &  &  &  &  &  &  &  &  &  &  & \\
&  &  &  &  & \bullet & \bullet & \bullet &  &  &  &  &  &  & \\
&  &  &  &  & \bullet & \bullet & \bullet &  &  &  &  &  &  &
\end{array}
\\
\text{\textbf{Fig. 2.} The function }I\text{ for the first solution }v\text{
to (\ref{Eq5.59}).}%
\end{gather*}%
\begin{gather*}
\text{Flow design for the current function }I\text{. The symbols
\textquotedblleft%
$>$%
\textquotedblright\ indicate}\\
\text{arrows in the direction of the current flow. An arrow points in the}\\
\text{direction of voltage drop.}%
\end{gather*}

And now the (different) function $I$ for the second solution to (\ref{Eq5.59}%
):
\begin{gather*}%
\begin{array}
[c]{ccccccccccccccc}
&  & \vdots &  &  &  &  &  &  &  &  &  & \vdots &  & \\
&  & 0 &  & 0 &  & 0 &  & 0 &  & 0 &  & 0 &  & \\
& 0 &  & \overset{\wedge}{\frac{1}{2}} &  & 0 &  & 0 &  & 0 &  &
\underset{\vee}{\frac{1}{2}} &  & 0 & \\
&  & 0 &  & 0 &  & 0 &  & 0 &  & <\frac{1}{2} &  & <\frac{1}{2} &  & \\
& 0 &  & \overset{\wedge}{\frac{1}{2}} &  & 0 &  & 0 &  & \underset{\vee
}{\frac{1}{2}} &  & \underset{\vee}{\frac{1}{2}} &  & 0 & \\
&  & 0 &  & 0 &  & 0 &  & <\frac{1}{2} &  & <\frac{1}{2} &  & 0 &  & \\
\cdots & 0 &  & \overset{\wedge}{\frac{1}{2}} &  & 0 &  & \underset{\vee
}{\frac{1}{2}} &  & \underset{\vee}{\frac{1}{2}} &  & 0 &  & 0 & \\
&  & 0 &  & <\frac{1}{4} &  & <\frac{1}{2} &  & <\frac{1}{2} &  & 0 &  & 0 &
& \\
& 0 &  & \overset{\wedge}{\frac{1}{4}} &  & \underset{\vee}{\frac{1}{4}} &  &
\underset{\vee}{\frac{1}{2}} &  & 0 &  & 0 &  & 0 & \\
&  & 0 &  & \frac{1}{4}> & \text{{\tiny (1,1)}} & <\frac{1}{4} &  & 0 &  & 0 &
& 0 &  & \\
\cdots & 0 &  & \underset{\wedge}{\frac{1}{2}} &  & \overset{\wedge}{\frac
{1}{4}} &  & \underset{\vee}{\frac{1}{4}} &  & 0 &  & 0 &  & 0 & \\
&  & 0 & \text{{\tiny (0,0)}} & >\frac{1}{2} &  & \frac{1}{4}> &  & \frac
{1}{2}> &  & \frac{1}{2}> &  & \frac{1}{2}> &  & \frac{1}{2}>\\
& 0 &  & 0 &  & 0 &  & 0 &  & 0 &  & 0 &  & 0 & \cdots\\
&  & 0 &  & 0 &  & 0 &  & 0 &  & 0 &  & 0 &  & \\
&  & \vdots &  &  &  &  &  &  &  &  &  & \vdots &  &
\end{array}
\\
\text{\textbf{Fig. 3.}}%
\end{gather*}%
\begin{gather*}
\text{Flow design for the current function }I\text{. The symbols
\textquotedblleft%
$>$%
\textquotedblright\ indicate}\\
\text{arrows in the direction of the current flow. An arrow points in the}\\
\text{direction of voltage drop.}%
\end{gather*}

\subsection{\textbf{The Resistance Metric\label{ResistMet}}}

Let $G=(G^{\left(  0\right)  },G^{\left(  1\right)  })$ be a graph satisfying
the axioms from section \ref{Assumptions}, and let
\[
c:G^{\left(  1\right)  }\rightarrow\mathbb{R}_{+}%
\]
be a conductance function. Let $\mathcal{E}_{c}\left(  \cdot\right)  $ be the
corresponding energy form, and let $\Delta_{c}$ be the graph Laplacian.

Pick a definite point $0$ in the vertex set $G^{\left(  0\right)  }$. Now for
every $x\in G^{\left(  0\right)  }$ let $v_{x}\in\mathcal{H}_{\mathcal{E}_{c}%
}$ be the solution to
\begin{equation}
\Delta_{c}v_{x}=\delta_{0}-\delta_{x}\text{.} \label{Eq5.60}%
\end{equation}
Set
\begin{align}
\operatorname*{dist}_{c}\left(  x,y\right)   &  :\,=\mathcal{E}_{c}\left(
v_{x}-v_{y}\right)  ^{\frac{1}{2}}\label{Eq5.61}\\
&  =\left\Vert v_{x}-v_{y}\right\Vert _{\mathcal{E}_{c}}\nonumber
\end{align}
for $x,y\in G^{\left(  0\right)  }$. We say that $x,y\rightarrow
\operatorname*{dist}_{c}\left(  x,y\right)  $ is the \textit{resistance
metric} on $G^{\left(  0\right)  }$. It is immediate from (\ref{Eq5.60}) that
it satisfies the triangle inequality.

\begin{proposition}
\label{Prop5.14}The following formula holds for the resistance metric:
\[
\operatorname*{dist}\limits_{c}\left(  x,y\right)  =\sqrt{2}\left(
v_{x}\left(  y\right)  +v_{y}\left(  x\right)  -v_{x}\left(  x\right)
-v_{y}\left(  y\right)  \right)  ^{\frac{1}{2}}\text{.}%
\]

\end{proposition}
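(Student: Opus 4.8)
The plan is to reduce everything to the polarized energy identity (\ref{Eq5.13}), $\mathcal{E}_{c}(u,v)=2\langle u,\Delta_{c}v\rangle$, together with the defining property of the potentials $v_{x}$ established in Theorem \ref{Theo5.3}. First I would set $w:=v_{x}-v_{y}$ and observe, by linearity of $\Delta_{c}$ and the two equations $\Delta_{c}v_{x}=\delta_{0}-\delta_{x}$ and $\Delta_{c}v_{y}=\delta_{0}-\delta_{y}$, that $\Delta_{c}w=\delta_{y}-\delta_{x}$, the $\delta_{0}$ terms cancelling. Since $\operatorname*{dist}_{c}(x,y)^{2}=\mathcal{E}_{c}(v_{x}-v_{y})=\mathcal{E}_{c}(w)$ by the definition (\ref{Eq5.61}), the entire problem reduces to the evaluation of the single scalar $\mathcal{E}_{c}(w)$.

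Next I would invoke the identity extracted in the proof of Theorem \ref{Theo5.3}: a potential solving $\Delta_{c}v=\delta_{\alpha}-\delta_{\beta}$ satisfies $\tfrac{1}{2}\mathcal{E}_{c}(v,u)=u(\alpha)-u(\beta)$ for all $u\in\mathcal{D}$. Applying this to $v_{x}$ (with $\alpha=0,\beta=x$) and to $v_{y}$ (with $\alpha=0,\beta=y$) and subtracting, the $u(0)$ terms cancel and I obtain $\tfrac{1}{2}\mathcal{E}_{c}(w,u)=u(y)-u(x)$ for every $u\in\mathcal{D}$. The aim is to substitute $u=w$, which would give $\mathcal{E}_{c}(w)=2\bigl(w(y)-w(x)\bigr)$; expanding $w(y)-w(x)=\bigl(v_{x}(y)-v_{y}(y)\bigr)-\bigl(v_{x}(x)-v_{y}(x)\bigr)=v_{x}(y)+v_{y}(x)-v_{x}(x)-v_{y}(y)$ and taking square roots then yields exactly the asserted formula, with the factor $\sqrt{2}$ arising from $\mathcal{E}_{c}(w)=2(\cdots)$.

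The one genuine obstacle is that $w=v_{x}-v_{y}$ lies in $\mathcal{H}_{\mathcal{E}}$ but in general not in $\mathcal{D}$, since potentials need not have finite support (witness Example \ref{Examp5.2}); hence the identity $\tfrac{1}{2}\mathcal{E}_{c}(w,u)=u(y)-u(x)$ is not directly available at $u=w$. To close this gap I would argue by density and continuity: the functional $u\mapsto u(y)-u(x)$ is bounded on $\mathcal{H}_{\mathcal{E}}$ by the path estimate (\ref{Eq5.21}), while $u\mapsto\mathcal{E}_{c}(w,u)$ is continuous on $\mathcal{H}_{\mathcal{E}}$ by Schwarz (\ref{Eq5.5}). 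Since $\mathcal{D}$ is dense in $\mathcal{H}_{\mathcal{E}}$ (connectedness of $G$, exactly as used in the uniqueness part of Theorem \ref{Theo5.3}), two continuous functionals agreeing on $\mathcal{D}$ agree on all of $\mathcal{H}_{\mathcal{E}}$, and in particular at $u=w$. Once this extension is secured the computation above is immediate, and the triangle inequality noted after (\ref{Eq5.61}) confirms that the resulting expression is genuinely a metric.
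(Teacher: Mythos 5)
Your proof is correct and arrives at the formula by the same underlying algebra as the paper, but you justify the crucial step differently --- and more carefully --- than the paper does. The paper's proof is a one-line chain of equalities: it applies the energy identity (\ref{Eq5.13}) directly with $u=v=v_{x}-v_{y}$, writing $\mathcal{E}_{c}(v_{x}-v_{y})=2\langle\Delta_{c}(v_{x}-v_{y}),v_{x}-v_{y}\rangle_{\ell^{2}}$, substituting $\Delta_{c}(v_{x}-v_{y})=\delta_{y}-\delta_{x}$, and reading off $2\bigl((v_{x}-v_{y})(y)-(v_{x}-v_{y})(x)\bigr)$. This rests on the blanket assertion in the proof of Theorem \ref{Theo5.1} that (\ref{Eq5.13}) holds whenever both sides are finite, which is exactly the point where $v_{x}-v_{y}\notin\mathcal{D}$ causes trouble and which the paper does not argue in detail. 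You instead use only the rigorously established instances of the identity: the weak formulation $\tfrac{1}{2}\mathcal{E}_{c}(w,u)=u(y)-u(x)$ for $u\in\mathcal{D}$, extracted from the Riesz argument in Theorem \ref{Theo5.3}, followed by an explicit density-and-continuity extension to $u=w$, using boundedness of the point-evaluation functional in the energy norm (the path estimate (\ref{Eq5.21}), whose telescoping-plus-Schwarz argument indeed applies to any finite-energy function) together with Schwarz (\ref{Eq5.5}) for the form. So your route supplies precisely the limiting argument that the paper's proof implicitly assumes. What it costs is an explicit appeal to density of $\mathcal{D}$ in $\mathcal{H}_{\mathcal{E}}$ --- a claim the paper itself invokes (uniqueness in Theorem \ref{Theo5.3} and Example \ref{Examp5.2}) but which in general requires the absence of non-constant finite-energy harmonic functions, not just connectedness; both proofs therefore share that hypothesis, yours just makes its role visible.
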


\begin{proof}
In view of (\ref{Eq5.61}), it is enough to compute $\mathcal{E}_{c}\left(
v_{x}-v_{y}\right)  $ for pairs of points $x,y$.

Let $x,y\in G^{\left(  0\right)  }$ be given, and let $v_{x},v_{y}$ be the
potential functions from (\ref{Eq5.60}). Then
\begin{align*}
\mathcal{E}_{c}\left(  v_{x}-v_{y}\right)   &  =2\langle\Delta_{c}\left(
v_{x}-v_{y}\right)  ,v_{x}-v_{y}\rangle_{\ell^{2}}\\
&  =2\langle\delta_{0}-\delta_{x}-\left(  \delta_{0}-\delta_{y}\right)
,v_{x}-v_{y}\rangle_{\ell^{2}}\\
&  =2\left(  \left(  v_{x}-v_{y}\right)  \left(  y\right)  -\left(
v_{x}-v_{y}\right)  \left(  x\right)  \right) \\
&  =2\left(  v_{x}\left(  y\right)  +v_{y}\left(  x\right)  -v_{x}\left(
x\right)  -v_{y}\left(  y\right)  \right)  \text{.}%
\end{align*}

\end{proof}

\begin{example}
\label{Examp5.14.1}\emph{(}See also Example \ref{Examp3.5.2}\emph{)} Let
$\Delta_{c}$ be given by the following $\infty\times\infty$ matrix:
\[
\left(
\begin{array}
[c]{rrrrlll}%
1 & -1\; & 0\; & 0\; & \cdots &  & \resizebox{!}{1cm}{\upshape{0}}\\
-1 & 5\; & -2^{2} & 0\; &  &  & \\
0 & -2^{2} & 2^{2}+3^{2} & -3^{2} &  &  & \\
\vdots &  & \ddots & \ddots & \ddots &  & \\
&  &  & \ddots & \ddots & \ddots & \\
&  &  &  & -n^{2} & n^{2}+\left(  n+1\right)  ^{2} & -\left(  n+1\right)
^{2}\\
&  &  &  &  & \ddots\qquad\quad\ddots & \\
& \resizebox{!}{1cm}{\upshape{0}} &  &  &  &  &
\end{array}
\right)  .
\]
So $G^{\left(  0\right)  }=\mathbb{N}_{0}$, $G^{\left(  1\right)  }=\left\{
\left(  0,1\right)  ,\cdots,\left(  n-1,n\right)  ,\left(  n,n+1\right)
,\cdots\right\}  $, and $c\left(  n,n+1\right)  =\left(  n+1\right)  ^{2}$.
The first vertex has one neighbor, and the later two.
\end{example}

The potential equation \emph{(}\ref{Eq5.61}\emph{)} may be solved by
inspection, and we get the following formula for the resistance metre
$\operatorname*{dist}\nolimits_{c}$ in Proposition \ref{Prop5.14}: If $m<n$
\emph{(}in $\mathbb{N}_{0}$\emph{)} then
\[
\operatorname*{dist}\nolimits_{c}\left(  m,n\right)  \simeq\left(  \frac
{1}{\left(  m+1\right)  ^{2}}+\frac{1}{\left(  m+2\right)  ^{2}}+\cdots
+\frac{1}{n^{2}}\right)  ^{\frac{1}{2}}\text{.}%
\]
Since $\sum_{k=1}^{\infty}\frac{1}{k^{2}}=\frac{\pi^{2}}{6}$, we conclude that
$\left(  G^{\left(  0\right)  },\operatorname*{dist}\nolimits_{c}\right)  $ is
a bounded metric space.

Further, the resistance is bounded at infinity; or equivalently the voltage
drop is \textquotedblleft very\textquotedblright\ slow at infinity for the
current flow induced by the experiment which inserts 1 amp at a particular
place in $G^{\left(  0\right)  }=\mathbb{N}_{0}$.

The reason is that the conductance is \textquotedblleft very\textquotedblright%
\ unbounded, or equivalently or more precisely, the resistance is
$\mathcal{O}\left(  n^{-2}\right)  $ for this particular $\left(  G,c\right)
$ system.

\noindent\textbf{Some conclusions: }The finite-energy solution $v$ to
(\ref{Eq5.59}) is the function $v:\mathbb{Z}^{2}\rightarrow\mathbb{R},$
beginning with the values $0,-1/2,$ and $-1$ as follows: In Figs. 2--3 we list
the values of $v$ on the points in the interior square in $G^{\left(
0\right)  }\left(  =\mathbb{Z}^{2}\right)  $. The three values are prescribed
in the centered square; and they then propagate into the quarter planes, with
the value $-1/2$ in the NW and the SE quarter planes.

\section{Finite Dimensional Approximation\label{^{FinDimApprox}}}

\subsection{\textbf{Systems of Graphs\label{GraphSystems}}}

Let $G=(G^{\left(  0\right)  },G^{\left(  1\right)  })$ be an infinite graph
satisfying the axioms from section 2. In particular, we assume for every $x$
in $G^{\left(  0\right)  }$ that $x$ itself is excluded from
$\operatorname*{nbh}(x)$; i.e., no $x$ in $G^{\left(  0\right)  }$ can be
connected to itself with a single edge. Let $c$ any conductance function
defined on $G^{\left(  1\right)  }$ and satisfying our usual axioms.

In section \ref{MainTheo} we showed that the corresponding Laplace operator
$\Delta=\Delta_{c}$ is automatically essentially selfadjoint. By this we mean
that when $\Delta$ is initially defined on the dense subspace $\mathcal{D}$
(of all the real valued functions on $G^{\left(  0\right)  }$ with finite
support) in the Hilbert space $\mathcal{H}:=\ell^{2}(G^{\left(  0\right)  })$,
then the closure of the operator $\Delta$ is selfadjoint in $\mathcal{H}$, and
so in particular it has a unique spectral resolution, determined by a
projection valued measure on the Borel subsets the infinite half-line
$\mathbb{R}_{+}$.

In contrast, we note (Example \ref{Examp7.0}) that the corresponding Laplace
operator in the continuous case is not essential selfadjoint.

This can be illustrated with $\Delta=-(d/dx)^{2}$ on the domain $\mathcal{D}$
of consisting of all $C^{2}$-functions on the infinite half-line
$\mathbb{R}_{+}$ which vanish with their derivatives at the end points. Then
the Hilbert space is $L^{2}\left(  \mathbb{R}_{+}\right)  $.

So this is an instance where the analogy between the continuous case and the
discrete case breaks down.

In the study of infinite graphs $G=\left(  G^{\left(  0\right)  },G^{\left(
1\right)  }\right)  $ and the corresponding Laplacians, it is useful to
truncate and consider first a nested system of \textit{finite} graphs $G_{N}$;
then compute in the finite case and, in the end, take the limit as
$N\rightarrow\infty$. Our approximation results here continue work started in
\cite{Jor77}, \cite{Jor78}.

\begin{definition}
\label{Def6.1}In this section we prove specific results showing that the
procedure works. While there are several candidates for designing the finite
approximating graphs $G_{N}=(G_{N}^{\left(  0\right)  },G_{N}^{\left(
1\right)  })$, we will concentrate here on the simplest: Starting with an
infinite $G=(G^{\left(  0\right)  },G^{\left(  1\right)  })$, pick finite
subsets of vertices as follows:
\begin{equation}
G_{1}^{\left(  0\right)  }\subset G_{2}^{\left(  0\right)  }\subset
G_{3}^{\left(  0\right)  }\subset\cdots\subset G_{N}^{\left(  0\right)
}\subset\cdots\subset G^{\left(  0\right)  }\label{Eq6.1}%
\end{equation}
such that
\begin{equation}
\bigcup\limits_{N=1}^{\infty}G_{N}^{\left(  0\right)  }=G^{\left(  0\right)
}\text{.}\label{Eq6.2}%
\end{equation}
Set $\mathcal{H}:\,=\ell^{2}(G^{\left(  0\right)  })$, and $\mathcal{H}%
_{N}=\ell^{2}(G^{\left(  0\right)  })$. Then the projection $P_{N}$ of
$\mathcal{H}$ onto $\mathcal{H}$ onto $\mathcal{H}_{N}$ is multiplication by
the indicator function $\chi_{G_{N}^{\left(  0\right)  }}$; and the projection
onto the complement $\mathcal{H}\circleddash\mathcal{H}_{N}$ is multiplication
with $\chi_{(G_{N}^{\left(  0\right)  })^{c}}$ where $(G_{N}^{\left(
0\right)  })^{c}=G^{\left(  0\right)  }\backslash G_{N}^{\left(  0\right)  }$
is the complement of $G_{N}^{\left(  0\right)  }$.

The edges $G_{N}^{\left(  1\right)  }$ in $G_{N}$ are simple the edges in $G$,
for which the vertices lie in $G_{N}^{\left(  0\right)  }$; i.e., if $x,y\in
G^{\left(  0\right)  }$, then:
\begin{equation}
\left(  xy\right)  \in G_{N}^{\left(  1\right)  }\Leftrightarrow\left(
xy\right)  \in G^{\left(  1\right)  }\text{ and }x,y\in G_{N}^{\left(
0\right)  }\text{.} \label{Eq6.3}%
\end{equation}

If a system $(G_{N})_{N\in\mathbb{N}}$ of graphs is given as in \emph{(}%
\ref{Eq6.1}\emph{)}-\emph{(}\ref{Eq6.3}\emph{)}, and if $c:G^{\left(
1\right)  }\rightarrow\mathbb{R}_{+}$ is a conductance function; we denote by
$c_{N}$ the restriction of $c$ to $G_{N}^{\left(  1\right)  }$.
\end{definition}

\begin{lemma}
\label{Lem6.2}Let $G=(G^{\left(  0\right)  },G^{\left(  1\right)  })$ and
$c:G^{\left(  1\right)  }\rightarrow\mathbb{R}_{+}$ be given as above. Let
$G_{N}$ be a system of graphs determined subject to conditions \emph{(}%
\ref{Eq6.1}\emph{)}-\emph{(}\ref{Eq6.3}\emph{)}.

Let $\Delta_{N}$ be the graph Laplacian associated to $(G_{N},c_{N})$. Then
\begin{equation}
P_{N}\Delta P_{N}=\Delta_{N},~\text{for }\forall N\in\mathbb{N}\text{.}
\label{Eq6.4}%
\end{equation}

\end{lemma}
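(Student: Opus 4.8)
The plan is to verify (\ref{Eq6.4}) as an identity of operators on $\mathcal{H}=\ell^{2}(G^{(0)})$ (equivalently of $\infty\times\infty$ matrices) by comparing the action of both sides on the orthonormal basis $\{\delta_{y}\}_{y\in G^{(0)}}$. First I would observe that, since $P_{N}$ is multiplication by $\chi_{G_{N}^{(0)}}$, the operator $P_{N}\Delta P_{N}$ annihilates $\delta_{y}$ for every $y\notin G_{N}^{(0)}$ (because $P_{N}\delta_{y}=0$) and has range inside $\ell^{2}(G_{N}^{(0)})$; the finite Laplacian $\Delta_{N}=\Delta_{(G_{N},c_{N})}$ lives on $\ell^{2}(G_{N}^{(0)})$ and has the same vanishing property. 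Hence it suffices to fix $y\in G_{N}^{(0)}$ and check $P_{N}\Delta P_{N}\delta_{y}=\Delta_{N}\delta_{y}$ as functions on $G_{N}^{(0)}$.

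For such a $y$ we have $P_{N}\delta_{y}=\delta_{y}$, so by (\ref{Eq3.6}), $(\Delta\delta_{y})(x)=\mathcal{B}_{c}(x)\delta_{y}(x)-\sum_{z\sim x}c(xz)\delta_{y}(z)$, which equals $\mathcal{B}_{c}(y)$ at $x=y$, equals $-c(xy)$ at each $x\sim y$, and vanishes otherwise. Multiplying by $P_{N}$ on the left retains only the values at $x\in G_{N}^{(0)}$. Comparing with $(\Delta_{N}\delta_{y})(x)$ computed from (\ref{Eq3.6}) for the pair $(G_{N},c_{N})$, the off-diagonal entries match immediately: by (\ref{Eq6.3}) an edge $(xy)$ lies in $G_{N}^{(1)}$ exactly when $(xy)\in G^{(1)}$ and $x,y\in G_{N}^{(0)}$, so the coefficient $-c(xy)$ occurs on both sides for precisely the same pairs $x\sim y$ with $x\in G_{N}^{(0)}$. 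This reduces the whole lemma to a one-line basis computation once the diagonal term is settled.

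The step I expect to require the most care is exactly that diagonal (boundary) term. Compression produces the \emph{ambient} degree $\mathcal{B}_{c}(y)=\sum_{z\sim y}c(yz)$ at the vertex $y$, whereas a literal reading of (\ref{Eq3.6}) for $(G_{N},c_{N})$ gives only the \emph{restricted} degree $\mathcal{B}_{c_{N}}(y)=\sum_{z\sim y,\,z\in G_{N}^{(0)}}c(yz)$; at a boundary vertex of $G_{N}$ these differ by the total conductance on edges leaving $G_{N}^{(0)}$. The resolution is that $\Delta_{N}$ in (\ref{Eq6.4}) must carry the boundary data inherited from $G$, i.e.\ each diagonal entry is kept equal to the full ambient degree $\mathcal{B}_{c}(y)$ while only the off-diagonal couplings into $(G_{N}^{(0)})^{c}$ are severed. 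This is precisely the ``boundary condition'' that the compression $P_{N}(\cdot)P_{N}$ records, and I would therefore make the convention explicit at the outset; with it, the diagonal entries agree, and (\ref{Eq6.4}) follows from the basis computation above.
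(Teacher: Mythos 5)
Your computation is correct, and it is essentially the same computation as the paper's own proof: the paper verifies (\ref{Eq6.4}) pointwise by writing, for $v\in\mathcal{D}$,
\begin{align*}
\left(P_{N}\Delta P_{N}v\right)\left(x\right) &=\chi_{G_{N}}\left(x\right)\sum_{y\sim x}c\left(xy\right)\left(\left(\chi_{G_{N}}v\right)\left(x\right)-\left(\chi_{G_{N}}v\right)\left(y\right)\right)\\
&=\sum_{y\sim x\text{ in }G_{N}}c_{N}\left(xy\right)\left(v\left(x\right)-v\left(y\right)\right)\text{,}
\end{align*}
and the second equality is precisely where the diagonal boundary term you isolated gets silently dropped: for $x\in G_{N}^{\left(0\right)}$ the first line equals
\[
\sum_{\substack{y\sim x\\ y\in G_{N}^{\left(0\right)}}}c\left(xy\right)\left(v\left(x\right)-v\left(y\right)\right)+\Bigl(\sum_{\substack{y\sim x\\ y\notin G_{N}^{\left(0\right)}}}c\left(xy\right)\Bigr)v\left(x\right)\text{,}
\]
and the second summand does not vanish at boundary vertices. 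So you have not deviated from the paper; you have caught a gap in it. With $\Delta_{N}$ read literally as the graph Laplacian of $\left(G_{N},c_{N}\right)$ via (\ref{Eq3.6}), identity (\ref{Eq6.4}) fails at every vertex of $G_{N}^{\left(0\right)}$ that has a neighbor outside $G_{N}^{\left(0\right)}$, and such vertices always exist when $G$ is infinite and connected. Concretely: $G=\mathbb{Z}$, $c\equiv1$, $G_{N}^{\left(0\right)}=\left\{0,1,\ldots,N\right\}$; then $P_{N}\Delta P_{N}$ has every diagonal entry equal to $2$, while $\Delta_{N}$ has diagonal entries $1$ at the endpoints $0$ and $N$, so the two operators differ by the nonzero diagonal matrix supported on $\left\{0,N\right\}$.

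Your proposed repair --- keep the ambient degree $\mathcal{B}_{c}\left(y\right)$ on the diagonal and sever only the couplings into $\left(G_{N}^{\left(0\right)}\right)^{c}$ --- is indeed the unique definition of $\Delta_{N}$ under which (\ref{Eq6.4}) holds, and you are right that it must be stated as an explicit convention; it is a Dirichlet-type truncation of $\Delta$, not the intrinsic Laplacian of the finite graph. Two remarks. First, under that convention (\ref{Eq6.4}) is nearly a tautology, since the redefined $\Delta_{N}$ \emph{is} the compression; the real content of your write-up is the matrix comparison showing that the two candidate definitions agree in all off-diagonal entries and differ on the diagonal exactly by the outgoing conductance $\mathcal{B}_{c}\left(y\right)-\mathcal{B}_{c_{N}}\left(y\right)$ at boundary vertices. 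Second, whichever convention is fixed has to be propagated downstream: Lemma \ref{Lem6.3} and the Perron--Frobenius estimate in Theorem \ref{Theo6.5} quantify $\Delta-\Delta_{N}$ by boundary data, and formula (\ref{Eq6.5}) as printed (whose right-hand side vanishes for all $x\in G_{N}^{\left(0\right)}$) is consistent with neither reading of $\Delta_{N}$, because under both readings the true difference is nonzero at boundary vertices of $G_{N}$.
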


\begin{proof}
For $v\in\mathcal{D}=$ finite linear combinations of $\{\delta_{x}|x\in
G^{\left(  0\right)  }\}$, we have
\begin{align*}
\left(  P_{N}\Delta P_{N}v\right)  \left(  x\right)   &  =\chi_{G_{N}}\left(
x\right)  \sum_{y\sim x}c\left(  xy\right)  \left(  \left(  \chi_{G_{N}%
}v\right)  \left(  x\right)  -\left(  \chi_{G_{N}}v\right)  \left(  y\right)
\right) \\
&  =\sum_{y\sim x\text{ in }G_{N}}c_{N}\left(  xy\right)  \left(  v\left(
x\right)  -v\left(  y\right)  \right) \\
&  =\left(  \Delta_{N}v\right)  \left(  x\right)  \text{;}%
\end{align*}
proving the formula (\ref{Eq6.4}).
\end{proof}

\begin{lemma}
\label{Lem6.3}Let $G=(G^{\left(  0\right)  },G^{\left(  1\right)  })$, and
$c:G^{\left(  1\right)  }\rightarrow\mathbb{R}_{+}$, be as in Lemma
\ref{Lem6.2} and Definition \ref{Def6.1}. Then for all $v\in\mathcal{D}$ and
$x\in G^{\left(  0\right)  }$, we have the following formula for the
difference operator $\Delta-\Delta_{N},~N=1,2,\ldots:$
\begin{equation}
\left(  \Delta v\right)  \left(  x\right)  -\left(  \Delta_{N}v\right)
\left(  x\right)  =-\chi_{G_{N}^{c}}\left(  x\right)  \sum_{%
\genfrac{}{}{0pt}{}{y\sim x}{y\in G_{N}}%
}c\left(  xy\right)  v\left(  y\right)  \text{.} \label{Eq6.5}%
\end{equation}
In other words, the contribution to $\Delta-\Delta_{N}$ comes from the
boundary of $G_{N}=$ the edges $e\in G^{\left(  1\right)  }$ s.t. one vertex
in $e$ is in $G_{N}^{\left(  0\right)  }$ and the other in the complement.
\end{lemma}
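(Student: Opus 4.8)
The plan is to derive (\ref{Eq6.5}) as a one-line consequence of the compression identity in Lemma~\ref{Lem6.2} followed by a short, purely combinatorial bookkeeping with the explicit formula (\ref{Eq3.6}). By Lemma~\ref{Lem6.2}, $\Delta_N=P_N\Delta P_N$, where $P_N$ is multiplication by $\chi_{G_N^{(0)}}$; writing $Q_N:=I-P_N$ (multiplication by $\chi_{(G_N^{(0)})^c}$), I would first record the operator identity $\Delta-P_N\Delta P_N=Q_N\Delta+\Delta Q_N-Q_N\Delta Q_N$. Every term on the right carries a factor $Q_N$, so $\Delta-\Delta_N$ either reads off values from, or is supported on, the complement $(G_N^{(0)})^c$; this is the operator-theoretic shadow of the closing sentence of the lemma, that $\Delta-\Delta_N$ lives on the edges crossing $\partial G_N$.

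For the pointwise formula I would then evaluate $(\Delta v-\Delta_N v)(x)$ directly from (\ref{Eq3.6}), splitting the (finite) neighbor sum $\sum_{y\sim x}$ into the part with $y\in G_N^{(0)}$ and the part with $y\in(G_N^{(0)})^c$, and treating $x\in G_N^{(0)}$ and $x\in(G_N^{(0)})^c$ as separate cases. In the difference, the diagonal term $\mathcal B_c(x)v(x)$ and the off-diagonal terms coming from edges $(xy)$ interior to $G_N^{(0)}$ (both endpoints inside) occur identically in $\Delta$ and in the compression $P_N\Delta P_N$, and hence cancel. Concretely, in the regime relevant for the approximation — $v$ supported in $G_N^{(0)}$, i.e. $P_Nv=v$ — the identity collapses to $\Delta v-\Delta_N v=(I-P_N)\Delta v=\chi_{(G_N^{(0)})^c}\,\Delta v$; evaluating at $x\in(G_N^{(0)})^c$ and using $v(x)=0$ together with $v(y)=0$ for $y\notin G_N^{(0)}$ leaves exactly $-\sum_{y\sim x,\;y\in G_N^{(0)}}c(xy)v(y)$, which is (\ref{Eq6.5}).

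I expect the only obstacle to be clerical rather than conceptual: keeping the signs and the indicator factors $\chi_{G_N}(x)$ and $\chi_{G_N}(y)$ straight through the case analysis, and recognizing that the entire interior (diagonal plus inside--inside edges) is common to $\Delta$ and $\Delta_N$ and therefore drops out, so that only the cross-boundary edges survive. There are no analytic subtleties: the standing hypothesis that $\operatorname*{nbh}(x)$ is finite for every $x$ makes each sum a finite sum, so all regroupings and the cancellation are justified termwise and no convergence argument is needed.
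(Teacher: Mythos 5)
Your algebra is correct, and at bottom your route is the paper's own: the paper proves (\ref{Eq6.5}) by subtracting the two defining sums, writing $\left(  \Delta v\right)  \left(  x\right)  -\left(  \Delta_{N}v\right)  \left(  x\right)  =\sum_{y\sim x}\left(  c\left(  xy\right)  -c_{N}\left(  xy\right)  \right)  \left(  v\left(  x\right)  -v\left(  y\right)  \right)  $ and then reading off the cross-boundary terms, while you reach the same cancellation through the compression identity of Lemma \ref{Lem6.2} together with $\Delta-P_{N}\Delta P_{N}=Q_{N}\Delta+\Delta Q_{N}-Q_{N}\Delta Q_{N}$ (with $Q_{N}:=I-P_{N}$) and the observation that $\Delta$ and $P_{N}\Delta P_{N}$ have identical matrix entries on the block $G_{N}^{\left(  0\right)  }\times G_{N}^{\left(  0\right)  }$. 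Within the regime you isolate --- $P_{N}v=v$, with $\Delta_{N}$ read as the compression $P_{N}\Delta P_{N}$ --- your derivation is complete and correct.

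The serious point is the restriction itself. The lemma asserts (\ref{Eq6.5}) for \emph{every} $v\in\mathcal{D}$ and every $x$; your proof delivers it only for $v$ supported in $G_{N}^{\left(  0\right)  }$. That hypothesis cannot be removed, because the unrestricted statement is false: choose $x\in G_{N}^{\left(  0\right)  }$ with a neighbor $y_{0}\notin G_{N}^{\left(  0\right)  }$ (such a pair exists, since $G_{N}^{\left(  0\right)  }$ is finite while $G$ is infinite and connected) and take $v=\delta_{y_{0}}$. Then the right-hand side of (\ref{Eq6.5}) vanishes at $x$ because $\chi_{G_{N}^{c}}\left(  x\right)  =0$, whereas the left-hand side is $\left(  \Delta\delta_{y_{0}}\right)  \left(  x\right)  -\left(  \Delta_{N}\delta_{y_{0}}\right)  \left(  x\right)  =-c\left(  xy_{0}\right)  \neq0$ under either reading of $\Delta_{N}$ (as $P_{N}\Delta P_{N}$, since $P_{N}\delta_{y_{0}}=0$; as the $\left(  G_{N},c_{N}\right)  $-Laplacian, since $c_{N}\left(  xy_{0}\right)  =0$). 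So $\Delta-\Delta_{N}$ also has nonzero entries in rows indexed by $G_{N}^{\left(  0\right)  }$, which (\ref{Eq6.5}) suppresses; the paper's own two-line proof silently discards exactly these terms, so this is a defect of the lemma as stated, not a gap you could have closed by more work. One further caution: your reduction leans on Lemma \ref{Lem6.2}, which is itself off by a diagonal boundary term if $\Delta_{N}$ literally means the $\left(  G_{N},c_{N}\right)  $-Laplacian; indeed for $x\in G_{N}^{\left(  0\right)  }$,
\[
\left(  P_{N}\Delta P_{N}v\right)  \left(  x\right)  =\left(  \Delta_{N}v\right)  \left(  x\right)  +v\left(  x\right)  \sum_{y\sim x,\;y\notin G_{N}^{\left(  0\right)  }}c\left(  xy\right)  \text{,}
\]
so your restricted identity is valid only for the compression reading (for the $\left(  G_{N},c_{N}\right)  $-Laplacian it still fails at inner-boundary vertices where $v\neq0$). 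The clean statement your argument actually establishes is: for $v\in P_{N}\mathcal{D}$, $\Delta v-P_{N}\Delta P_{N}v=\chi_{G_{N}^{c}}\Delta v$, which at $x\notin G_{N}^{\left(  0\right)  }$ equals $-\sum_{y\sim x,\,y\in G_{N}}c\left(  xy\right)  v\left(  y\right)  $; that is the version that should replace (\ref{Eq6.5}), and the estimate in Theorem \ref{Theo6.5} needs the same care at the step where this lemma is invoked.
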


\begin{proof}
Using the previous lemma, we get
\begin{align*}
\left(  \Delta v\right)  \left(  x\right)  -\left(  \Delta_{N}v\right)
\left(  x\right)   &  =\sum_{y\sim x}\left(  c\left(  xy\right)  -c_{N}\left(
xy\right)  \right)  \left(  v\left(  x\right)  -v\left(  y\right)  \right) \\
&  =-\chi_{G_{N}^{c}}\left(  x\right)  \sum_{%
\genfrac{}{}{0pt}{}{y\sim x}{y\in G_{N}}%
}c\left(  xy\right)  v\left(  y\right)  \text{.}%
\end{align*}

\end{proof}

\begin{definition}
\label{Def6.4}Let $G=(G^{\left(  0\right)  },G^{\left(  1\right)  })$, and
$c:G^{\left(  1\right)  }\rightarrow\mathbb{R}_{+}$ be given as in Theorem
\ref{Theo5.1}; and denote by $\Delta=\Delta_{c}$ the corresponding
\emph{selfadjoint} graph Laplacian. Setting
\begin{align}
S\left(  t\right)   &  :\,=\int_{0}^{\infty}e^{-t\lambda}P\left(
d\lambda\right)  \quad(\text{see }(\text{\ref{Eq5.9}})\text{-}%
(\text{\ref{Eq5.10}}))\label{Eq6.6}\\
&  =e^{-t\Delta},~t\in\mathbb{R}_{+};\nonumber
\end{align}
we see that $t\longmapsto S\left(  t\right)  $ is a contractive semigroup of
selfadjoint operators in $\ell^{2}(G^{\left(  0\right)  })$; in particular,
\begin{align}
S\left(  s+t\right)   &  =S\left(  s\right)  S\left(  t\right)  ,~\forall
s,t\in\mathbb{R}_{+}\text{ and}\label{Eq6.7}\\
S\left(  0\right)   &  =I_{\ell^{2}}\text{.}\nonumber
\end{align}

The semigroup consists of \emph{bounded} operators while the infinitesimal
generator $\Delta=\Delta_{c}$ is typically unbounded, albeit with dense domain
in $\ell^{2}(G^{\left(  0\right)  })$. Moreover, the semigroup helps us
identify dynamics as infinite graphs of resistors.

Returning to approximations, as in Definition \ref{Def6.1}, we now get a
sequence of Laplacians $\Delta_{N}$, $N=1,2,\ldots$, and a corresponding
sequence of dynamical semigroups, $S_{N}\left(  t\right)  :\,=e^{-t\Delta_{N}%
},~N=1,2,\ldots$.

Let $N$ be fixed, and let $\partial G_{N}$ be the boundary of $G_{N}$
\emph{(}Definition \ref{Def6.1}\emph{)}. Then the finite matrix
\begin{equation}
T_{N}:\,=\left(  c\left(  xy\right)  \right)  _{x,y\in\partial G_{N}%
}\label{Eq6.8}%
\end{equation}
is positive, and has a Perron-Frobenius eigenvalue $\lambda_{N}=\lambda
_{N}(PF)=$ the spectral radius of $T_{N}$.
\end{definition}

\begin{theorem}
\label{Theo6.5}Let $(G,c)$ be a graph/conductance system, and let
$(G_{N})_{N\in\mathbb{N}}$ ascending system of graphs such that \emph{(}%
\ref{Eq6.2}\emph{)} is satisfied. Let $S\left(  t\right)  ,$ and $S_{N}\left(
t\right)  $, $N=1,2,\ldots$, be the corresponding semigroups of bounded operators.

Then for all $v\in\ell^{2}(G^{\left(  0\right)  })$, we have the following
estimate:
\begin{equation}
\left\Vert S\left(  t\right)  v-S_{N}\left(  t\right)  v\right\Vert _{\ell
^{2}}\leq\lambda_{N}\left(  PF\right)  t\left\Vert v\right\Vert _{\ell^{2}%
},~\forall t\in\mathbb{R}_{+},~N=1,2,\ldots\text{.} \label{Eq6.9}%
\end{equation}

\end{theorem}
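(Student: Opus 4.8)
The plan is to compare $S(t)=e^{-t\Delta}$ with $S_N(t)=e^{-t\Delta_N}$ through the variation-of-constants (Duhamel) formula, and then to reduce the whole estimate to a single operator-norm bound on the difference of generators, which Lemma~\ref{Lem6.3} already localizes to the boundary of $G_N$. Concretely, for $v$ in the dense domain I would differentiate the product $s\mapsto S_N(t-s)S(s)v$, using $\tfrac{d}{ds}S_N(t-s)=\Delta_N S_N(t-s)$ and $\tfrac{d}{ds}S(s)=-\Delta S(s)$ together with the fact that $\Delta_N$ commutes with its own semigroup, to obtain
\[
\frac{d}{ds}\bigl(S_N(t-s)S(s)v\bigr)=S_N(t-s)(\Delta_N-\Delta)S(s)v .
\]
Integrating from $0$ to $t$ and evaluating the endpoints gives
\[
S(t)v-S_N(t)v=\int_0^t S_N(t-s)(\Delta_N-\Delta)S(s)v\,ds,
\]
an identity that extends to all $v\in\ell^2(G^{(0)})$ by density and strong continuity.

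Next I would exploit positivity. By Lemma~\ref{Lem3.2} the operator $\Delta$ is positive semidefinite, and the identical computation shows $\Delta_N\ge 0$ as well; since both are selfadjoint with spectrum in $[0,\infty)$, the functional calculus yields the contraction bounds $\|S(t)\|\le 1$ and $\|S_N(t)\|\le 1$ for all $t\ge 0$. Inserting these into the Duhamel integral, together with $\|S(s)v\|\le\|v\|$, reduces the theorem to the single inequality
\[
\|(\Delta-\Delta_N)w\|_{\ell^2}\le \lambda_N(PF)\,\|w\|_{\ell^2},\qquad\forall w\in\ell^2(G^{(0)}),
\]
for then $\|S(t)v-S_N(t)v\|\le\int_0^t\lambda_N(PF)\|v\|\,ds=\lambda_N(PF)\,t\,\|v\|$, which is exactly (\ref{Eq6.9}).

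The heart of the argument is establishing this norm bound from Lemma~\ref{Lem6.3}. That lemma shows $(\Delta-\Delta_N)w$ is supported on the outer boundary vertices $\partial_{\mathrm{out}}=\{x\in G_N^c\mid x\sim G_N\}$ and depends only on the values of $w$ on the inner boundary $\partial_{\mathrm{in}}=\{y\in G_N\mid y\sim G_N^c\}$, via $((\Delta-\Delta_N)w)(x)=-\sum_{y\in\partial_{\mathrm{in}},\,y\sim x}c(xy)w(y)$. Hence $\Delta-\Delta_N$ is, up to sign, the operator with matrix $M=(c(xy))_{x\in\partial_{\mathrm{out}},\,y\in\partial_{\mathrm{in}}}$, so that $\|\Delta-\Delta_N\|=\|M\|$, the largest singular value of $M$. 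Because $c(xy)=c(yx)\ge 0$, that singular value equals the spectral radius of the symmetric nonnegative block matrix $\left(\begin{smallmatrix}0 & M^{\ast}\\ M & 0\end{smallmatrix}\right)$, and this block matrix is entrywise dominated by $T_N=(c(xy))_{x,y\in\partial G_N}$ from Definition~\ref{Def6.4}, since $\partial G_N=\partial_{\mathrm{in}}\sqcup\partial_{\mathrm{out}}$ and the diagonal blocks of $T_N$ are nonnegative. Monotonicity of the Perron--Frobenius spectral radius under entrywise domination then gives $\|M\|\le\rho(T_N)=\lambda_N(PF)$, the desired bound.

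I expect the third step to be the main obstacle: one must first read off from Lemma~\ref{Lem6.3} that $\Delta-\Delta_N$ is a genuinely boundary object (no interior contribution survives), and then convert its operator norm into a statement about the spectral radius of $T_N$. The decisive ingredient is the comparison principle $0\le A\le B\Rightarrow\rho(A)\le\rho(B)$ for entrywise-nonnegative matrices, applied to the bipartite matrix built from $M$ and to $T_N$; once this norm estimate is secured, the Duhamel identity and the contractivity of the two semigroups turn it into (\ref{Eq6.9}) with no further difficulty.
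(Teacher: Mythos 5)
Your proposal is correct and follows essentially the same route as the paper's proof: the Duhamel identity (the paper's (\ref{Eq6.10})), contractivity of both semigroups coming from positivity of the generators, and the reduction of (\ref{Eq6.9}) to the operator-norm bound $\left\Vert \Delta-\Delta_{N}\right\Vert \leq\lambda_{N}\left(  PF\right)$. The only difference is that where the paper disposes of that bound with a bare citation of Lemma \ref{Lem6.3} and (\ref{Eq6.8}), you supply the supporting argument explicitly --- identifying $\Delta-\Delta_{N}$ with the boundary block $M$, passing to the symmetric bipartite matrix $\left(  \begin{smallmatrix}  0 & M^{\ast}\\ M & 0 \end{smallmatrix}  \right)$, and invoking entrywise Perron--Frobenius monotonicity against $T_{N}$ --- which is a filling-in of detail rather than a genuinely different approach.
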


\begin{proof}
With the use of \emph{(}\ref{Eq5.8}\emph{)}-\emph{(}\ref{Eq5.9}\emph{)}, we
get the integral formula:
\begin{equation}
e^{-t\Delta_{N}}-e^{-t\Delta}=\int_{0}^{t}e^{-\left(  t-s\right)  \Delta
}\left(  \Delta-\Delta_{N}\right)  e^{-s\Delta_{N}}~ds\text{.} \label{Eq6.10}%
\end{equation}

Since the operators on both sides in (\ref{Eq6.10}) are bounded, it is enough
to verify the estimate (\ref{Eq6.9}) for vectors $v$ in the dense domain
$\mathcal{D}$.

Using new Lemma \ref{Lem6.3}, we get the following estimates on the respective
$\ell^{2}$-norms:
\begin{align*}
\left\Vert S\left(  t\right)  v-S_{N}\left(  t\right)  v\right\Vert _{\ell
^{2}} &  \leq\int_{0}^{t}\left\Vert \left(  \Delta-\Delta_{N}\right)
S_{N}\left(  s\right)  v\right\Vert _{\ell^{2}}\text{ (by (\ref{Eq6.10}))}\\
&  \leq\lambda_{N}\left(  PF\right)  \int_{0}^{t}\left\Vert S_{N}\left(
s\right)  v\right\Vert ~ds\text{ (by Lemma \ref{Lem6.3} and (\ref{Eq6.8}))}\\
&  \leq\lambda_{N}\left(  PF\right)  \left\Vert v\right\Vert _{\ell^{2}}%
\int_{0}^{t}~ds\\
&  =\lambda_{N}\left(  PF\right)  t\left\Vert v\right\Vert _{\ell^{2}}\text{,}%
\end{align*}
which is the desired conclusion.
\end{proof}

\subsection{\textbf{Periodic boundary conditions\label{PerBoundCond}}}

\begin{example}
\label{Examp6.6}We now compare Example \ref{Examp5.2} with an associated
family of finite graphs $G_{N}$ where $N\in\mathbb{N}$. Let $\mathbb{Z}%
_{N}=\mathbb{Z}/N\mathbb{Z}\simeq\left\{  0,1,2,\ldots,N-1\right\}  $ be the
cyclic group of order $N$. Introduce nearest neighbors as in Example
\ref{Examp5.2} \emph{(}the $\mathbb{Z}$-case\emph{)} with the modification for
$G_{N}$ given by $0\sim\left(  N-1\right)  $, in other words that there is an
edge connecting $0$ to $N-1$.
\end{example}

It follows that the graph Laplacian $\Delta_{N}$ for $G_{N}$ is the given by
the finite matrix
\[
\left[
\begin{array}
[c]{rrrrrrrr}%
2 & -1 & 0 & 0 & 0 & \cdots & 0 & -1\\
-1 & 2 & -1 & 0 & 0 & \cdots & 0 & 0\\
0 & -1 & 2 & -1 & 0 & \cdots & 0 & 0\\
\vdots &  &  &  &  &  &  & \vdots\\
0 & 0 & \cdots &  &  & 2 & -1 & 0\\
0 & 0 & \cdots &  &  & -1 & 2 & -1\\
-1 & 0 & \cdots &  & \cdots & 0 & -1 & 2
\end{array}
\right]  \text{.}%
\]

The spectrum of $\Delta_{N}$ is as follows:
\begin{align}
\operatorname*{spec}\left(  \Delta_{N}\right)   &  =\left\{  2\left(
1-\cos\left(  \frac{2\pi k}{N}\right)  \right)  |k=0,1,\ldots,N-1\right\}
\label{Eq6.11}\\
&  =\left\{  4\sin^{2}\left(  \frac{\pi k}{N}\right)  |k=0,1,\ldots
,N-1\right\}  \text{.}\nonumber
\end{align}
Comparing with (\ref{Eq5.33})-(\ref{Eq5.34}), we see that the spectra converge
in a natural sense; with the infinite model in Ex \ref{Examp5.2} being a limit
of $N$-periodic boundary condition as $N\rightarrow\infty$.%

\begin{center}
\includegraphics[
natheight=6.000000in,
natwidth=8.000000in,
height=3.604in,
width=3.4852in
]%
{EssentialSelfAdjoint/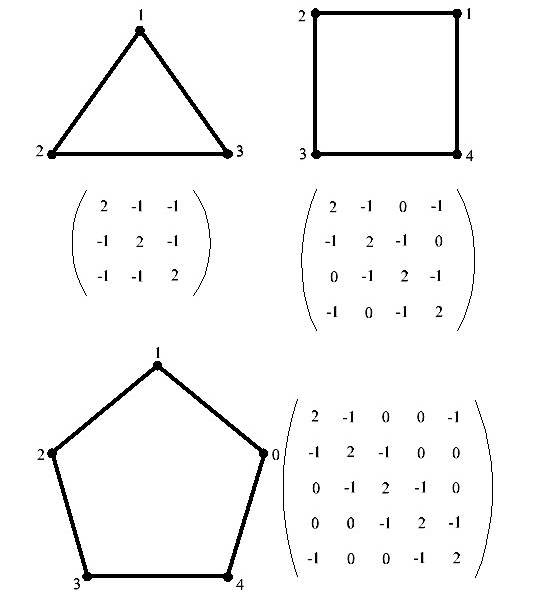}%
\\
\textbf{Fig. 4. }The vertices and edges in $G_{N}$ for $N=3,4$ and $5$.
\label{Fig4}%
\end{center}

The spectrum of the cyclic graph Laplacian $\Delta_{N}$ of the graphs $G_{N}$,
$N=3,4$ and $6$ (in Fig. 3) may have multiplicity; see (\ref{Eq6.11}). This
holds in fact for all values of $N$. Specifically,

\begin{itemize}
\item[ ] $\operatorname*{spec}\left(  \Delta_{3}\right)  =\left\{
0,3\right\}  $ with $\lambda=3$ having multiplicity $2$.

\item[ ] $\operatorname*{spec}\left(  \Delta_{4}\right)  =\left\{
0,2,4\right\}  $ with $\lambda=2$ having multiplicity $2$.

\item[ ] $\operatorname*{spec}\left(  \Delta_{6}\right)  =\left\{
0,1,3,4\right\}  $ now with $\lambda=1$ and $\lambda=3$ each having
multiplicity $2$.
\end{itemize}

Hence for $D=1$, we get the following distinction between the spectral theory
of the cyclic graph Laplacians $\Delta_{N}$ for $N<\infty$ on the one hand and
$\Delta$ in Example \ref{Examp5.2} on the other: The commutant of $\Delta$ is
an abelian algebra of operators in $\ell^{2}\left(  \mathbb{Z}\right)  $,
while the operators in $\ell^{2}\left(  \mathbb{Z}_{N}\right)  $ which commute
with $\Delta_{N}$ form a non-abelian algebra.

\begin{proposition}
\label{Prop6.7}\emph{(Cyclic graphs)}\thinspace Let $N\in\mathbb{N}$, $N\geq
3$; and let $G_{N}$ be the corresponding cyclic graph with graph Laplacian
$\Delta_{N}$; i.e., with
\[
G_{N}^{\left(  0\right)  }=\left\{  0,1,2,\ldots,N-1\right\}  .
\]
Then the voltage potential $v\in\ell^{2}\left(  \mathbb{Z}_{N}\right)  $
solving $\Delta_{N}v=\delta_{0}-\delta_{1}$is
\[
\left\{
\begin{array}
[c]{l}%
v_{0}=0\\
v_{1}=-\frac{N-1}{N}\\
v_{2}=-\frac{N-2}{N}\\
\multicolumn{1}{c}{\vdots}\\
v_{N-2}=-\frac{2}{N}\\
v_{N-1}=-\frac{1}{N}\text{.}%
\end{array}
\right.
\]

\end{proposition}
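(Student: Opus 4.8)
The plan is to prove the formula by first establishing that a solution exists and is unique up to an additive constant, and then verifying the displayed closed form by direct substitution into the cyclic Laplacian. I would write the claimed solution compactly as $v_0=0$ and $v_k=-(N-k)/N$ for $1\le k\le N-1$, which reproduces every entry in the statement.

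First I would note that $G_N$ is a connected finite graph, so the reasoning behind (\ref{Eq5.27}) applies: the symmetric positive semidefinite matrix $\Delta_N$ satisfies $\langle u,\Delta_N u\rangle=\tfrac{1}{2}\mathcal{E}_{c_N}(u)$, which vanishes precisely when $u$ is constant on the single connected component. Hence $\ker\Delta_N$ is exactly the constants, and the range of $\Delta_N$ is the orthogonal complement of the constants, i.e. the hyperplane of functions $f$ with $\sum_n f(n)=0$. Since $\sum_n(\delta_0-\delta_1)(n)=1-1=0$, the right-hand side lies in this range, so the equation $\Delta_N v=\delta_0-\delta_1$ has a solution, unique modulo an additive constant; the normalization $v_0=0$ then selects the unique representative, which I claim is the one displayed.

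Next I would record the cyclic Laplacian in the form
\[
(\Delta_N v)(n)=2v_n-v_{n-1}-v_{n+1},\qquad n\in\mathbb{Z}_N,
\]
with all indices read modulo $N$, so that $v_{-1}=v_{N-1}$ and $v_N=v_0$. I would then verify $\Delta_N v=\delta_0-\delta_1$ by checking four cases: the two distinguished vertices $n=0$ and $n=1$, the generic interior vertices $2\le n\le N-2$, and the wrap-around vertex $n=N-1$. For interior $n$ the three consecutive values $v_{n-1},v_n,v_{n+1}$ form an arithmetic progression with common difference $1/N$, so $2v_n-v_{n-1}-v_{n+1}=0$, matching the vanishing right-hand side; at $n=0$ and $n=1$ the wrap-around value $v_{N-1}=-1/N$ and the jump across the cut edge $(0,1)$ combine to give $+1$ and $-1$ respectively; and at $n=N-1$ the neighboring values $v_{N-2}=-2/N$ and $v_0=0$ again yield $0$.

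The computation presents no real obstacle; the only point requiring care is the cyclic (mod $N$) bookkeeping at the three non-generic vertices $0$, $1$, and $N-1$, where the edge joining $0$ and $N-1$ interrupts the naive telescoping valid along the interior arc. For added insight I would also indicate the physical derivation via Lemma \ref{Lem5.12}: injecting one unit of current at $0$ and extracting it at $1$ splits the flow between the single edge $(0,1)$ of resistance $1$ and the complementary arc of resistance $N-1$, so the parallel-resistance value $(N-1)/N$ equals the voltage drop $v_0-v_1$, while the current $1/N$ running along the long arc through unit resistors produces the linearly decreasing potential $v_k=-(N-k)/N$. This recovers the stated formula from first principles rather than by verification alone.
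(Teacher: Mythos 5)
Your proof is correct and matches the paper's approach: the paper's own proof is simply the phrase ``a direct computation,'' and your case-by-case substitution into $(\Delta_N v)(n)=2v_n-v_{n-1}-v_{n+1}$ (mod $N$) is exactly that computation, carried out carefully at the three non-generic vertices $0$, $1$, $N-1$. Your two additions --- pinning down uniqueness via $\ker\Delta_N=\mathrm{constants}$ with the normalization $v_0=0$, and the parallel-resistance derivation $\bigl(1\cdot(N-1)\bigr)/N=(N-1)/N$ via Lemma \ref{Lem5.12} --- are both sound and supply details the paper leaves implicit.
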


\begin{proof}
A direct computation; see also Fig. 4, and eq. (\ref{Eq6.10}).
\end{proof}

\section{Boundary Conditions\label{BoundaryCond}}

In the study of infinite graphs $G$, boundary conditions play an important
role; for example if a current escapes to infinity in \textquotedblleft finite
time,\textquotedblright\ conditions must then be assigned \textquotedblleft at
infinity.\textquotedblright

One way to do this is to first do computations in a system of finite graphs
$G_{N}$ which exhausts the given graph $G$ in a suitable way. Do computations
on each finite subgraph $G_{N}$ of the fixed infinite graph $G$, and then take
the limit as $N$ tends to infinity. There are several ways one may do the
computations on each individual $G_{N}$, for example look for symmetry, or
look for a suitable periodicity, or similarity up to scale. In the simplest
cases, this allows the use of a finite Fourier transform, thus making $G_{N}$
periodic, or cyclic. The case of $G=\mathbb{Z}$ (the rank-1 integer graph),
and $G_{N}=$ the cyclic group of order $N$ is done in all detail in Example
\ref{Examp6.6} above.

Some advantages of the cyclic approach: One, the spectrum comes out given
explicitly by a closed formula, thus making it clear how the limit
$N\rightarrow\infty$ works also for spectra, getting the continuous spectrum
in the infinite limit.

\begin{example}
\label{Examp7.0}In this section we compare the two cases, continuous vs.
discrete. As noted, our graph Laplacians are second order (or more than second
order) difference operators in a generalized sense.
\end{example}

They have spectrum contained in the half-line $[0,\infty)$, so generalizing
\begin{equation}
\left(  \Delta v\right)  \left(  x\right)  :\,=-\left(  \frac{d}{dx}\right)
^{2}v\left(  x\right)  \label{Eq7.1}%
\end{equation}
with the Hilbert space $\mathcal{H}:\,=L^{2}\left(  0,\infty\right)  $, and
dense domain
\begin{equation}
\mathcal{D}:\,=\left\{  v\in C^{2}\left(  0,\infty\right)  |v,v^{\prime
},v^{\prime\prime}\in L^{2}\left(  0,\infty\right)  ,\text{ and }v\left(
0\right)  =v^{\prime}\left(  0\right)  =0\right\}  \text{;} \label{Eq7.2}%
\end{equation}
i.e., with vanishing boundary conditions on $v$ and $v^{\prime}\left(
x\right)  =\frac{dv}{dx}$ at $x=0$.

We get the spectral estimate:
\begin{equation}
\langle v,\Delta v\rangle_{L^{2}}\geq0,\text{~}\forall v\in\mathcal{D}\text{.}
\label{Eq7.3}%
\end{equation}

A simple verification shows that for the adjoint operator $\Delta^{\ast}$ we
have:
\begin{equation}
\operatorname*{dom}\left(  \Delta^{\ast}\right)  =\left\{  v\in L^{2}\left(
0,\infty\right)  |v^{\prime},v^{\prime\prime}\in L^{2}\left(  0,\infty\right)
\right\}  \text{.} \label{Eq7.4}%
\end{equation}

Comparing (\ref{Eq7.2}) and (\ref{Eq7.4}) we see that $\Delta$ results from
$\Delta^{\ast}$ by \textquotedblleft removing\textquotedblright\ the two
boundary conditions which specify the domain $\mathcal{D}$ of $\Delta$.

Moreover, the defect space
\begin{equation}
\mathcal{D}_{+}:\,=\left\{  v\in\operatorname*{dom}\left(  \Delta^{\ast
}\right)  |\Delta^{\ast}v=-v\right\}  \label{Eq7.5}%
\end{equation}
is one-dimensional; in fact,
\begin{equation}
\mathcal{D}_{+}=\mathbb{C}e^{-x}\text{.} \label{Eq7.6}%
\end{equation}

The selfadjoint extensions of $\Delta$ on $\mathcal{D}$ are parametrized by
pairs of numbers $A,B\in\mathbb{R}$, not both zero, such that
\begin{equation}
Av\left(  0\right)  +Bv^{\prime}\left(  0\right)  =0\text{.} \label{Eq7.7}%
\end{equation}

\begin{example}
\label{Examp7.1}Let $G=(G^{\left(  0\right)  },G^{\left(  1\right)  })$ be the
following graph generalizing the continuous example:
\begin{align}
G^{\left(  0\right)  }  &  :\,=\mathbb{N}_{0}=\left\{  0,1,2,3,\ldots\right\}
,\label{Eq7.8}\\
G^{\left(  1\right)  }  &  :\,=\left\{  \left(  01\right)  ,\left(
n,n\pm1\right)  ;~n\in\mathbb{N}\right\}  \text{.}\nonumber
\end{align}

Pick $\lambda>1$, and set
\begin{equation}
c\left(  n,n+1\right)  =\lambda^{n+1}\text{.} \label{Eq7.9}%
\end{equation}
Then the corresponding graph Laplacian is unbounded; and
\begin{equation}
\left(  \Delta v\right)  \left(  0\right)  =\lambda v_{0}-\lambda
v_{1}\text{;} \label{Eq7.10}%
\end{equation}%
\begin{equation}
\left(  \Delta v\right)  \left(  n\right)  =-\lambda^{n}v_{n-1}+\lambda
^{n}\left(  1+\lambda\right)  v_{n}-\lambda^{n+1}v_{n+1}\text{,}\forall
n\in\mathbb{N}\text{.} \label{Eq7.11}%
\end{equation}

For domain $\mathcal{D}$, we take all $v\in\ell^{2}\left(  \mathbb{N}%
_{0}\right)  $ s.t. $v_{n}=0$ except for a finite set of values of $n$. the
matrix representation of $\Delta$ is presented in Fig. 4:
\begin{gather*}
\left(
\begin{array}
[c]{cccccccc}%
\lambda & -\lambda & 0 & 0 & \cdots & 0 & 0 & \cdots\\
-\lambda & \lambda\left(  1+\lambda\right)  & -\lambda^{2} & 0 &  &  &  & \\
0 & -\lambda^{2} & \lambda^{2}\left(  1+\lambda\right)  & -\lambda^{3} &  &  &
& \\
0 &  &  &  &  &  &  & \\
\vdots &  &  &  & -\lambda^{n} &  &  & \\
0 &  &  & -\lambda^{n} & \lambda^{n}\left(  1+\lambda\right)  & -\lambda^{n+1}
&  & \cdots\\
0 &  &  &  & -\lambda^{n+1} &  & \ddots &
\end{array}
\right) \\
\text{Fig. 4.}%
\end{gather*}

By Parseval's formula, we have the isometric isomorphism $\ell^{2}\left(
\mathbb{N}_{0}\right)  =\mathcal{H}_{+}=$ the Hardy space of analytic
functions on $D=\{z\in\mathbb{C}$; $\left\vert z\right\vert <1\}$
\[
v\left(  z\right)  :\,=\sum_{n=0}^{\infty}v_{n}z^{n}\text{;}%
\]
and
\begin{equation}
\left\Vert v\right\Vert _{\mathcal{H}_{+}}^{2}=\sum_{n=0}^{\infty}\left\vert
v_{n}\right\vert ^{2}\text{.} \label{Eq7.12}%
\end{equation}

In the Hardy space representation we have
\begin{equation}
\left(  \Delta v\right)  \left(  z\right)  =\left(  1+\lambda\right)  v\left(
\lambda z\right)  -\lambda zv\left(  \lambda z\right)  -z^{-1}v\left(  \lambda
z\right)  \label{Eq7.13}%
\end{equation}
on the dense space of functions $v$ on $\mathbb{C}$ which extend analytically
to $D_{\lambda}:\,=\{z\in\mathbb{C};\left\vert z\right\vert <\lambda\}$.

We now show that there are no non-zero solutions to
\begin{equation}
\Delta_{\lambda}^{\ast}v=-v\text{,} \label{Eq7.14}%
\end{equation}
i.e., $v\in\operatorname*{dom}\left(  \Delta_{\lambda}^{\ast}\right)  $;
equivalently $\mathcal{D}_{+}\left(  \lambda\right)  =\left\{  0\right\}  $;
the defect space for the operator $\Delta_{\lambda}$ is trivial. So this is a
direct verification that $\Delta_{\lambda}$ is essentially selfadjoint; and
contrasting with \emph{(}\ref{Eq7.6}\emph{)} above.
\end{example}

To see this, combine (\ref{Eq7.13})\thinspace and (\ref{Eq7.14}). It follows
that every solution $v$ to (\ref{Eq7.14}) must have an infinite-product
representation given by
\begin{equation}
v\left(  z\right)  =\frac{\left(  z-1\right)  \left(  \lambda z-1\right)
}{\lambda z}v\left(  \lambda z\right)  \text{;} \label{Eq7.15}%
\end{equation}
and the limit of finite products as follows
\[
\frac{\left(  z-1\right)  \prod\limits_{k=1}^{n-1}\left(  \lambda
^{k}z-1\right)  ^{2}\left(  \lambda^{n}z-1\right)  }{z^{n}\lambda
^{\frac{n\left(  n+1\right)  }{2}}}\text{.}%
\]
These products do not have a non-zero representation consistent with the
isomorphism (\ref{Eq7.12}), and with (\ref{Eq7.12}).\bigskip


\section{Appendix\label{Append}}


\appendix

\section{Heisenberg's Infinite Banded Matrices}

We proved in sections 3 through 5 that in general, graph Laplacians
$\Delta_{c}$ are essentially selfadjoint operators in the $\ell^{2}$
sequence-Hilbert space. Recall that the axioms for our graph Laplacians
include the following given data: A graph $G=(G^{\left(  0\right)
},G^{\left(  1\right)  })$ and a fixed positive conductance function $c$
defined on the set of edges $G^{\left(  1\right)  }$. Every vertex $x$ of $G$
is connected to a finite set of neighbors in $G^{\left(  0\right)  }$. For
every fixed $x$ in $G^{\left(  0\right)  }$, this implies finiteness of the
set of $y$ in $G^{\left(  0\right)  }$ for which $c(xy)$ is nonzero. This
means in turn that the natural matrix representation of the operator
$\Delta_{c}$ is \textit{banded}; see section \ref{EnergyForm} for the
Definition. Note however that we place no boundedness restrictions on the
conductance function $c$.

Our proof of essentially selfadjoint for the operator $\Delta_{c}$ uses this
bandedness property in an essential way. In fact, starting with an infinite by
infinite matrix, it is generally difficult to turn it into a linear operator
in a Hilbert space unless it is assumed banded, see section \ref{OpTheory},
and the references cited there.

The purpose of this section is three-fold.

First to make precise the operator theory of banded infinite by infinite
matrices; and second to show that the infinite matrices used in representing
the operator algebra generated by Heisenberg's quantum mechanical momentum and
position observables consists of (infinite) banded matrices. Thirdly, we use
Heisenberg's (and Born's) computations to exhibit such banded operators which
are not essentially selfadjoint. The simplest such matrix $M$ is as follows:
let $P$ be Heisenberg's momentum operator and $Q$ the (dual) position
operator. Then we show that the monomial $M=QPQ$ is banded, but not
essentially selfadjoint. In fact, its deficiency indices are $(1,1)$.

\begin{definition}
\label{DefA1}Let $L$ be a countable \emph{(}typically infinite\emph{)}%
\thinspace set, and let $m:L\times L\rightarrow\mathbb{C}$ be a function on
$L\times L$. We say that $m$ is \emph{banded} iff for every $x\in L$, the set
\begin{equation}
\left\{  y\in L|m\left(  x,y\right)  \not =0\right\}  \label{EqA.1}%
\end{equation}
is finite.
\end{definition}

Let $\ell^{2}\left(  L\right)  $ be the sequence space with norm
\begin{equation}
\left\Vert v\right\Vert _{\ell^{2}}^{2}:\,=\sum_{x\in L}\left\vert v\left(
x\right)  \right\vert ^{2}<\infty\text{.} \label{EqA.2}%
\end{equation}

The sum on the right is the supremum of all the numbers $\sum_{x\in
F}\left\vert v\left(  x\right)  \right\vert ^{2}$ as $F$ ranges over all
finite subsets in $L$.

Let $\mathcal{D}$ be the dense subspace of all functions $v:L\rightarrow
\mathbb{C}$ such that the support set
\begin{equation}
\left\{  x\in L|v\left(  x\right)  \not =0\right\}  \label{EqA.3}%
\end{equation}
is finite. Equivalently, setting
\begin{equation}
\delta_{x}\left(  y\right)  =\left\{
\begin{array}
[c]{ll}%
1 & y=x\\
0 & y\not =x\text{;}%
\end{array}
\right.  \label{EqA.4}%
\end{equation}
the space $\mathcal{D}$ is then the linear span of the set of functions
$\{\delta_{x}|x\in L\}$; and these functions form an \textit{orthonormal}
basis for $\ell^{2}\left(  L\right)  $. Moreover, every Hilbert space
$\mathcal{H}$ is isomorphic to $\ell^{2}\left(  L\right)  $ for some set $L$.
The set $L$ is countable if and only if $\mathcal{H}$ is separable.

\begin{lemma}
\label{LemA.2}Let $m:L\times L\rightarrow\mathbb{C}$ be a banded function. For
$v\in\mathcal{D}\subseteq\ell^{2}\left(  L\right)  $, set
\begin{equation}
\left(  Mv\right)  \left(  x\right)  =\sum_{y\in L}m\left(  x,y\right)
v\left(  y\right)  \text{.} \label{EqA.5}%
\end{equation}

Then $M$ defines a linear operator $M:\mathcal{D}\rightarrow\mathcal{D}$, with
a well defined adjoint operator $M^{\ast}$. Moreover,
\begin{equation}
\mathcal{D}\subseteq\operatorname*{dom}\left(  M^{\ast}\right)  \label{EqA.6}%
\end{equation}
where $\operatorname*{dom}\left(  M^{\ast}\right)  $ is the domain of
$M^{\ast}$.
\end{lemma}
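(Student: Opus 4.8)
The plan is to check the three assertions separately, drawing on bandedness (Definition~\ref{DefA1}) and the adjoint criterion (Definition~\ref{Def3.3}). I would begin with well-definedness. Fix $v\in\mathcal{D}$ and let $F:=\{y\in L\mid v(y)\neq0\}$, a finite set. Then for each $x\in L$ the series defining $Mv$ in (\ref{EqA.5}) collapses to $(Mv)(x)=\sum_{y\in F}m(x,y)v(y)$, a finite sum; hence $(Mv)(x)$ is an unambiguous complex number and no convergence question arises. Note that this step uses only that $v$ has finite support, not bandedness.

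Next, to show $Mv\in\mathcal{D}$, I would estimate its support by
\[
\{x\in L\mid (Mv)(x)\neq0\}\subseteq\bigcup_{y\in F}\{x\in L\mid m(x,y)\neq0\}.
\]
If each \emph{column} set $\{x\mid m(x,y)\neq0\}$ is finite, this is a finite union of finite sets, so $Mv$ is finitely supported and lands in $\mathcal{D}$. This is exactly where I expect the main obstacle: the definition in (\ref{EqA.1}) literally guarantees only \emph{row} finiteness, whereas finite support of $Mv$ needs column finiteness. For the matrices that motivate the lemma this gap is harmless — the graph Laplacian is symmetric, $m(x,y)=m(y,x)$, and the Heisenberg monomials $P$, $Q$, $QPQ$ are Hermitian and genuinely banded in both indices — so row finiteness and column finiteness coincide. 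In the write-up I would therefore either take ``banded'' to include column finiteness as part of its standing meaning (the symmetric case covers every application here) or record the Hermitian symmetry explicitly, and I would call attention to this point rather than let it pass silently.

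Finally, for the adjoint I would invoke Definition~\ref{Def3.3} on the dense domain $\mathcal{D}$: it suffices to bound $|\langle\delta_x,Mu\rangle|$ uniformly in $u$ for each basis vector $\delta_x$. Since $\langle\delta_x,Mu\rangle=(Mu)(x)=\sum_{y}m(x,y)u(y)$, Cauchy--Schwarz yields
\[
|\langle\delta_x,Mu\rangle|\le\Big(\sum_{y}|m(x,y)|^{2}\Big)^{1/2}\|u\|_{\ell^{2}},
\]
and the coefficient $K_x:=(\sum_{y}|m(x,y)|^{2})^{1/2}$ is finite precisely because the $x$-th row of $m$ has finite support — here ordinary bandedness (\ref{EqA.1}) is exactly what is needed. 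Thus $\delta_x\in\operatorname{dom}(M^{\ast})$, and by linearity $\mathcal{D}=\operatorname{span}\{\delta_x\mid x\in L\}\subseteq\operatorname{dom}(M^{\ast})$, giving (\ref{EqA.6}); Riesz then identifies $M^{\ast}\delta_x$ as the $\ell^{2}$ vector $y\mapsto\overline{m(x,y)}$. Since $\mathcal{D}$ is dense, $M$ is densely defined and $M^{\ast}$ is well defined. The one thing to keep straight throughout is the row/column dichotomy: the ``$M$ maps into $\mathcal{D}$'' claim is powered by column finiteness, while ``$\delta_x\in\operatorname{dom}(M^{\ast})$'' is powered by row finiteness.
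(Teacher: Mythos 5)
Your argument is correct and, for the parts the paper actually proves, follows the same route: finiteness of each row makes (\ref{EqA.5}) a finite sum, and a Cauchy--Schwarz estimate gives the adjoint bound. The only organizational difference in the adjoint step is that you verify $\delta_{x}\in\operatorname{dom}\left(  M^{\ast}\right)$ for each basis vector and extend by linearity of $\operatorname{dom}\left(  M^{\ast}\right)$, whereas the paper fixes a general $v\in\mathcal{D}$ and estimates the double sum $\sum\sum\overline{m\left(  x,y\right)  }\,\overline{u\left(  y\right)  }v\left(  x\right)$ directly; both rest on exactly the same row-finiteness and yield the same constant.

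The substantive content of your write-up is the row/column observation, and you are right to press it: it is a genuine gap in the lemma as literally stated, not a pedantic one. Definition \ref{DefA1} requires only that each row set $\left\{  y\,|\,m\left(  x,y\right)  \neq0\right\}$ be finite, and that alone does not give $M:\mathcal{D}\rightarrow\mathcal{D}$. Concretely, on $L=\mathbb{N}_{0}$ take $m\left(  x,y\right)  =1$ if $y=0$ and $m\left(  x,y\right)  =0$ otherwise: every row has a single nonzero entry, yet $M\delta_{0}$ is the constant function $1$, which is not finitely supported and not even in $\ell^{2}\left(  L\right)$. The paper's own proof slides over exactly this point --- it asserts that finiteness of the sets (\ref{EqA.1}) and (\ref{EqA.3}) yields $Mv\in\mathcal{D}$, but that inference needs finiteness of the \emph{column} sets $\left\{  x\,|\,m\left(  x,y\right)  \neq0\right\}$, precisely as you say. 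Your repair (build column finiteness into ``banded,'' or invoke Hermitian symmetry, which holds for every matrix the paper uses: the graph Laplacians and the Heisenberg operators $P$, $Q$, $QPQ$) is the right fix and costs nothing in the applications. One refinement to your closing dichotomy: the column condition in fact underpins \emph{both} halves of the lemma, because the adjoint in Definition \ref{Def3.3} is only defined once $M$ is an operator mapping into the Hilbert space; if $Mu\notin\ell^{2}\left(  L\right)$ the inner products $\langle v,Mu\rangle$ there do not make sense. Once column finiteness is assumed, your accounting is exact --- the mapping property uses columns, the estimate $\left\vert \langle\delta_{x},Mu\rangle\right\vert \leq\bigl(  \sum_{y}\left\vert m\left(  x,y\right)  \right\vert ^{2}\bigr)  ^{1/2}\left\Vert u\right\Vert _{\ell^{2}}$ uses rows.
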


\begin{proof}
When $x\in L$ is fixed, the sum in (\ref{EqA.5}) is finite because the set
(\ref{EqA.1}) is finite by assumption. Using finiteness of both sets
(\ref{EqA.1}) and (\ref{EqA.3}) we conclude that $Mv$ in (\ref{EqA.5}) is in
$\mathcal{D}$ if $v$ is. And so, in particular, $Mv\in\ell^{2}\left(
L\right)  $; see (\ref{EqA.2}) and (\ref{EqA.4}).

To establish the inclusion \textquotedblleft$\subseteq$\textquotedblright\ in
(\ref{EqA.6}), we must show that for every $v\in\mathcal{D}$, there is a
constant $K=K\left(  v\right)  $ such that the following estimate holds:
\begin{equation}
\left\vert \langle Mu,v\rangle_{\ell^{2}}\right\vert \leq K\left\Vert
u\right\Vert _{\ell^{2}},\text{ for }\forall u\in\mathcal{D}\text{.}
\label{EqA.7}%
\end{equation}

The expression on the left in (\ref{EqA.7}) is
\begin{equation}
\underset{x,y\in L}{\sum\sum}\overline{m\left(  x,y\right)  }\overline
{u\left(  y\right)  }v\left(  x\right)  \text{.} \label{EqA.8}%
\end{equation}
But the terms in this double-sum vanish outside a finite subset in $L\times L$
an account of assumptions (\ref{EqA.1}) and (\ref{EqA.3}).

The modulus-square of the sum in (\ref{EqA.8}) is estimated by Schwarz by:
\[
\sum_{y\in L}\left\vert u\left(  y\right)  \right\vert ^{2}\sum_{y\in
L}\left\vert \sum_{x}m\overline{\left(  x,y\right)  }v\left(  x\right)
\right\vert ^{2}%
\]
which yields the desired estimate (\ref{EqA.7}).
\end{proof}

\begin{corollary}
\label{CorA.3}Let $M$ be a linear operator in a Hilbert space $\mathcal{H}.$
Then $M$ has a \emph{banded} matrix representation if and only if there is an
orthonormal basis \emph{(}ONB\emph{)}\thinspace in $\mathcal{H}$,
$\{e_{x}|x\in L\}$ such that the linear space $\mathcal{D}$ spanned by
$(e_{x})_{x\in L}$ is mapped into itself by $M$.
\end{corollary}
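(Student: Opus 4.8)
The plan is to fix an arbitrary ONB $\{e_{x}\mid x\in L\}$ of $\mathcal{H}$, to write $\mathcal{D}=\operatorname*{span}\{e_{x}\mid x\in L\}$ and $m(x,y):=\langle e_{x},Me_{y}\rangle$ for the associated matrix representation (throughout, $\mathcal{D}\subseteq\operatorname*{dom}(M)$, which is implicit in both halves of the statement), and to prove the sharper assertion that \emph{for each single fixed ONB} the matrix $m$ is banded if and only if $M\mathcal{D}\subseteq\mathcal{D}$. Quantifying existentially over ONBs on both sides then yields the corollary verbatim. The one elementary reduction I would record first is that, because $M$ is linear and $\mathcal{D}$ is the linear span of the $e_{x}$, the inclusion $M\mathcal{D}\subseteq\mathcal{D}$ holds iff $Me_{y}\in\mathcal{D}$ for each individual basis vector $e_{y}$.

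For the direction \emph{invariance $\Rightarrow$ banded}, I would expand $Me_{y}=\sum_{x}m(x,y)e_{x}$ by Parseval; the hypothesis $Me_{y}\in\mathcal{D}$ says precisely that only finitely many coefficients $m(x,y)$ are nonzero, which is exactly the finiteness condition of Definition \ref{DefA1} for $m$. For the converse \emph{banded $\Rightarrow$ invariance}, I would apply Lemma \ref{LemA.2}: a banded $m$ defines through (\ref{EqA.5}) an operator $N$ carrying $\mathcal{D}$ into $\mathcal{D}$, and the matrix of $N$ in $\{e_{x}\}$ is again $m$ (since $(Ne_{y})(x)=m(x,y)$). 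Because $Me_{y}=\sum_{x}m(x,y)e_{x}=Ne_{y}$ for every $y$, the operator $M$ coincides with $N$ on $\mathcal{D}$, and therefore $M\mathcal{D}=N\mathcal{D}\subseteq\mathcal{D}$.

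The step requiring the most care is purely bookkeeping: pinning down, consistently with (\ref{EqA.5}) and Definition \ref{DefA1}, over which of the two indices of $m(x,y)$ the finiteness is quantified. The condition $Me_{y}\in\mathcal{D}$ constrains the \emph{column} $\{x\mid m(x,y)\neq0\}$, whereas the literal Definition \ref{DefA1} quantifies over the \emph{row} $\{y\mid m(x,y)\neq0\}$. These coincide after transposition, equivalently under passage between $M$ and $M^{\ast}$, and they are literally identical for the Hermitian graph Laplacians $\Delta_{c}$ motivating this appendix, whose matrices have a symmetric support pattern ($m(x,y)\neq0\Leftrightarrow m(y,x)\neq0$) by symmetry of $c$. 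Once the index convention is fixed, everything else is an immediate unwinding of definitions together with Lemma \ref{LemA.2}, and no analytic estimate beyond those already in the excerpt is needed.
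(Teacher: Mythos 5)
Your proof is correct and is essentially the paper's own argument: the invariance-implies-banded direction is the Parseval expansion of $Me_{y}$ (the same computation the paper records in the proof of Corollary \ref{CorA.4}), and the banded-implies-invariance direction is exactly Lemma \ref{LemA.2}, applied after identifying $M$ on $\mathcal{D}$ with the operator built from its matrix. Your bookkeeping point is also well taken: the finiteness the argument actually needs is that of the columns $\left\{ x\,|\,m\left( x,y\right) \neq0\right\} $ (equivalently, row-finiteness of the transposed/adjoint matrix), which is what Lemma \ref{LemA.2} silently requires for $Mv$ to have finite support, and the paper's Definition \ref{DefA1} glosses over this quantifier exactly as you describe.
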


\begin{corollary}
\label{CorA.4}In that case the matrix entries of $M$ are indexed by $L\times
L$ as follows:%
\begin{equation}
m\left(  x,y\right)  :\,=\langle e_{x},Me_{y}\rangle\text{.} \label{EqA.9}%
\end{equation}

\end{corollary}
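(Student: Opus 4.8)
The plan is to read the matrix of $M$ directly off the orthonormal basis supplied by Corollary \ref{CorA.3} and verify that its entries are exactly the inner products $\langle e_x, M e_y\rangle$. First I would unpack the phrase ``in that case'': by Corollary \ref{CorA.3} we are handed an ONB $\{e_x \mid x \in L\}$ of $\mathcal{H}$ whose finite linear span $\mathcal{D}$ is invariant, $M(\mathcal{D}) \subseteq \mathcal{D}$. In particular each $e_y \in \mathcal{D}$, so $M e_y \in \mathcal{D}$ is a finitely supported vector.

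Next I would set $m(x,y) := \langle e_x, M e_y\rangle$ and expand $M e_y$ against the ONB. Since $\{e_x\}$ is an orthonormal basis and $M e_y \in \ell^2(L)$, Parseval gives $M e_y = \sum_{x \in L} \langle e_x, M e_y\rangle e_x = \sum_{x \in L} m(x,y) e_x$, and because $M e_y \in \mathcal{D}$ only finitely many coefficients $m(x,y)$ are nonzero for each fixed $y$. For an arbitrary $v = \sum_{y} v(y) e_y \in \mathcal{D}$ (a finite combination), linearity then yields $Mv = \sum_y v(y) M e_y = \sum_x \big( \sum_y m(x,y) v(y)\big) e_x$, whence $(Mv)(x) = \langle e_x, Mv\rangle = \sum_y m(x,y) v(y)$. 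This is precisely the action (\ref{EqA.5}) of Lemma \ref{LemA.2} for the function $m$, so the numbers $m(x,y) = \langle e_x, M e_y\rangle$ are the matrix entries of $M$ in this basis, as claimed.

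The only genuinely delicate point is reconciling the two senses of ``finiteness.'' The invariance $M(\mathcal{D}) \subseteq \mathcal{D}$ says each column $y \mapsto (m(x,y))_x$ has finite support, whereas Definition \ref{DefA1} asks that each row $x \mapsto (m(x,y))_y$ be finite. I would close this gap by observing that, via Lemma \ref{LemA.2}, $M$ has an adjoint on $\mathcal{D}$ with entries $\overline{m(y,x)}$; for the Hermitian (more generally, formally symmetric) operators of interest here --- the graph Laplacians of Sections \ref{MainTheo}--\ref{EnergyForm} and the monomials in $P,Q$ of this appendix --- one has $m(x,y)=\overline{m(y,x)}$, so row- and column-finiteness coincide and $m$ is banded in the sense of Definition \ref{DefA1}. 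I expect this bookkeeping, rather than any analytic difficulty, to be the main thing to get right; the remainder is the standard identification of an operator with its matrix against a fixed ONB.
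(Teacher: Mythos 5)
Your proposal is correct and takes essentially the same route as the paper: the paper's own proof likewise expands $Me_y$ against the ONB, invokes Parseval's formula, and reads off the coefficients $\langle e_x, Me_y\rangle$ as the matrix entries. Your extra steps---checking that these coefficients reproduce the action (\ref{EqA.5}) on all of $\mathcal{D}$, and flagging that ONB-invariance gives column-finiteness while Definition \ref{DefA1} asks for row-finiteness (the two coinciding for the Hermitian operators actually considered)---are more careful than the paper's one-line argument but do not constitute a different method.
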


\begin{proof}
Only the conclusion (\ref{EqA.9}) is not contained in the lemma. Now suppose
some operator $M$ in $\mathcal{H}$ satisfies the conditions, and let $\left(
e_{x}\right)  _{x\in L}$ be the associated ONB. Then $Me_{y}\in\mathcal{H}%
\simeq\ell^{2}\left(  L\right)  $, so $Me_{y}=\sum_{x\in L}\langle
e_{x},Me_{y}\rangle_{\mathcal{H}}\,e_{x}$, and
\begin{equation}
\left\Vert Me_{y}\right\Vert _{\mathcal{H}}^{2}=\sum_{x\in L}\left\vert
\langle e_{x},Me_{y}\rangle\right\vert ^{2} \label{EqA.10}%
\end{equation}
holds by Parseval's formula. The conclusion (\ref{EqA.9}) follows.
\end{proof}

\begin{corollary}
\label{CorA.5}Let $G=(G^{\left(  0\right)  },G^{\left(  1\right)  })$ and
\[
c:G^{\left(  1\right)  }\rightarrow\mathbb{R}_{+}%
\]
be a graph system satisfying the axioms in section \ref{Assumptions}. Let
$\{\delta_{x}|x\in G^{\left(  0\right)  }\}$ be the canonical ONB in $\ell
^{2}(G^{\left(  0\right)  })$. Then the graph Laplacian has a corresponding
banded matrix representation as follows:
\begin{equation}
\langle\delta_{x},\Delta_{c}\delta_{y}\rangle=\left\{
\begin{array}
[c]{l}%
-c\left(  xy\right)  \text{ if }y\not =x\text{ and }y\sim x\\
\mathcal{B}_{c}\left(  x\right)  \text{ if }y=x\\
0\text{ if }y\not \sim x\text{ and }y\not =x\text{.}%
\end{array}
\right.  \label{EqA.11}%
\end{equation}

\end{corollary}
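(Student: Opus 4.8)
The plan is to verify both assertions---that $\Delta_c$ is banded and that its entries are given by (\ref{EqA.11})---by direct substitution into the defining formula for the graph Laplacian. First I would invoke Corollary \ref{CorA.3}: since each $\operatorname{nbh}(x)$ is finite, the operator $\Delta_c$ carries the finitely-supported space $\mathcal{D}$ into itself (the same observation used throughout Section \ref{MainTheo}, e.g.\ in Lemma \ref{Lem3.7}), so $\Delta_c$ automatically admits a banded matrix representation with respect to the canonical ONB $\{\delta_x\}$. It then remains only to identify the entries $m(x,y)=\langle\delta_x,\Delta_c\delta_y\rangle$.

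For the computation I would use the split form (\ref{Eq5.29}),
\[
(\Delta_c v)(x)=\mathcal{B}_c(x)\,v(x)-\sum_{z\sim x}c(xz)\,v(z),
\]
applied to $v=\delta_y$. Because $\delta_x$ is real-valued, the $\ell^2$ inner product (\ref{Eq3.3}) collapses to a point evaluation, $\langle\delta_x,\Delta_c\delta_y\rangle=(\Delta_c\delta_y)(x)$, so I need only evaluate
\[
(\Delta_c\delta_y)(x)=\mathcal{B}_c(x)\,\delta_y(x)-\sum_{z\sim x}c(xz)\,\delta_y(z).
\]

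The three cases of (\ref{EqA.11}) then fall out by inspection. If $y=x$, the first term equals $\mathcal{B}_c(x)$ while the sum vanishes, since $z$ ranges over $z\sim x$ and the no-loops assumption $x\not\sim x$ excludes $z=x$; hence the diagonal entry is $\mathcal{B}_c(x)$. If $y\ne x$ but $y\sim x$, the first term vanishes and the sum contributes the single term $z=y$, giving $-c(xy)$. If $y\ne x$ and $y\not\sim x$, both terms vanish, giving $0$. As a consistency check, reading (\ref{EqA.11}) for fixed $x$ shows the nonzero entries occur only for $y\in\{x\}\cup\operatorname{nbh}(x)$, reconfirming bandedness directly from the formula.

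I do not anticipate a genuine obstacle, as the argument is a one-line substitution per case. The only point deserving care is the diagonal entry: the vanishing of the off-diagonal sum there relies precisely on $x\notin\operatorname{nbh}(x)$, which is what prevents any conductance from leaking into the diagonal and makes $\mathcal{B}_c(x)$ the clean diagonal value. The symmetry $c(xy)=c(yx)$ then guarantees $m(x,y)=m(y,x)$, consistent with the Hermitian property established in Lemma \ref{Lem3.2}.
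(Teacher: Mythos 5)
Your proposal is correct and follows essentially the same route as the paper: both proofs substitute $v=\delta_y$ into the defining formula for $\Delta_c$ (your use of the split form (\ref{Eq5.29}) is just the paper's formula (\ref{EqA.13}) with the diagonal term $\mathcal{B}_c(x)$ separated out) and read off the three cases, using $x\notin\operatorname{nbh}(x)$ for the diagonal entry. The extra framing via Corollary \ref{CorA.3} and the closing symmetry remark are harmless additions, not a different argument.
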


\begin{proof}
Recall the function
\begin{equation}
\mathcal{B}_{c}\left(  x\right)  :\,=\sum_{y\sim x}c\left(  xy\right)
\label{EqA.12}%
\end{equation}
on the right-hand side in (\ref{EqA.11}).

Since, for $v\in\mathcal{D}$, we have
\begin{equation}
\left(  \Delta_{c}v\right)  \left(  x\right)  :\,=\sum_{y\sim x}c\left(
xy\right)  \left(  v\left(  x\right)  -v\left(  y\right)  \right)
\text{,}\label{EqA.13}%
\end{equation}
setting $v=\delta_{y}$, we get
\[
\left(  \Delta_{c}\delta_{y}\right)  \left(  x\right)  =\left\{
\begin{array}
[c]{l}%
\mathcal{B}_{c}\left(  x\right)  \text{ if }y=x\\
-c\left(  xy\right)  \text{ if }y\sim x\\
0\text{ if }y\not \sim x
\end{array}
\right.
\]
from which the desired formula (\ref{EqA.11}) follows.\medskip
\end{proof}

Heisenberg introduced $\infty\times\infty$ matrix representations for the
operators of momentum $P$ and position $Q$ in quantum mechanics.

In the simplest case of one degree of freedom, they are as follows:
\[
\frac{1}{2}\left(
\begin{array}
[c]{cccccccccccccc}%
0 & 1 & 0 & 0 & 0 & \cdots & \cdots & \cdots & 0 & 0 & \cdots &  &  & \\
1 & 0 & \sqrt{2} &  &  &  &  &  & \cdots & \cdots & \cdots &  &  & \\
0 & \sqrt{2} & 0 &  &  &  &  &  & \ldots & \cdots &  &  &  & \\
0 & 0 & \sqrt{3} &  &  &  &  &  &  &  &  &  &  & \\
0 & 0 & 0 &  & \sqrt{n-2} & 0 & 0 & \cdots &  &  &  &  &  & \\
\vdots & \vdots & \vdots & \cdots & 0 & \sqrt{n-1} & 0 & \cdots &  &  &  &  &
& \\
&  &  & \cdots & \sqrt{n-1} & 0 & \sqrt{n} & \cdots &  &  &  &  &  & \\
&  &  & \cdots & 0 & \sqrt{n} & 0 & \cdots &  &  &  &  &  & \\
\vdots &  &  & \cdots & 0 & 0 & \sqrt{n+1} & \cdots & \ddots &  &  &  &  & \\
0 &  &  &  &  &  &  & \ddots & \ddots & 0 & 0 &  &  & \\
0 & 0 & \cdots &  &  &  &  & \ddots & 0 & 0 & 0 & \ddots &  & \\
\vdots & \vdots &  &  &  &  &  &  &  &  & \ddots & \ddots &  &
\end{array}
\right)
\]
and%
\[
\frac{1}{2i}\left(
\begin{array}
[c]{ccccccccccc}%
0 & -1 & 0 & \cdots &  &  &  &  & \cdots & \cdots & \cdots\\
1 & 0 & -\sqrt{2} & \cdots &  &  &  &  &  &  & \\
0 & \sqrt{2} & 0 &  & -\sqrt{n-2} & 0 & 0 &  &  &  & \\
\vdots & \vdots & \vdots &  & 0 & -\sqrt{n-1} & 0 &  &  &  & \\
&  &  &  & \sqrt{n-1} & 0 & -\sqrt{n} &  &  &  & \\
&  &  &  & 0 & \sqrt{n} & 0 &  &  &  & \\
\cdots &  &  &  & 0 & 0 & \sqrt{n+1} &  &  &  & \\
\cdots &  &  &  &  &  &  & \ddots & \vdots & \vdots & \vdots\\
\cdots &  &  &  &  &  &  & \cdots & \cdots & \cdots & \cdots
\end{array}
\right)  .
\]

Set $\mathbb{N}_{0}:\,=\{0,1,2,\ldots\}=\mathbb{Z}_{+}\cup\{0\}$, and
$\mathcal{H}:\,=\ell^{2}\left(  \mathbb{N}_{0}\right)  $. Then the two
matrices $P$ and $Q$ are represented by the following second order difference
operators, having the same form as our graph Laplacians (\ref{EqA.13}).
\begin{equation}
\left(  Pv\right)  \left(  n\right)  =\frac{1}{2}\left(  \sqrt{n-1}~v\left(
n-1\right)  +\sqrt{n}~v\left(  n+1\right)  \right)  \text{;} \label{EqA.14}%
\end{equation}
and
\begin{equation}
\left(  Qv\right)  \left(  n\right)  =\frac{1}{2i}\left(  \sqrt{n-1}~v\left(
n-1\right)  -\sqrt{n}~v\left(  n+1\right)  \right)  \text{,} \label{EqA.15}%
\end{equation}
for $\forall v\in\mathcal{D}$, $\forall n\in\mathbb{N}_{0}$; where
$i=\sqrt{-1}$.

It is well known that both $P$ and $Q$, as in (\ref{EqA.14}) and
(\ref{EqA.15}), are essentially selfadjoint.

It follows by the above lemma that
\begin{equation}
M:\,=QPQ \label{EqA.16}%
\end{equation}
is also a banded operator., referring to the canonical ONB $\left\{
e_{n}|n\in\mathbb{N}_{0}\right\}  $ in $\ell^{2}\left(  \mathbb{N}_{0}\right)
$.

Caution: All the operators $P,Q,$ and $M$ are unbounded, but densely defined;
see \cite{Jor77}, \cite{Jor78}, \cite{Sto51}.

\begin{proposition}
\label{PropA.6}The operator $M$ in \emph{(}\ref{EqA.16}\emph{)}\thinspace is
Hermitian, and has deficiency indices $(1,1)$; in particular is \emph{not}
essentially selfadjoint. In fact, it has many selfadjoint extensions; a
one-parameter family indexed by $\mathbb{T}$.
\end{proposition}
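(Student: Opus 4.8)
The plan is to run the von Neumann deficiency-index machinery of Definition \ref{Def4.1} directly on $M=QPQ$, exploiting three features in turn: the symmetry of $P$ and $Q$, a conjugation that forces the two indices to coincide, and a unitary transfer to a singular Sturm--Liouville operator on the line where the defect can be read off by Weyl's limit-point/limit-circle analysis.

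First I would record that $M$ is Hermitian on $\mathcal{D}$. Each of $P,Q$ in (\ref{EqA.14})--(\ref{EqA.15}) is banded and maps $\mathcal{D}$ into $\mathcal{D}$, so the triple product $QPQ$ is defined on $\mathcal{D}$, and for $u,v\in\mathcal{D}$ one moves the factors across the inner product using the symmetry of $P$ and $Q$,
\[
\langle QPQu,v\rangle=\langle PQu,Qv\rangle=\langle Qu,PQv\rangle=\langle u,QPQv\rangle,
\]
so $M\subseteq M^{*}$, while bandedness and Lemma \ref{LemA.2} already give $\mathcal{D}\subseteq\operatorname{dom}(M^{*})$. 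Next, observe that in the canonical ONB the matrix of $P$ is real while that of $Q$ is purely imaginary; hence every entry $M_{jk}=\sum_{l,m}Q_{jl}P_{lm}Q_{mk}$ is real, and ordinary complex conjugation $Jv=\bar v$ is a period-$2$ conjugation with $JMv=MJv$ for $v\in\mathcal{D}$. By criterion (B) of Section \ref{OpTheory} this yields $n_{+}=n_{-}$, so it remains only to show that the common index equals $1$.

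To evaluate the index I would pass to the Schrödinger picture. The Hermite/oscillator intertwiner $W:\ell^{2}(\mathbb{N}_{0})\to L^{2}(\mathbb{R})$ diagonalizes $P$ as multiplication by $x/\sqrt{2}$ and carries $Q$ to a constant multiple of $d/dx$, so that
\[
WMW^{*}=\frac{1}{2\sqrt{2}}\Bigl(-\frac{d}{dx}\bigl(x\tfrac{d}{dx}\bigr)\Bigr).
\]
This is a Sturm--Liouville operator whose leading coefficient $x$ vanishes (and changes sign) at the interior point $x=0$, which is exactly the source of the defect. I would then solve the defect equation $M^{*}v=iv$, which transforms into $-(xv')'=\lambda v$ with $\lambda$ a fixed non-real number, in $L^{2}(\mathbb{R})$. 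At each of $\pm\infty$ the equation is limit-point: a WKB/Bessel analysis shows that of the two solutions exactly one is square-integrable at that end (the decaying Hankel branch). Near $x=0$ the two indicial solutions behave like $1$ and $\ln|x|$, both locally square-integrable, so integrability imposes no condition there; the only matching requirement is that the flux $xv'$ be continuous across $0$ (a jump of $v$ itself contributes a term $x\delta_{0}=0$ and is harmless). Writing the global solution as $c_{+}v_{+}$ on $(0,\infty)$ and $c_{-}v_{-}$ on $(-\infty,0)$, continuity of the flux reads $B_{+}c_{+}=B_{-}c_{-}$, where $B_{\pm}$ are the logarithmic coefficients of the two decaying branches; this single relation cuts the two-parameter family down to dimension $1$, giving $n_{+}=1$.

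The main obstacle is precisely this last matching step: I must verify that $B_{+}c_{+}=B_{-}c_{-}$ is one genuinely nondegenerate condition, i.e. that the solutions decaying at $\pm\infty$ really carry a nonzero logarithmic term at $0$, so that the defect space is one-dimensional rather than two-dimensional. This reduces to the connection formula for the relevant Bessel/Hankel function, whose expansion at the origin does contain a $\ln$-term with nonzero coefficient; the reflection $x\mapsto-x$, which anticommutes with $-(xv')'$ and interchanges the two defect spaces, keeps the two ends symmetric and confirms consistency. Once $n_{+}=n_{-}=1$ is in hand, von Neumann's theorem (Definition \ref{Def4.1}) shows that $M$ is not essentially selfadjoint, and its selfadjoint extensions correspond bijectively to the unitaries $\mathcal{D}_{+}\to\mathcal{D}_{-}$; since both spaces are one-dimensional these form the circle group $\mathbb{T}$, which is the asserted one-parameter family.
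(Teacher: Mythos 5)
Your proposal is correct, and it reaches the index count by a genuinely different computational route than the paper, even though the skeleton (symmetry on $\mathcal{D}$, a conjugation forcing $n_{+}=n_{-}$, Stone--von Neumann transfer to $L^{2}(\mathbb{R})$) is the same; the difference is which factor you realize as multiplication. The paper works in the Schr\"odinger picture (\ref{EqA.17})--(\ref{EqA.18}), where $P$ is the derivative and $Q$ is multiplication by $x$, so that $M=QPQ$ is a \emph{first-order} differential operator and the defect equation (\ref{EqA.19}) integrates in closed form: the solutions $x^{-1}e^{\mp 1/x}$, extended by $0$ across the origin, are square integrable on exactly one half-line each, giving $(1,1)$ with explicit, half-line-supported defect vectors. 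You instead diagonalize $P$ as multiplication, so $M$ becomes the \emph{second-order} singular Sturm--Liouville operator $-\tfrac{1}{2\sqrt{2}}\frac{d}{dx}\bigl(x\frac{d}{dx}\bigr)$ (your constants are right), and the count comes from Weyl theory: limit point at $\pm\infty$, limit circle at the interior singularity $x=0$ (indicial solutions $\sim 1$ and $\ln\lvert x\rvert$), and one interface condition. Your interface condition is indeed the correct description of $\operatorname{dom}(M^{\ast})$: if $v\sim A_{\pm}+B_{\pm}\ln\lvert x\rvert$ near $0^{\pm}$, a direct distributional computation gives $-(xv')'=(B_{-}-B_{+})\delta_{0}+\{\text{locally integrable part}\}$, so membership in the adjoint's domain is exactly $B_{+}=B_{-}$, and a jump in $v$ itself is annihilated by the factor $x$, as you say. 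For the nondegeneracy $B_{\pm}\neq 0$ you can even avoid connection formulas: the unique log-free solution on $(0,\infty)$ is $J_{0}(2\sqrt{\lambda x})$, entire in $x$, and for nonreal $\lambda$ it grows exponentially as $x\to+\infty$, so the decaying branch must carry a nonzero $Y_{0}$ (logarithmic) component. Comparing the two: the paper's route is far more elementary --- one first-order integration, no special functions --- and produces the defect vectors explicitly; yours needs Weyl theory and Bessel asymptotics, but it explains structurally why the index is $1$ rather than $2$ (two half-line problems, each limit-circle at $0$, coupled by a single flux condition) and it is the method that survives for operators that are not first order in any representation. The two pictures are exchanged by the Fourier transform, which also explains why the paper's defect vector can be supported on a half-line while yours cannot (Paley--Wiener).
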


\begin{proof}
By the Stone-von Neumann uniqueness theorem, the two operators $P$ and $Q$ in
(\ref{EqA.14}) and (\ref{EqA.15}) are unitarily equivalent to the following
pair in the Hilbert space $L^{2}\left(  \mathbb{R}\right)  $ of all
square-integrable functions on the red line:
\begin{equation}
\left(  Pf\right)  \left(  x\right)  =\frac{1}{i}\frac{d}{dx}f\left(
x\right)  \text{,} \label{EqA.17}%
\end{equation}
and
\begin{equation}
\left(  Qf\right)  \left(  x\right)  =xf\left(  x\right)  \text{, for }\forall
f\in L^{2}\left(  \mathbb{R}\right)  ,~x\in\mathbb{R}\text{.} \label{EqA.18}%
\end{equation}
For domain $\mathcal{D}$ in (\ref{EqA.17}) and (\ref{EqA.18}), we may take
$\mathcal{D}:\,=C_{c}^{\infty}\left(  \mathbb{R}\right)  $, or the span of the
Hermite functions.

From the representations (\ref{EqA.14})-(\ref{EqA.15}), it follows that the
operator $M:\,=QPQ$ in (\ref{EqA.16}) commutes with a conjugation in the
Hilbert space; and so by von Neumann's theorem (see Remark \ref{Rem4.2}), it
has deficiency indices $\left(  n,n\right)  $. We will show that $n=1$. Hence
we must show that each of the equations $M^{\ast}v_{\pm}=\pm i\,v_{\pm}$ has a
one-dimensional solution space in $\mathcal{H}$.

Taking advantage of Schr\"{o}dinger's representation (\ref{EqA.17}%
)-(\ref{EqA.18}), we arrive at the corresponding pair of ODEs in $L^{2}\left(
\mathbb{R}\right)  $:
\begin{equation}
x\frac{d}{dx}\left(  xf\right)  =\pm f\left(  x\right)  \text{.}\label{EqA.19}%
\end{equation}
By symmetry, we need only to treat the first one.

A direct integration shows that
\begin{equation}
f\left(  x\right)  =\left\{
\begin{array}
[c]{ll}%
\frac{\exp\left(  \frac{-1}{x}\right)  }{x} & \text{if }x>0\\
0 & \text{if }x\leq0
\end{array}
\right.  \label{EqA.20}%
\end{equation}
solves (\ref{EqA.19}) in the case of \textquotedblleft+\textquotedblright\ on
the right hand side. Also note that (\ref{EqA.20}) is meaningful as all the
derivatives of $x^{-1}\exp(-\frac{1}{x})$ for $x\in\mathbb{R}_{+}$ tend to $0$
when $x\rightarrow0_{+}$. This means that the two separate expressions on the
right-hand side in (\ref{EqA.20}) \textquotedblleft patch\textquotedblright%
\ together differently at $x=0$.

By the reasoning alone, we conclude that $M$ has indices $\left(  1,1\right)
$. As a result of von Neumann's extension theory, the distinct selfadjoint
extensions of $M$ are then indexed by $\mathbb{T}=\{z\in\mathbb{C}|\left\vert
z\right\vert =1\}$. If $z\in\mathbb{T}$, and if $f_{\pm}$ are normalized
solutions to (\ref{EqA.19}), then the extension $M_{z}$ is determined by
\[
M_{z}\left(  f_{+}+zf_{-}\right)  =i\left(  f_{+}-zf_{-}\right)  \text{.}%
\]

\hfill
\end{proof}

\begin{example}
\label{ExampA.7}Let $P$ and $Q$ be the canonical momentum and position
operators; see \emph{(}\ref{EqA.14}\emph{)}-\emph{(}\ref{EqA.15}\emph{)}, and
let
\begin{equation}
H:\,=P^{2}-Q^{4} \label{EqA.21}%
\end{equation}
be the Hamiltonian of a \textquotedblleft particle-wave\textquotedblright\ in
one degree of freedom, corresponding to a repulsive $x^{4}$ potential. Then
the reasoning from above shows that $H$ is a banded $\infty\times\infty$
matrix. As an \emph{operator} in $\ell^{2}\left(  \mathbb{Z}\right)  $, $H$
has deficiency indices $\left(  2,2\right)  $.
\end{example}

\bibliographystyle{alpha}
\bibliography{Jorgen}

\end{document}